\documentclass[11pt]{article}

\usepackage{fullpage}

\usepackage{algorithm}
\usepackage{algpseudocode}
\algtext*{EndWhile}{}
\algtext*{EndFor}{}
\algtext*{EndIf}{}

\usepackage{amssymb, amsmath, amsthm}
\usepackage{thm-restate}

\makeatletter
\@ifundefined{newcounteralias}{}{%
  \renewcommand{\thmt@autorefsetup}{%
    \@xa\def\csname\thmt@envname autorefname\@xa\endcsname
      \@xa{\thmt@thmname}%
  }%
}
\makeatother

\usepackage[colorlinks]{hyperref}
\usepackage[colorinlistoftodos, prependcaption, textsize=tiny]{todonotes}

\usepackage{xcolor}
\usepackage{graphicx}
\graphicspath{{figs/}}

\allowdisplaybreaks

\usepackage{enumitem}
\usepackage{multicol}
\usepackage{makecell}
\usepackage[table]{xcolor}

\usepackage{multirow}
\usepackage{hhline}

\declaretheorem[name=Theorem,numberwithin=section]{theorem}

\newtheorem{lemma}[theorem]{Lemma}

\newcommand{\av}{\mathrm{av}}

\newcommand{\R}{\mathbb{R}}

\newcommand{\calG}{{\mathcal{G}}}
\newcommand{\calI}{{\mathcal{I}}}
\newcommand{\calJ}{{\mathcal{J}}}

\newcommand{\ceil}[1]{\left\lceil #1\right\rceil}
\newcommand{\floor}[1]{\left\lfloor #1\right\rfloor}
\newcommand{\poly}{\mathrm{poly}}
\newcommand{\opt}{\mathrm{opt}}

\DeclareMathOperator{\union}{\bigcup}

\newcommand{\children}{\mathrm{children}}

\mathchardef\mhyphen="2D

\title{Simpler Algorithms for Capacitated $k$-Median with $(1 + \epsilon)$-Violations Using New Inequalities}

\date{}
\author{Han Dai \footnote{The work of HD and SL is supported by State Key Laboratory for Novel Software Technology, and New Cornerstone Science Foundation.} \\School of Computer Science,\\ Nanjing University, \\ Nanjing, Jiangsu Province, China. \\ \href{mailto:handai@smail.nju.edu.cn}{handai@smail.nju.edu.cn}
 \and
Shi Li \footnotemark[\value{footnote}] \\
School of Computer Science,\\ Nanjing University, \\ Nanjing, Jiangsu Province, China. \\ \href{mailto:shili@nju.edu.cn}{shili@nju.edu.cn}
}

\begin{document}
     \maketitle

\begin{abstract} We revisit the capacitated $k$-median problem, focusing on constant-factor approximation algorithms with either $(1+\epsilon)$-cardinality or capacity violations. Previous results \cite{DemirciL16, Li16} achieve an $O_\epsilon(1)$-approximation with $(1+\epsilon)$-violations via complex extended configuration linear programming formulations with the round-or-cut framework.

   We propose a simpler, unified round-or-cut framework that handles both cardinality and capacity violation settings.  Our main technical contribution is a mechanism that directly utilizes the costs of local instances to generate valid inequalities. Specifically, we partition the metric space into local pieces of size $O(1/\epsilon)$ as in previous work, and construct instances of capacitated clustering with penalties, which allow client demands to be unserved at a penalty cost. Solving these local instances yields lower bounds on the original LP cost, yielding valid LP constraints.  In contrast, previous approaches rely on complex extended configuration LPs over connection variables $x_{ij}$'s and opening variables $y_i$'s. 
   
   As a secondary contribution, our approach achieves an improved running time of $\poly(n, 1/\epsilon)$, bypassing the exponential dependence on $1/\epsilon$ required to solve the previous extended configuration LPs. Instead of solving the local instances optimally in time exponential in $1/\epsilon$, we obtain $O(1/\epsilon)$-approximate solutions in $\poly(n, 1/\epsilon)$ time.  Finally, for the cardinality-violation setting, our algorithm opens a set of at most $(1+\epsilon)k$ \emph{distinct} facilities, and thus gives a bicriteria approximation in the \emph{hard-capacitated} setting. In contrast, the algorithm of \cite{Li16} may open the same facility multiple times.
\end{abstract}

\section{Introduction.}
In the capacitated $k$-median (CKM) problem,  we are given a facility set $F$, where every facility $i $ is associated with a capacity $u_i\in \mathbb{Z}_{\geq 0}$, a client set $C$, an integer $k\in 
\mathbb{Z}_{\geq 0}$, and a metric $d$ over $F\cup C$.  The goal is to select a subset $S\subseteq F$ of at most $k$ open facilities and a connection assignment $\sigma: C\to S$  such that $|\sigma^{-1}(i)|\leq u_i$ for every facility $i \in S$, and so as to minimize the connection cost $\sum_{j\in C}d(j, \sigma(j))$.

When $u_i = \infty$ for all $i \in F$, the CKM problem reduces to the classic (uncapacitated) $k$-median problem. Following the first constant-factor approximation algorithm by Charikar, Guha, Tardos and Shmoys~\cite{Charikar2002}, a long line of work has progressively improved the approximation ratio~\cite{guha1999greedy, hochbaum1986unified, siamcomp/LiS16, ByrkaPRST15, ByrkaPRST17, GowdaPST23, CharikarCGG98, AryaGKMMP04, Cohen-AddadGHOS22, KanungoMNPSW04, KorupoluPR00}, including recent $2+\epsilon$~\cite{cohen20252+, ByrkaLi2026}. On the negative side, it is known to be NP-hard to approximate the problem within a factor better than $1+2/e$~\cite{JainMMSV03,JainMS02}.

In contrast, obtaining a true constant-factor approximation for the capacitated $k$-median (CKM) problem remains a notoriously hard problem in approximation algorithms. The current best approximation ratio is $O(\log k)$ \cite{Adamczyk2019}, achieved via the metric tree-embedding framework of Fakcharoenphol, Rao, and Talwar~\cite{DBLP:conf/stoc/FakcharoenpholRT03, DBLP:journals/jcss/FakcharoenpholRT04}. Consequently, most prior work focuses on bicriteria approximation algorithms, which allow constant-factor violations on either the number of open facilities (the cardinality constraint) or the facility capacity constraints.

For the uniform capacitated $k$-median problem—where all facilities have the same capacity—Charikar, Guha, Tardos and Shmoys~\cite{CharikarGTS99} gave a $16$-approximation with a capacity violation factor of $3$. Later, Li~\cite{Li14} presented an $O(1/\epsilon)$-approximation with a $(2 + \epsilon)$-capacity violation. For the general capacity setting, Chuzhoy and Rabani~\cite{ChuzhoyR05} obtained a $40$-approximation with a capacity violation factor of $50$. Byrka, Fleszar,  Rybicki, and Spoerhase~\cite{ByrkaFRS15} subsequently achieved a constant-factor approximation with a $(3+\epsilon)$-capacity violation. On the cardinality-violation front,  Aardal,  Berg,  Gijswijt  and Li~\cite{AardalBGL15} designed a $(7+\epsilon)$-approximation algorithm that opens at most $2k$ facilities (a factor-$2$ cardinality violation). 

Most bicriteria approximations for CKM rely on linear programming (LP) relaxations. However, the natural LP relaxation has an unbounded integrality gap, even if one permits up to a $(2-\epsilon)$-violation on either the cardinality or capacity constraint. Thus, to achieve an $O(1)$-approximation via standard LP formulations, a violation factor of at least $2$ is necessary—a barrier that holds even for uniform capacities.

Li~\cite{li2017uniform} was the first to break this factor-$2$ barrier, proposing an $\exp(O(1/\epsilon^2))$-approximation algorithm for uniform CKM with a $(1+\epsilon)$-cardinality violation. To overcome the integrality gap, Li introduced an exponential family of \emph{rectangle inequalities}. Because separating these inequalities efficiently is non-trivial, Li employed a round-or-cut framework: given a fractional solution that only satisfies the basic constraints, the algorithm either rounds it to an integer solution with cost at most $\alpha$ times the value of the LP solution (round), or identifies a violated rectangle inequality (cut). Coupled with the Ellipsoid method, this yields a polynomial-time $\alpha$-approximation. Li~\cite{Li16} subsequently extended this framework to general capacities, achieving an $O(\frac{1}{\epsilon^2} \log \frac{1}{\epsilon})$-approximation under a $(1+\epsilon)$-cardinality violation—providing an exponential improvement in the approximation ratio's dependence on $\epsilon$. Since rectangle constraints fail in the general capacity setting, the approach in~\cite{Li16} instead constructed exponentially many extended configuration LP relaxations with auxiliary variables.

Similarly, in the capacity-violation setting, Byrka, Rybicki, and Uniyal~\cite{ByrkaRU16} developed an $O(1/\epsilon^2)$-approximation algorithm for uniform CKM with a $(1+\epsilon)$-capacity violation. Demirci and Li~\cite{DemirciL16} generalized this result to arbitrary capacities, achieving an $O(1/\epsilon^5)$-approximation with a $(1+\epsilon)$-capacity violation using extended configuration LPs in~\cite{Li16}. Together, these results establish optimal $(1+\epsilon)$-violation factors in both settings, leaving the existence of a true $O(1)$-approximation as the primary remaining open problem.

Regarding the running time, the uniform-capacity algorithm of \cite{li2017uniform} runs in $\poly(n, 1/\epsilon)$ time, whereas the uniform-capacity algorithm of \cite{ByrkaRU16} and the general-capacity algorithms of \cite{Li16} and~\cite{DemirciL16} run in $n^{\poly(1/\epsilon)}$ time.
Furthermore, there is a distinction between the \emph{soft-capacitated} setting (where multiple copies of a facility may be opened, each counting toward $k$) and the \emph{hard-capacitated} setting (where each facility can be opened at most once). Notably, the state-of-the-art results for the cardinality violation setting~\cite{Li16} apply only to the soft-capacitated variant.

\subsection{Our Results}

We present a simplified and unified bicriteria approximation framework for the capacitated $k$-median problem with hard capacities, addressing the $(1+\epsilon)$-cardinality and $(1+\epsilon)$-capacity violation settings. Throughout, $n$ denotes the total number of facilities and clients, and $\opt$ denotes the optimum cost of the CKM instance, without any violation. 

\begin{theorem}
    \label{thm:main-card}
    For any constant $\epsilon > 0$, there exists a $\poly(n, 1/\epsilon)$-time algorithm that, given a  hard capacitated $k$-median instance, outputs a solution opening at most $(1+\epsilon)k$ facilities while satisfying all capacity constraints, with cost at most $O(1/\epsilon^5) \cdot \opt$.
\end{theorem}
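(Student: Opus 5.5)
We will prove Theorem~\ref{thm:main-card} in the round-or-cut framework, run with the ellipsoid method on the basic LP relaxation: opening variables $y_i\in[0,1]$, connection variables $x_{ij}$, the constraints $x_{ij}\le y_i$, $\sum_j x_{ij}\le u_i y_i$, $\sum_i x_{ij}=1$ and $\sum_i y_i\le k$. We guess $\opt$ up to a factor $1+\epsilon$ and add the constraint $\sum_{i,j} d(i,j)x_{ij}\le \opt$.

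Given a candidate solution $(x,y)$, the algorithm does two things. First it partitions the facilities into local groups, as in \cite{Li16}. To do this it moves client demand to representatives, then builds a forest on the representatives, and cuts the forest into groups $J$ that each carry fractional opening $y(J)=\Theta(1/\epsilon)$ and have small internal diameter relative to the connection cost charged to them. Second, for each group $J$ it forms a local \emph{capacitated clustering with penalties} instance. In it, the demand moved into $J$ must be served by facilities of $J$ or else pay a penalty. The penalty is the distance to the nearest representative outside $J$, i.e.\ the cost of routing that demand out. The instance allows $\lfloor y(J)\rfloor$ or $\lceil y(J)\rceil$ open facilities.

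The key valid inequality has the form $\sum_J \mathrm{cost}_J(\text{local LP part}) \ge \frac{1}{\alpha}\sum_J \opt_J$. We state it as a linear constraint that any integral solution satisfies. The argument is that restricting an integral solution to the neighbourhood of $J$ gives a feasible local solution: clients served outside $J$ pay at least the penalty, up to distances we can charge. We only need an $O(1/\epsilon)$-approximate local solver, running in $\poly(n,1/\epsilon)$ time. We would obtain it by LP rounding of the local instance. The cardinality there is $\Theta(1/\epsilon)$, so we can afford an additive $+1$ opening per group, which is an $\epsilon$ fraction of $y(J)$.

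This yields the round-or-cut dichotomy. If the fractional solution's cost on some group is smaller than the approximate local cost divided by $O(1/\epsilon)$, we output the corresponding violated inequality as a cut. Otherwise we combine the approximate local solutions. Each group opens at most $y(J)+1\le(1+O(\epsilon))y(J)$ distinct facilities, and these are distinct because the groups are disjoint. Capacities are respected, which gives the hard-capacitated guarantee. Unserved penalty demand is rerouted to neighbouring groups along the forest. A standard charging over the tree structure bounds this rerouting by $O(1/\epsilon^{2})$ times the LP cost, together with the demand-moving losses.

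The main obstacle is proving that the local penalty-based inequalities are valid for every integral solution while remaining tight enough. In particular, rerouting penalized demand must not overload capacities in the receiving groups. For that step we would first design the penalty so that served demand in each group is at most what the LP routes there. Then the excess stays capacity-feasible via the fractional $x$. The loss factors would then multiply to $O(1/\epsilon^5)$: $O(1/\epsilon)$ from the local approximation, $O(1/\epsilon^2)$ from the group diameters, and $O(1/\epsilon^2)$ from rerouting and cardinality rounding.
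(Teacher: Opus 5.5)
Three steps of your plan do not go through as written.

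\textbf{The cut and the local solver.} The cut, as you state it, is not valid. The groups and the budget $t$ are computed from the fractional solution. An integral solution, even the optimum, may open more than $t$ facilities in $U(J)$. Its restriction is then infeasible for the local instance, and its local cost can be far below $\opt_J$. The paper's cut is per black component and carries a facility term:
\[
\frac{\Psi_J(t)}{t+1}\, y(U(J)) + (2|J|-1)\sum_{v\in J,\, j\in C} g_{j,v}\bigl(\phi_j+d(v,j)\bigr) \;\ge\; \Psi_J(t).
\]
It holds trivially when $y(U(J))\ge t+1$. Its strength at $y^*$ is governed by $(t+1)/(t+1-y^*(U(J)))$. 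That is why the paper takes $t_J=\lfloor(1+1/\ell)\,y^*(U(J))\rfloor$, which makes this factor at most $\ell+1$. Your option $t=\lfloor y(J)\rfloor$ gives nothing when $y(J)$ is just below an integer. (The penalty must also be half the distance to $R\setminus J$: a facility in another Voronoi cell is only guaranteed to be at distance $\ge d(v,R\setminus J)/2$.)

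Your local solver also does not exist as described. The natural LP of the penalty instance has unbounded gap even with any constant number $c$ of extra openings. Put $m$ far-apart copies of the gadget (two facilities of capacity $u$, $u+1$ co-located clients) in one instance. The LP opens $1+1/u$ per copy at cost $0$. With $\lceil m(1+1/u)\rceil+c$ integral facilities, at least $m-1-c$ copies must penalize or ship a unit far away.

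The paper defines local instances on black components precisely because every internal MST edge of $J$ has length at most $L_J$. This gives an ultrametric of distortion $2|J|-1$ in which (D1)--(D2) hold, and the instance is then solved exactly by dynamic programming. Groups of weight $\Theta(1/\epsilon)$ lack this property. Their internal gray edges can be much longer than their distance to the outside, so the (D2) reduction that makes the DP exact does not carry over.

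\textbf{The combination step.} This is the most serious gap, and you flag it as the main obstacle without resolving it. A penalty solution either serves all its demand or saturates every open facility. So gluing local solutions with $\le y(J)+1$ facilities each can leave total open capacity below total demand. The $o_J$ penalized units then have nowhere to go in neighbouring groups. ``The excess stays capacity-feasible via the fractional $x$'' is not a mechanism, since that capacity sits on facilities that are only fractionally open.

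The missing idea is to buy the extra capacity from $y^*$ itself, per component:
\begin{itemize}
\item If $o_J\le x^*(U(J)\setminus S_J,C)/\ell$, set $\bar y=\mathbf{1}_{S_J}+\frac{o_J}{x^*(U(J)\setminus S_J,C)}\,y^*|_{U(J)\setminus S_J}$.
\item Otherwise, round up only $S'_J=\{i\in S_J: y^*_i\ge 1-1/\ell\}$ and keep $y^*$ elsewhere. Saturation of $S_J$ then gives $x^*(U(J)\setminus S'_J,C)\le \ell^2 o_J$.
\end{itemize}
Either way $\bar y(U(J))\le(1+2/\ell)\,y^*(U(J))$.

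Then, for each group of weight $\ge \ell$, the paper solves an auxiliary LP with only two non-box constraints: total opening, and total effective capacity $u'_i\tilde y_i$ outside the $S'_J$'s. A vertex solution has at most two fractional coordinates, so opening its support costs $+2$ per group. This gives $(1+4/\ell)k$ distinct facilities in total. Penalized demand is routed through the group root at $O(\ell)L_J$ per unit.

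This is also where $\ell^5$ comes from. It is $\ell^2$ from $\sum_J\Psi_J\le O(\ell^2)\sum_j\phi^*_j$, times $\ell^2$ from the type-(2) bound, times $\ell$ from the distance to the group root, not the split you guess.
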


Our second main result addresses the capacity-violation setting.

\begin{theorem}
    \label{thm:main-cap}
    For any constant $\epsilon > 0$, there exists a $\poly(n, 1/\epsilon)$-time algorithm that, given a  hard capacitated $k$-median instance, outputs a solution opening at most $k$ facilities while connecting at most $\lceil(1+\epsilon)u_i\rceil$ clients to each open facility $i$, with cost at most $O(1/\epsilon^6) \cdot \opt$.
\end{theorem}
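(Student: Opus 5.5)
We will prove the theorem with the same round-or-cut skeleton as Theorem~\ref{thm:main-card}, changing only the local analysis so that the error is paid in capacity rather than in cardinality. We start from the natural LP with connection variables $x_{ij}$ and opening variables $y_i\in[0,1]$. These satisfy $\sum_i y_i\le k$, $x_{ij}\le y_i$, $\sum_j x_{ij}\le u_i y_i$ and $\sum_i x_{ij}=1$, and the LP minimizes $\sum_{i,j}d(i,j)x_{ij}$. We guess $\opt$ up to a factor $2$ and run the ellipsoid method on the feasibility region $\{\text{basic LP}, \text{cost}\le\opt\}$. Each iterate $(x,y)$ is handed to a rounding procedure, which either outputs an integral solution of cost $O(1/\epsilon^6)\opt$ or returns a valid inequality violated by $(x,y)$.

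\textbf{Step 1: partition.} We take a clustering in the style of Charikar et al. This gives representatives whose balls are disjoint, with each client moved to a representative at cost $O(1)\cdot$LP. Following the earlier work, we group the representatives, via a minimum-spanning-tree / forest decomposition of the representatives, into local pieces $\calJ$. Each piece has $O(1/\epsilon)$ representatives, fractional opening $y(\calJ)$ that is either $\ge 1/\epsilon$ or small, and total demand $D_\calJ$ that can be routed inside or near the piece. Moving demand between adjacent pieces along tree edges costs $O(1/\epsilon)$ times the LP cost, charged via the standard "long edge is paid by the demand that crosses it" argument.

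\textbf{Step 2: local instances with penalties.} For each piece $\calJ$ we form a capacitated clustering instance with penalties. The facilities are those of $\calJ$, and there is a budget on the number opened, namely $\lceil y(\calJ)\rceil$ or $\lfloor y(\calJ)\rfloor$ as dictated by the partition. Clients are the demands at the representatives, and leaving a unit unserved costs a penalty equal to the distance to the nearest representative outside the piece. The LP restricted to the piece induces a fractional solution to this instance, so the LP cost is at least $\sum_\calJ \mathrm{cost}_\calJ$ up to constants. We solve each local instance $O(1/\epsilon)$-approximately in $\poly(n,1/\epsilon)$ time, allowing $(1+\epsilon)$-capacity violation: there are only $O(1/\epsilon)$ representatives, and demand can be rounded to multiples of $\epsilon u$. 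If the solution value exceeds $O(1/\epsilon)$ times the LP's local contribution summed over pieces, the inequality "LP cost $\ge \sum_\calJ \mathrm{val}_\calJ/O(1/\epsilon)$" is valid for every integral solution, since any integral solution restricts to feasible local solutions. That inequality is violated by $(x,y)$ and is returned as the cut.

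\textbf{Step 3: combine.} Otherwise we glue the local solutions. Unserved demand in a piece is sent, at its penalty cost, to neighboring pieces, which must absorb it within their $(1+\epsilon)$ slack. The cardinality must be at most $k$ exactly, so the rounding $\lceil\cdot\rceil$ versus $\lfloor\cdot\rfloor$ has to be balanced along the tree: a piece that rounds its opening budget down shifts its deficit to a parent that rounds up. This is done with a dependent-rounding / pipage step over the forest, so the total is $\le\sum_i y_i\le k$. The overall cost is $O(1)$ for the clustering, times $O(1/\epsilon)$ for the local approximation, times $O(1/\epsilon^{4})$ for the inter-piece transport and slack-absorption arguments (as in Demirci--Li), giving $O(1/\epsilon^6)$.

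The main obstacle is Step 3: guaranteeing that penalized, unserved demand can always be absorbed by the $\epsilon u_i$ slack of facilities opened in nearby pieces without opening extra facilities. We would first attempt to handle this by requiring pieces to have fractional opening at least $1/\epsilon$ whenever they export demand, so that the opened capacity carries $\epsilon$-slack at least the exported amount. We would then charge the transport distance to the penalty term of the local instance.
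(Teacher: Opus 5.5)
Your proposal has a genuine gap. It begins with the cut in Step~2, which is not valid.

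\textbf{The cut in Step~2.} You argue that any integral solution restricts to a feasible solution of every local instance. But each local instance carries a cardinality budget, $\lfloor y(\calJ)\rfloor$ or $\lceil y(\calJ)\rceil$, read off from whatever point the ellipsoid method queries. An integral solution may open more facilities than that budget in one piece and fewer elsewhere. Its restriction is then infeasible for the local instance, and its local cost can be far below $\mathrm{val}_\calJ$.

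For example, take a piece with $y(\calJ)=1.5$, $1.5u$ co-located clients and two facilities of capacity $u$. If it receives budget $1$ (some pieces must get floors for the total to stay at most $k$), its local value is $\Omega(uL_\calJ)$ even with $(1+\epsilon)$-violation. An integral solution can open both facilities there at zero local cost. It pays only $\delta$ times a distance for closing a facility in a far piece that carries excess fractional opening but only $\delta$ demand. For small $\delta$ your inequality cuts off this integral solution, and hence also $\opt$, so the ellipsoid can reject a correct guess.

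The same example shows two further problems. The basic LP cannot pay for the inter-piece transport of Steps~1 and~3. And the LP restricted to a piece only lower-bounds the \emph{fractional} local cost, whose gap to the integral local value is exactly what a cut must capture.

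The paper's inequality is per component $J$ and per $t$, and has a facility-count term:
\[ \frac{\Psi_J(t)}{t+1}\,y(U(J))+(2|J|-1)\sum_{v\in J,\,j\in C}g_{j,v}\bigl(\phi_j+d(v,j)\bigr)\ \ge\ \Psi_J(t). \]
It holds trivially when $y(U(J))\ge t+1$, and by your restriction argument otherwise. The price is that, writing $\Lambda_J$ for the weighted sum above at $(x^*,y^*,\phi^*)$, it only yields $\Psi_J(t)\le\frac{t+1}{t+1-y^*(U(J))}(2|J|-1)\Lambda_J$. This forces the choice $t_J=\lfloor(1+\frac{1}{c\ell^2})y^*(U(J))\rfloor$, which gives $\Psi_J(t_J)=O(\ell^3)\Lambda_J$. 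When $t_J=\lceil y^*(U(J))\rceil$, the cut at $t_J-1$ bounds only the product $(t_J-y^*(U(J)))\Psi_J(t_J-1)\le O(\ell^2)\Lambda_J$, not $\Psi_J(t_J-1)$ itself.

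\textbf{Step~3.} You flag this as the main obstacle, and it is where the theorem is actually proved; your remedy cannot work. The pieces that export demand are light, with opening below $\ell=\Theta(1/\epsilon)$. A floor budget can leave a constant fraction of a piece's demand unserved (the $0.5u$ above), far more than nearby $\epsilon$-slack absorbs. A generic pipage step over the forest has no handle on the cost of pieces that round down, since $\Psi_J(t_J-1)$ is unbounded. The ideas you are missing are the following.

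(i) A dichotomy on $o_J(t_J)$. If $o_J(t_J)\le x^*(U(J),C)/\ell$, the unserved demand is absorbed inside $S_J(t_J)$ by proportional distribution. This raises loads only by a $1+O(1/\ell)$ factor, because by (D2) the capacities of $S_J(t_J)$ are saturated. Otherwise $\Psi_J(t_J)\ge L_Jx^*(U(J),C)/(2\ell)$, so moving all of $J$'s demand a distance $O(\ell^2)L_J$ costs $O(\ell^3)\Psi_J(t_J)$. In that case the local solution is dropped and $J$ is handled fractionally (type~(1)).

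(ii) An auxiliary LP with variables $\tilde y_i$ for type-(1) facilities and $\tilde y_J\in[t_J-1,t_J]$ for types (3) and (4). Its objective charges $(t_J-\tilde y_J)\Psi_J(t_J-1)$, which is linear in the round-down and paid by the $t_J-1$ cut at $\tilde y_J=y^*(U(J))$. Constraint \eqref{LPC:auxiliary-y-(3,4)} allows at most one unit of round-down per $\children'(G)$. A vertex solution therefore has at most three fractional $\tilde y_i$ per $\children'(G)$ and at most one fractional $\tilde y_J$.

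(iii) The top-down grouping on the binary forest. Each of the $O(1)$ facilities removed per group sends its demand to a parent group with $\Omega(\ell)$ open facilities, at distance $O(\ell^2)L_J$. This gives at most $k$ facilities and a $(1+O(1/\ell))$ capacity violation.

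\textbf{The local solver.} Your solver is also unsupported. Having $O(1/\epsilon)$ representatives does not bound the number of candidate facility sets, and rounding demand to multiples of $\epsilon u$ is meaningless for non-uniform capacities. The paper instead embeds into an ultrametric with distortion $2|J|-1$. It then runs an exact DP over (number opened, total capacity), which is polynomial because capacities are integers at most $n$.
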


\begin{table}[htp]
    \centering
    \renewcommand{\arraystretch}{1.4}
    \setlength{\tabcolsep}{7pt}
    \setlength{\arrayrulewidth}{0.4pt}
    \arrayrulecolor{black}

    \resizebox{\linewidth}{!}{%
        \begin{tabular}{c|c|c|c|c|c}
            \hline
            
\multicolumn{6}{c}{%
                \textbf{Capacitated $k$-Median}
            } \\
            \hline
            \makecell{\textbf{Capacity}}
            & \makecell{\textbf{Approx. factor}}
            & \makecell{\textbf{Facility violation}} &\makecell{\textbf{Capacity violation}}
            & \textbf{Run time}
            & \textbf{Reference} \\
            \hline

        Soft
             & $O\!\left(\epsilon^{-2}\log(1/\epsilon)\right)$
            & $(1+\epsilon)k$
            & $-$
            & $n^{\poly(1/\epsilon)}$
            & Li~\cite{Li16} \\
            \hline

            & $O(\epsilon^{-5})$
            & $k$
            & $1+\epsilon$
            & $n^{\poly(1/\epsilon)}$
            & Demirci and Li~\cite{DemirciL16} \\
            \hhline{~|-|-|-|-|-}

            Hard
            & $O(\epsilon^{-5})$
            & $(1+\epsilon)k$
            & $-$
            & $\operatorname{poly}(n,1/\epsilon)$
            & \textcolor{blue}{Theorem~\ref{thm:main-card}} \\
            \hhline{~|-|-|-|-|-}

            & $O(\epsilon^{-6})$
            & $k$
            & $1+\epsilon$
            & $\operatorname{poly}(n,1/\epsilon)$
            & \textcolor{blue}{Theorem~\ref{thm:main-cap}} \\
            \hline
        \end{tabular}  
                }
    \caption{%
        For Theorem~1.2, the capacity bound in the final integral
        assignment is $\lceil(1+\epsilon)u_i\rceil$
        for each open facility $i\in F$.
    }
    \label{tab:results-comparison}
\end{table}

Compared with \cite{Li16, DemirciL16}, our main contribution is to generate valid inequalities from local penalty instances, thereby avoiding the explicit use of extended configuration LPs and improving the running time to $\poly(n, 1/\epsilon)$.

\subsection{Our Techniques}
\label{sec:techniques}
We now present the key ideas behind our simplified framework. To illustrate our core intuition, consider the integrality gap of the basic LP relaxation (LP\eqref{LP:basic}) in the cardinality-violation setting. Suppose we have two facilities, $i$ and $i'$, and $u+1$ clients co-located at the same point, where $u_i = u_{i'} = u$ for some large integer $u$. The LP can fractionally open $1 + 1/u$ facilities from $\{i, i'\}$ to serve all $u + 1 = (1 + 1/u)u$ clients at cost $0$. By taking disjoint copies of this gadget, the LP can open $k$ fractional facilities in total at cost $0$. However, any integer solution must open $2k / (1 + 1/u)$ facilities to achieve cost $0$, requiring a cardinality violation factor of $2/(1+1/u)$, which approaches $2$ for large $u$.

More generally,  a CKM instance can consist of far-apart ``islands'', where the LP solution opens $z$ fractional facilities per island for some small integer $z \ll 1/\epsilon$. Under a $(1+\epsilon)$-cardinality budget, we cannot afford to open $z + 1$ facilities on each island. Thus, one must guarantee that each island's subproblem admits a small-cost integral solution with at most $z$ open facilities. While \emph{rectangle inequalities} suffice for uniform capacities~\cite{li2017uniform}, the general capacity setting becomes significantly more intricate: Li~\cite{Li16} introduced auxiliary variables to encode every valid configuration of open facilities on each island (up to size $\Theta(1/\epsilon)$), leading to a heavy configuration LP. The capacity violation setting exhibits a similar complexity behavior. 

We take a simpler, direct approach. Rather than constructing complex extended LPs, we solve the local instance on each island directly to establish a valid lower bound on its client connection cost, and then enforce this bound via a valid linear constraint. To handle the more general case where a small fraction of demands from clients may be satisfied outside their local island, we formalize each local subproblem as a \emph{capacitated clustering with penalty} problem, where unserved client demand incurs a penalty proportional to the distance from the island to outside. Furthermore, by approximating these local penalty instances within an $O(1/\epsilon)$ factor instead of solving them optimally, we reduce the running time from $n^{O(1/\epsilon)}$ to $\text{poly}(n, 1/\epsilon)$.

Our island construction follows the spanning tree framework of prior work~\cite{li2017uniform, Li16, ByrkaRU16, DemirciL16}. Given a fractional solution to the basic LP, a filtering step selects a set of representative clients, and all demands are moved to representatives at a small cost. We then run Kruskal's algorithm on these representatives to build a Minimum Spanning Tree (MST) and color its edges in black, gray, and white. The resulting \emph{black components} (maximal connected subtrees of black edges) form our islands. For each black component, we define a family of capacitated clustering instances with penalty, parameterized by the number $t$ of open facilities. We solve these instances approximately and use the resulting lower bounds to generate valid LP constraints. As in previous work, the construction of black components depends on the LP solution. Thus these valid constraints cannot be fixed a priori, necessitating the round-or-cut framework.

Finally, our rounding procedures for both the cardinality- and capacity-violation settings follow a similar framework. We first construct a local fractional solution for each black component by combining the LP solution with our local penalty solutions. Next, we partition the black components into \emph{groups}, each containing $\Theta(1/\epsilon)$ fractional open facilities, ensuring that demand within each group can be satisfied internally with a $(1+\epsilon)$-violation. We then formulate an auxiliary LP over these local solutions. The structure of this auxiliary LP guarantees that in an extreme solution,  each group contains $O(1)$ strictly fractional facility variables. A $(1+\epsilon)$-violation provides just enough slack to round these residual fractional facilities to integers at low cost. The specific constructions of local solutions and group partition schemes vary between the two violation settings  and are motivated by previous work.

 \paragraph{Organization} 
 In Section~\ref{sec:Common-algorithm}, we give a common algorithm for cardinality-violation and capacity-violation settings.  In Section~\ref{sec:Cardinality-Violation}, we give the $(1/\epsilon^5)$-approximation for the capacitated $k$-median problem with $(1+\epsilon)k$ open facilities,  proving Theorem~\ref{thm:main-card}.  
In Section~\ref{sec:Capacity-Violation}, we give the $(1/\epsilon^6)$-approximation for the capacitated $k$-median problem with  at most $k$ facilities while
connecting at most $\lceil(1 + \epsilon)u_i\rceil $ clients to each open facility, proving Theorem~\ref{thm:main-cap}. 
All missing proofs can be found in  Appendix \ref{sec:appendix}.
      \section{Common algorithm for cardinality-violation and capacity-violation settings}\label{sec:Common-algorithm}

In this section, we describe the common procedure in our rounding algorithms for both the cardinality and capacity-violation settings.   The following is the basic LP relaxation for both settings:
\begin{align}
      \min \qquad \sum_{j \in C} \phi_j\label{LP:basic}
\end{align}\vspace*{-30pt}

\noindent\begin{minipage}[t]{0.40\textwidth}
 \begin{align}
      \sum_{i\in F} x_{i,j} &=1 ,  &\quad &\forall j\in C \label{LPC:connection-1-facility}\\
      \sum_{i\in F} y_i &\leq k, \label{LPC:open-k-facility}\\
      x_{i, j}&\leq y_i, &\quad &\forall i\in F,  j\in C \label{LPC:x<y}
 \end{align}
\end{minipage}\hfill
\begin{minipage}[t]{0.60\textwidth}
 \begin{align} 
 \sum_{j\in C} x_{i, j} &\leq u_i y_i , &\quad &\forall i\in F \label{LPC:capacity}\\
 \phi_j &\geq \sum_{i \in F} d(i, j)  x_{ij},  &\quad &\forall j \in C \label{LPC:phi-j}\\
x_{i,j}, y_i&\in [0, 1], &\quad &\forall i\in F, j\in C \label{LPC:x-y-non-negative}
  \end{align}
\end{minipage}\bigskip 

In the corresponding integer program, $x_{ij}\in \{0, 1\}$ for every facility $i \in F$ and client $j \in C$ indicates whether $j$ is connected to $i$, $y_i\in \{0, 1\}$ for $i\in F$ indicates whether $i$ is open, and $\phi_j$ for $j \in C$ denotes the connection cost of the client $j \in C$.  The objective \eqref{LP:basic} is the sum of connection costs over all clients. \eqref{LPC:connection-1-facility} requires that every $j$ is connected to exactly 1 facility, \eqref{LPC:open-k-facility} requires that the number of open facilities is at most $k$, and \eqref{LPC:x<y} says that $j$ can be connected to $i$ only if $i$ is open.  \eqref{LPC:capacity} is the capacity requirement, and \eqref{LPC:phi-j} is by the definition of $\phi_j$. In the LP relaxation program, we only require $x_{ij}$'s and $y_i$'s to be in $[0, 1]$, as stated in \eqref{LPC:x-y-non-negative}.

We shall use the ''round-or-cut'' paradigm \cite{DBLP:conf/soda/CarrFLP00, DBLP:journals/siamcomp/AnSS17}. Given a fractional solution $(x^*, y^*, \phi^*)$ to LP\eqref{LP:basic}, our rounding algorithm either succeeds by returning an integral solution whose cost is at most $O_\epsilon(1)\cdot \sum_{j \in C}\phi^*_j$ (which may violate the cardinality constraint or the capacity constraints by a factor of $(1+\epsilon)$),  or returns a valid constraint of the following form, that is violated by the solution $(x^*, y^*, \phi^*)$: 
	\begin{align}
		s \sum_{i \in F'}y_i + \sum_{j \in C} w_j \phi_j  \geq  \Phi, \label{LPC:penalty}
	\end{align}
	where $s \in \R_{\geq 0}$, $F' \subseteq F$, $w_j \in [0, 1]$ for every $j \in C$, and $\Phi \in \R_{\geq 0}$.  We can think of that each facility $i \in F'$ has a cost $s$, and a client $j \in C$ has weight $w_j \in [0, 1]$. The inequality says that the sum of facility cost and the weighted connection cost of clients is at least $\Phi$. When our rounding algorithm returns a constraint of this form, we guarantee its validity.  That is, it is satisfied by all integral solutions to the CKM instance.  It is well-known that such an algorithm is sufficient to obtain an $O_\epsilon(1)$-approximation, with $(1+\epsilon)$-cardinality or capacity violation. 

    Furthermore, when the facility opening variables $y_i$'s are integral ($y_i \in \{0, 1\}$ for every $i \in F$), the optimum client-to-facility assignment reduces to a minimum-cost maximum flow (or matching) problem. As allowing $x_{ij}$'s to take fractional values does not decrease the cost, our rounding algorithm allows the connections between clients and facilities to be fractional. We assume every $j \in C$ has 1 unit of demand. Our goal is to choose a set of at most $k$ open facilities, and send the demands to the open facilities, so that every open facility $i$ receives at most $u_i$ units of demand.  When a $(1+\epsilon)$-capacity violation is allowed, an open facility $i$ can take up to $(1+\epsilon)u_i$ units of demand fractionally. But in an integral solution, we need to increase its capacity to $\ceil{(1+\epsilon)u_i}$.

	\medskip

   We can assume $k\le |F|$; otherwise, opening all facilities and
 computing a minimum-cost assignment is optimal.
 Fix a solution $(x^*, y^*, \phi^*)$ to LP\eqref{LP:basic}.
 Increase $y^*$ coordinatewise within $[0, 1]^F$ until
 $\sum_{i\in F}y_i^*=k$, keeping $x^*$ and $\phi^*$ unchanged.
 This preserves feasibility and cost. Any constraint of form \eqref{LPC:penalty}
 violated afterward was also violated before  this modification,
 since its coefficients of $y$ are nonnegative.
 We continue to denote the modified vector by $y^*$.

           For any set $F' \subseteq F$ of facilities and any set $C' \subseteq C$ of clients, we shall let $x^*(F', C') : = \sum_{i\in F', j\in C'} x^*_{i, j}$;  we simply write $x^*(i, C')$ for $x^*(\{i\}, C')$ and $x^*(F', j)$ for $x^*(F', \{j\})$.  We define $y^*(F') =\sum_{i\in F'} y^*_i$ for any $F' \subseteq F$. This notational convention applies to other variants of $y$ as well.

    Throughout the paper, we let $\ell = \Theta(1/\epsilon)$ be an integer with a sufficiently large hidden constant in $\Theta(\cdot)$.

    \subsection{Representatives $R$ and Initial Moving of Demands} \label{Initial-Moving-of-Demands}
        We use the same framework as in \cite{li2017uniform, Li16, DemirciL16, ByrkaRU16} to choose a set $R$ of representatives and move all demands to $R$. 
       We define the average connection cost of each client $j\in C$ as $d_{\av}(j) := \sum_{i\in F} x^*_{i,j} d(i,j)$. The value of the solution $(x^*, y^*, \phi^*)$ is $\phi^*\geq \sum_{i\in F, j\in C}x^*_{i, j}d(i, j)=\sum_{j\in C}d_{\av}(j)$.
   We select representatives $R$ using Algorithm \ref{alg:Filtering-of-Representatives}.
\begin{algorithm}
	\caption{filtering of representatives}
	\label{alg:Filtering-of-Representatives}
	\begin{algorithmic}[1]
    \State \textbf{Input:} a facility set $F$, a client set $C$, a metric $d$,
                 and a fractional solution $(x^*, y^*, \phi^*)$.
                
           \State      \textbf{Output:} a set $R \subseteq C$ of representatives.
                 
                 \State  compute
               $d_{\av}(j)
               \gets \sum_{i\in F} x^*_{i, j}d(i, j)$
               for each $j\in C$.
        \State  $R \gets \emptyset$;
               $C' \gets C$.
		\While{$C' \neq \emptyset$}
            \State  choose
                   $v \in \arg\min_{j\in C'}
                   d_{\av}(j)$,
                   breaking ties arbitrarily.
            \State  $R \gets R \cup \{v\}$; \, $C'\gets C'
                   \setminus
                   \{j\in C':
                     d(j, v)\le 4d_{\av}(j)\}$. \Comment{$v$ itself is also removed}
       \EndWhile
        \State \Return $R$. 
	\end{algorithmic}
\end{algorithm}

 We perform a Voronoi partition on facility set $F$ centered at representative set $R$: each facility $i\in F$ is assigned to its nearest representative $v\in R$, yielding a partition of $F$ into clusters $\{U_v: v\in R\}$, where $U_v=\{i\in F: v=\arg\min_{v'\in R} d(i, v')\}$.  For any subset $V\subseteq R$ of representatives, we use $U(V) := \bigcup_{v\in V} U_v$ to denote the union of Voronoi regions with centers $V$.

\begin{restatable}{lemma}{lemmarepresentatives}
        \label{lemmarepresentatives}
            The following statements hold:
             \begin{enumerate}[label=(\thelemma\alph*),leftmargin=*]
			\item for all  $v, v'\in R $ and $v\neq v'$,  we have $d(v, v')>4\max\{d_{\av}(v), d_{\av}(v')\}$; \label{represen-distance}
            \item for all $j \in  C$, there exists $v \in R$, such that $d_{\av}(v) \leq d_{\av}(j)$ and $d(v, j) \leq 4d_\av(j)$; \label{j-removed}
            \item  $y^*(U_v) \geq 1/2$ for every $v \in R$; \label{opening-of-U_v}
             \item  for any $v\in R, i\in U_v$ and $j\in C$, we have $d (i, v)\leq d(i, j)+ 4d_\av(j)$.\label{le:move-distance}
		\end{enumerate}	
        \end{restatable}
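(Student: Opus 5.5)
The four parts follow from the greedy order of Algorithm~\ref{alg:Filtering-of-Representatives} and the LP constraints, and I would prove them in the order stated.

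For \ref{represen-distance}, let $v, v' \in R$ and say $v$ was chosen before $v'$. When $v$ was picked, $v'$ was still in $C'$, so the greedy rule gives $d_\av(v) \le d_\av(v')$. Since $v'$ survived that iteration, $d(v', v) > 4 d_\av(v')$. Because $d_\av(v') = \max\{d_\av(v), d_\av(v')\}$, this is exactly the claim.

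For \ref{j-removed}, every client $j$ is eventually removed from $C'$, since the loop runs until $C'$ is empty. Let $v$ be the representative chosen in the iteration that removes $j$. Then $j \in C'$ at that moment, so $d_\av(v) \le d_\av(j)$ by the greedy choice, and the removal condition gives $d(j, v) \le 4 d_\av(j)$.

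For \ref{opening-of-U_v}, fix $v \in R$ and consider the ball $B = \{i \in F : d(i, v) \le 2 d_\av(v)\}$. By Markov's inequality applied to $x^*_{\cdot, v}$, which is a probability distribution over facilities with mean distance $d_\av(v)$, the connection mass outside $B$ is less than $1/2$, so $x^*(B, v) \ge 1/2$. (The case $d_\av(v) = 0$ is fine: all mass then sits at distance $0$.) By \eqref{LPC:x<y}, $y^*(B) \ge x^*(B, v) \ge 1/2$. It remains to show $B \subseteq U_v$. Take $i \in B$ and any $v' \neq v$ in $R$. Using \ref{represen-distance},
\[
d(i, v') \ge d(v, v') - d(i, v) > 4 d_\av(v) - 2 d_\av(v) = 2 d_\av(v) \ge d(i, v).
\]
So $v$ is the strictly nearest representative to $i$, and $i \in U_v$ regardless of tie-breaking. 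This step is the main subtlety: the strict inequality is what makes the containment independent of ties.

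For \ref{le:move-distance}, fix $v \in R$, $i \in U_v$ and $j \in C$. Apply \ref{j-removed} to $j$ to obtain $v' \in R$ with $d(v', j) \le 4 d_\av(j)$. Since $v$ is a nearest representative to $i$,
\[
d(i, v) \le d(i, v') \le d(i, j) + d(j, v') \le d(i, j) + 4 d_\av(j).
\]
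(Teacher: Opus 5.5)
Your proof is correct and takes essentially the same route as the paper: the greedy selection order gives (a) and (b), an averaging (Markov) bound on the ball $B=\{i\in F: d(i,v)\le 2d_{\av}(v)\}$ together with \eqref{LPC:x<y} gives (c), and the nearest-representative property combined with (b) gives (d). You also verify explicitly, via the strict inequality, that $B\subseteq U_v$ regardless of tie-breaking, and you phrase (a) by selection order so ties in $d_{\av}$ need no separate treatment; the paper simply asserts the containment as following from (a).
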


         We move all demands to client representatives. First, for each client $j \in C$ and $i \in F$, we move $x_{i, j}$ units of demand from $j$ to $i$. Obviously, the moving cost is exactly $\sum_{i\in F, j\in C}x^*_{i, j} d(i, j)$.  After this step, each $i \in F$ has $\sum_{j\in C} x^*_{i, j} = x^*({i, C}) \leq u_i y_i$ units of demand. Second, for each $v \in R$ and $i \in  U_v$, we move the $x^*({i, C})$ units of
      demand at $i$ to $v$. Then, all demands are moved to $R$ and a representative $v \in R$  has $\sum_{i\in U_v} x^*(i, C) = x^*({U_v, C})$ units of demand. 
      
      Let $g_{j, v}$ be the amount of demand moved from a $j \in C$ to $v \in R$. That is, we define $g_{j, v} = x^*({U_v, j})$. Clearly, for every $j \in C$, we have $\sum_{v \in R}g_{j, v} = x^*({F, j}) = 1$.

     Let $D_i: =\sum_{j\in C}x^*_{i, j}(d(i, j)+d_\av(j))$ for each $i\in F$, and $D(F'): =\sum_{i\in F'}D_i$ for every $F'\subseteq F$. Notice that $D(F) = \sum_{i \in F} D_i = \sum_{i \in F, j \in C} x^*_{i,j}d(i,j)+\sum_{j \in C}d_\av(j) \leq 2\sum_{j \in C}\phi^*_j$. We can think of $D_i$ as the budget given to $i$.

      \begin{lemma}
        \label{lemma:x*iCdiv}
          For every $v \in R$, we have $\sum_{i\in U_v} x^*({i, C}) d(i, v) \leq O(1)\cdot D(U_v)$.
      \end{lemma}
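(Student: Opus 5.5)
Expand $x^*(i,C)=\sum_{j\in C}x^*_{i,j}$, so the left-hand side becomes $\sum_{i\in U_v}\sum_{j\in C}x^*_{i,j}\,d(i,v)$. The plan is to bound each term $d(i,v)$ using Lemma~\ref{lemmarepresentatives}\ref{le:move-distance}. Applied with the same client $j$ appearing in $x^*_{i,j}$, that statement gives, for every $i\in U_v$ and $j\in C$,
\[
d(i,v)\le d(i,j)+4d_\av(j).
\]

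Multiplying by $x^*_{i,j}\ge 0$ and summing over $j\in C$ gives
\[
x^*(i,C)\,d(i,v)=\sum_{j\in C}x^*_{i,j}d(i,v)\le \sum_{j\in C}x^*_{i,j}\bigl(d(i,j)+4d_\av(j)\bigr)\le 4\sum_{j\in C}x^*_{i,j}\bigl(d(i,j)+d_\av(j)\bigr)=4D_i .
\]
Summing over $i\in U_v$ then yields $\sum_{i\in U_v}x^*(i,C)\,d(i,v)\le 4D(U_v)$, which is the claim with constant $4$.

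Assuming the earlier lemma, there is no real obstacle. The only point to check is that Lemma~\ref{lemmarepresentatives}\ref{le:move-distance} holds for every client $j$, not just clients near $v$. If one needed to justify it directly, the argument would go as follows. By Lemma~\ref{lemmarepresentatives}\ref{j-removed}, take $v'\in R$ with $d(j,v')\le 4d_\av(j)$. Since $i\in U_v$, the Voronoi property gives $d(i,v)\le d(i,v')\le d(i,j)+d(j,v')\le d(i,j)+4d_\av(j)$. With that inequality in hand, the remainder is the term-by-term summation above.
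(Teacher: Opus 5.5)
Your proof is correct and matches the paper's: apply Property~\ref{le:move-distance} to each pair $(i,j)$ with $i \in U_v$, weight by $x^*_{i,j}$, and sum to obtain the bound $4D(U_v)$. Your justification of that property, via Property~\ref{j-removed} and the Voronoi assignment, is also the same as the paper's appendix proof.
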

      \begin{proof}
          By Property \ref{le:move-distance}, we have $d (i, v)\leq d(i, j)+ 4d_\av(j)$ for every $i\in U_v$ and $j\in C$. Thus, 
          $\sum_{i\in U_v} x^*({i, C}) d(i, v) \leq \sum_{i\in U_v, j\in C}x^*_{i, j}(d(i, j)+4 d_\av(j))\leq 4\sum_{i\in U_v}D_i\leq O(1)D(U_v).$ 
      \end{proof}

This immediately implies the following lemma, which shows that the initial moving of demands from clients $C$ to representatives $R$ is small.
    \begin{lemma}\label{lem:representative}
   For every $v \in R$, we have $\sum_{j\in C} g_{j, v} d(j, v)\leq O(1) \cdot D(U_v)$.
    \end{lemma}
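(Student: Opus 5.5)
We bound $d(j,v)$ by routing the demand through the facilities of the Voronoi cell $U_v$. Since $g_{j,v}=x^*(U_v,j)=\sum_{i\in U_v}x^*_{i,j}$, we can rewrite
\[
\sum_{j\in C} g_{j,v}\, d(j,v) \;=\; \sum_{i\in U_v}\sum_{j\in C} x^*_{i,j}\, d(j,v).
\]
For each pair $i\in U_v$, $j\in C$ we apply the triangle inequality $d(j,v)\le d(j,i)+d(i,v)$. This splits the sum into two terms. The first is $\sum_{i\in U_v, j\in C} x^*_{i,j} d(i,j)$. The second is $\sum_{i\in U_v} x^*(i,C)\, d(i,v)$.

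The first term is at most $D(U_v)$. This holds directly from the definition $D_i=\sum_{j}x^*_{i,j}(d(i,j)+d_\av(j))$, since the $d_\av(j)$ part is nonnegative.

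The second term is exactly the quantity bounded in Lemma~\ref{lemma:x*iCdiv}, so it is at most $O(1)\cdot D(U_v)$. If one prefers not to invoke that lemma, the same bound follows by plugging Property~\ref{le:move-distance}, namely $d(i,v)\le d(i,j)+4d_\av(j)$, into each summand; this gives at most $4D(U_v)$.

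Adding the two terms yields $\sum_j g_{j,v} d(j,v)\le 5D(U_v)=O(1)\cdot D(U_v)$. There is no real obstacle here. The only step needing care is the first one: we must split $g_{j,v}$ over the individual facilities $i\in U_v$ before applying the triangle inequality, so that each piece of demand is charged along its actual path $j\to i\to v$. That path is precisely the two-step moving process described above.
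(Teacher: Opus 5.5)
Your proof is correct and is the paper's argument: expand $g_{j,v}=\sum_{i\in U_v}x^*_{i,j}$, apply $d(j,v)\le d(j,i)+d(i,v)$, bound the first part directly by $D(U_v)$ and the second by Lemma~\ref{lemma:x*iCdiv}. Your explicit constant $5$ is a valid concrete instance of the paper's $O(1)$.
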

    \begin{proof}
        For each $i\in U_v$, by the triangle inequality, $d(j, v)\leq d(j, i)+d(i, v)$.  So
        \begin{align*}
            \sum_{j\in C}g_{j, v} d(v, j)&=\sum_{j\in C, i\in U_v}x^*_{i, j}  d(v, j)\leq \sum_{j\in C, i\in U_v}x_{i, j}^* (d(j, i)+d(i, v))\\
            & \leq O(1)\cdot D(U_v).
        \end{align*}
        The last inequality follows from Lemma~\ref{lemma:x*iCdiv}.
    \end{proof}

 \subsection{Construction of Black Components and Gray Forest Using Kruskal's Algorithm} 
 We now describe the framework of \cite{Li16, DemirciL16} in which we run Kruskal's Algorithm to obtain a minimum spanning tree (MST) for $R$, and color the edges of the MST using black, gray and white colors. We can assume $k \geq \ell$. Otherwise, the $O(\log k)$-approximation algorithm for capacitated $k$-median problem by \cite{Adamczyk2019} gives a true approximation ratio of $O(\log k) = O(\log 1/\epsilon)$.

 For any $v\in R$,  define its weight as $ y^*({U_v})= \sum_{i\in U_v}y^*_i$ satisfying $y^*({U_v})\geq \frac{1}{2}$.  For any subset  $J \subseteq R$, define its weight as $y^*(U(J))=\sum_{v \in J} y^*({U_v})$. If $y^*(U(J)) \geq \ell$, then $J$ is said to be heavy; otherwise, it is said to be light.

    Recall that in Kruskal's algorithm for MST, we maintain the edge set $E_{\mathrm{MST}}$  and a partition $\mathcal{P}$ of $R$, defined in Algorithm \ref{alg:kruskal-coloring}.

\begin{algorithm}[h]
    \caption{Kruskal's algorithm with edge coloring}
    \label{alg:kruskal-coloring}
    \begin{algorithmic}[1]
        \State \textbf{Input:} a representative set $R$, a metric $d$,
                 weights $\{y^*(U_v)\}_{v\in R}$,
                 and a threshold $\ell$.
        \State \textbf{Output:} a minimum spanning tree  on $R$ with colored edges.

        \State $E_{\text{MST}}\gets\emptyset$;
               $\mathcal{P}\gets\{\{v\}: v\in R\}$.
        \State sort the edges in $\binom{R}{2}$ in
               non-decreasing order of length,
               breaking ties arbitrarily.

        \ForAll{$e=\{v, v'\}$ in the sorted order}
            \State let $J_v, J_{v'}\in\mathcal{P}$ be the
                   components containing $v$ and $v'$, respectively.

            \If{$J_v\neq J_{v'}$}
                \State $E_{\text{MST}}
                       \gets E_{\text{MST}}\cup\{e\}$.

                \If{both $J_v$ and $J_{v'}$ are light}
                    \State color $e$ black and leave it undirected.
                \ElsIf{exactly one of $J_v$ and $J_{v'}$ is light}
                    \State color $e$ gray and direct it from the light set to the heavy set.
                \Else
                    \State color $e$ white and leave it undirected.
                \EndIf

                \State $\mathcal{P}
                       \gets
                       \bigl(\mathcal{P}\setminus\{J_v,J_{v'}\}\bigr)
                       \cup\{J_v\cup J_{v'}\}$.
            \EndIf
        \EndFor
        \State \Return $(R, E_{\mathrm{MST}})$, with all edges colored and gray edges directed.
    \end{algorithmic}
\end{algorithm}
  Let $(R, E_{\text{MST}})$ be the minimum spanning tree returned by Algorithm  \ref{alg:kruskal-coloring}.
    Consider the forest formed by the black edges of the MST. The vertex set of any tree in the forest is said to be a \textit{black component}, or simply \emph{component}. Let $\mathcal{J}$ be the set of all black components; so, $\mathcal{J}$ forms a partition of $R$. For every $J \in \mathcal{J}$, we define $L_J := d(J, R \setminus J)$.
    When $J= R$, define $L_J: =2\cdot\text{diam}(F\cup C)$.
    
    \begin{restatable}{claim}{claimcomponentJ}
    \label{claim:componentJ}
        For each $J\in \mathcal J $ and $v, v'\in J$,  each black edge $e=(v, v')$ in $J$ satisfies  $d(v, v')\leq L_J$.
    \end{restatable}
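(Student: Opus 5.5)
The plan is to use the cut property of Kruskal's algorithm together with the observation that a black component, once formed, is never touched by another black edge. The case $J = R$ is immediate. Here $L_J = 2\cdot\mathrm{diam}(F\cup C)$, and every edge between representatives has length at most $\mathrm{diam}(F\cup C)$. So we may assume $J \neq R$, which gives $L_J = d(J, R\setminus J) = d(a,b)$ for some $a \in J$ and $b \in R\setminus J$ minimizing the distance.

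First, I would show that the edge $e=(v,v')$ is processed no later than $\{a,b\}$. Since $e$ is a black edge inside $J$, there is a step of Algorithm~\ref{alg:kruskal-coloring} at which $e$ is added. At that step, $e$ joins two light parts $J_v, J_{v'} \in \mathcal P$ with $J_v, J_{v'} \subseteq J$. To see the containment, note that all MST edges added earlier inside $J_v$ lie in the MST restricted to the final tree structure. Then use a second observation: the set $J$ is exactly a union of parts that were merged only by black edges. I would formalize this by induction on the steps. At every step, each part of $\mathcal P$ is a union of black components, and any part containing a vertex of $J$ that has been merged only through black edges is a subset of $J$.

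Next, I would use that $J$ is a part of $\mathcal P$ at some point. Consider the moment just after the last black edge of $J$ is added; if $|J|=1$ the claim is vacuous. At that moment $J$ equals a part $P\in\mathcal P$, since its vertices are connected by black edges, and no non-black edge has merged any of them yet. That second point is where the care is needed: a gray edge could merge a light set containing vertices of $J$ with a heavy set before all black edges of $J$ are added, and then the later black edges would connect heavy sets, contradicting blackness. So once a part is merged by a gray or white edge, it becomes heavy, and heavy parts never receive black edges. Hence all black edges of $J$ are added while the parts containing $J$'s vertices are subsets of $J$, and the final such part is $J$ itself.

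Finally, the key ordering argument. Suppose $d(v,v') > d(a,b)$. Then $\{a,b\}$ was processed before $e$, with ties broken in favor of $e$ being processed later; this handles only the strict case, and the non-strict case gives the inequality anyway. At the time $\{a,b\}$ is processed, $a$ lies in a part contained in $J$, by the invariant above, since $e$ has not yet been added. Moreover $b\notin J$, so $b$'s part is disjoint from $a$'s part. Therefore the edge $\{a,b\}$ is added to the MST, merging $a$'s part with a part outside $J$. This edge is not black, because $b$'s vertices are not in $J$, so it must be gray or white, and the merged part is heavy. That part contains the $J$-vertex $a$, and a black edge of $J$ is added later. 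This contradicts the invariant that all black edges of $J$ are added while its vertices lie in parts contained in $J$, or equivalently that no black edge is incident to a heavy part. Hence $d(v,v')\le d(a,b)=L_J$.

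The main obstacle is stating and proving the invariant cleanly by induction over Kruskal's steps: at every step before the last black edge of $J$ is added, each part containing a vertex of $J$ is a subset of $J$ and is light. The key supporting fact is that the union of a light set and a heavy set is heavy.
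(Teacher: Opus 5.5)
Your proposal is correct and follows essentially the paper's route: the shortest edge leaving $J$ is added as a gray or white edge, which makes the Kruskal part containing $J$'s vertices heavy, so every black edge of $J$ must come before it and has length at most $L_J$ (the paper takes the \emph{first processed} edge across $(J, R\setminus J)$, which is automatically added because no earlier edge crosses the cut, so it never needs your invariant). Drop the aside that every part of $\mathcal P$ is always a union of black components, which is false (initial singletons inside a black component with $|J|\ge 2$ are counterexamples); your contradiction only uses the correct invariant that, before the last black edge of $J$ is added, every part meeting $J$ is a light subset of $J$.
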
\medskip

    We take the MST, remove all white edges, and contract all black edges. The resulting graph is a forest, which we call the \emph{gray forest} and denote by $\Upsilon$. We call each tree in the forest a gray tree. We keep the direction of the gray edges, as defined earlier. So, the gray forest is over the set of black components, and each gray tree is a rooted tree with gray edges directed towards the root.

   \begin{restatable}{lemma}{lemmagrayforest}\label{lemmagrayforest}
          For each tree $\tau$ in the gray forest $\Upsilon$, the following properties hold:
         \begin{enumerate}[label=(\thelemma\alph*),leftmargin=*]
            \item the root of $\tau$ is a heavy black component, and all other vertices of $\tau$ are light black components. \label{le:weight-J}
            \item let $J, J'$ be two black components in $\tau$ such that $J'$ is the parent of $J$. Then $L_J = d(J, J')$ is the length of the gray edge between $J$ and $J'$.  \label{length-gray}
            \item  for any leaf-to-root path of $\tau$, the $L$-values of nodes are non-increasing. 
            \label{lem:leaf-to-root-original-gray}
		\end{enumerate}	
     \end{restatable}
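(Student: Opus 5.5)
The plan is to follow the dynamics of Kruskal's algorithm, tracking how the partition $\mathcal{P}$ evolves. Two monotonicity facts drive everything. First, once a set in $\mathcal{P}$ becomes heavy it stays heavy, because merging only increases weight. Second, a black component $J$ is a union of sets that were all light when its internal black edges were added. The component $J$ is \emph{finalized} at the first moment a non-black edge touches the part of $\mathcal{P}$ containing $J$. That edge goes from $J$ to $R\setminus J$. Every later edge incident to $J$'s part is also non-black, since that part then contains a heavy set or is merged into one.

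The first step is to show that the first MST edge leaving $J$ has length exactly $L_J = d(J, R\setminus J)$. Kruskal processes edges in nondecreasing order. Just before the first edge $e$ leaving $J$ is considered, the current part equals $J$ exactly. This holds because all black edges inside $J$ are shorter, or tied and processed earlier, and no non-black edge touched $J$ before $e$. The shortest edge between $J$ and $R\setminus J$ connects two different parts, so it is added to the MST. Hence $e$ realizes $d(J, R\setminus J)$, up to tie-breaking among equal lengths.

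\textbf{Part (a).} Call $e$ the \emph{exit edge} of $J$. It is non-black, so it is gray or white, unless $J=R$, in which case $J$ is the entire tree.
\begin{itemize}
\item If $J$ is light, then $e$ must be gray with $J$ on the light side. If both sides were light, $e$ would be black. Since $J$ is light, $e$ cannot be white. So $J$ has an outgoing gray edge, i.e.\ a parent.
\item If $J$ is heavy, $e$ is either white or gray directed into $J$. In fact every gray edge incident to $J$ is directed into $J$, because the part containing $J$ is heavy whenever such an edge is added. So a heavy component has no parent.
\item Each component has at most one outgoing gray edge. After the exit edge, $J$'s part is heavy, so any later gray edge points into it, and only the first one can originate from $J$ itself.
\end{itemize}
Following parent pointers therefore terminates at a heavy component, which is the root. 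This gives (a). One subtlety is that a gray edge into $J$'s part may physically touch a different component inside that heavy part. This is fine for the gray forest, because the forest is defined by contracting black edges, so the directed gray edge points to whichever component contains its endpoint.

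\textbf{Part (b).} If $J'$ is the parent of $J$, the gray edge between them is precisely the exit edge of $J$. By the first step, its length is $L_J = d(J, J')$.

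\textbf{Part (c).} Let $J'$ be the parent of the light component $J$, and suppose $J'$ is itself not the root. We need $L_{J'} \le L_J$.
\begin{itemize}
\item When the exit edge of $J$ (length $L_J$) is processed, $J'$'s part is heavy.
\item A light black component is light as a set, so $J'$ alone is light. Its part can become heavy only after $J'$ has been merged via a non-black edge, namely its own exit edge.
\item So the exit edge of $J'$ was processed before that of $J$, giving $L_{J'} \le L_J$ by the sorted order.
\end{itemize}
Chaining this inequality along a leaf-to-root path yields (c).

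The main obstacle is the precise handling of ties in the sorted order and the "exactly equals $d(J,R\setminus J)$" claim. I would argue it via the processing order rather than via lengths: define the exit edge as the first processed edge leaving $J$, and note that every edge of $R\setminus J$ to $J$ is processed no earlier.
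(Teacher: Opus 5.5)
Your arguments for (a) and (b) are essentially the paper's. The first MST edge leaving $J$ is processed when $J$'s Kruskal part is exactly $J$. So it is the first processed edge crossing $(J,R\setminus J)$, it is gray and directed out of $J$ when $J$ is light, and all later gray edges at $J$ point into $J$.

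The genuine gap is in (c). You explicitly assume the parent $J'$ is not the root, because your argument needs $J'$ to be light, so that its part can only become heavy through its own exit edge. But every leaf-to-root path ends with an edge from a light component $J$ into the heavy root $J'$. Statement (c) asserts $L_{J'}\le L_J$ for this last pair as well, and chaining your inequality never produces it. Your reasoning also does not extend to this case as written. The root's part may already be heavy from its own black merges, or the root may be a heavy singleton. Its first crossing edge can even be the exit edge of $J$ itself, so ``processed strictly before'' is unavailable. This case is used later, e.g.\ in Lemma~\ref{lemmablackcomponent} and Lemma~\ref{lemmagroups}, where paths pass through the root of a gray tree.

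The repair is one line, and it makes the processing-order argument unnecessary. For any parent $J'$ of $J$, root or not, $J\subseteq R\setminus J'$, so by (b) $L_{J'}=d(J',R\setminus J')\le d(J',J)=L_J$. In your language: the exit edge of $J$ also crosses $(J',R\setminus J')$, so the exit edge of $J'$ is processed no later. This is exactly the paper's proof of (c).

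There are also three smaller omissions. In (a), note that a light $J$ cannot equal $R$ because $y^*(F)=k\ge\ell$, so its exit edge exists. Also in (a), ``following parent pointers terminates at a heavy component'' does not by itself show that $\tau$ has only one heavy component. The paper closes this by counting: light components have out-degree exactly one, heavy ones out-degree zero, and $\tau$ has $|V(\tau)|-1$ edges, so exactly one vertex has out-degree zero. In (b), state the sandwich $L_J\le d(J,J')\le d(v,v')=L_J$ for the exit edge $(v,v')$; this is what gives $L_J=d(J,J')$.
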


	\subsection{Capacitated Clustering with Penalty Problem}\label{subec:Penalty-Problem}  We now introduce the key contribution  discussed in Section~\ref{sec:techniques}: the use of capacitated clustering with penalty instances to generate valid inequalities on the original capacitated $k$-median instance.

	Given a black component $J \in \cal{J}$, and an integer parameter $t \geq 0$, the capacitated clustering with penalty instance (or simply, penalty instance) $\calI_J(t)$ is defined as follows. We have demands at $J$, where each $v \in J$ has $x^*(U_v, C)$ units of demand. The facilities are $U(J) = \union_{v \in J}U_v$, and each facility $i \in U(J)$ has capacity $u_i$.   We are allowed to open at most $t$ facilities.

	Crucially, for the instance $\calI_J(t)$, we are allowed to not connect all the demands: for every unit-demand that is not connected, we pay a penalty cost of $L_J / 2$.  Our goal is to open at most $t$ facilities $S$ in $U(J)$, and connect some demands from $J$ to $S$, so that every $i \in S$ receives at most $u_i$ units of demand. Connecting demands at $v \in J$ to $i \in U(J)$ incurs a per-unit cost of $d(v, i)$. For all the demands that are not connected to $S$, we pay a per-unit penalty cost of $L_J / 2$. The goal of the problem is to minimize the sum of connection and penalty costs.

	\paragraph{$O(|J|)$-Approximate Solution $S_J(t)$ for $\calI_J(t)$} As $|J|$ is small, we can afford to use an $O(|J|)$-approximation for the problem.\footnote{Alternatively, we could solve the problem exactly with running time exponential in $|J|$. However, it is more important to achieve $\poly(n, 1/\epsilon)$ running time, and we lose approximation factors of $\poly(1/\epsilon)$ anyway. So we use the $O(|J|)$-approximation in $n^{O(1)}$-time.} 
We first convert the metric $d$ to a hierarchical tree metric. Let $m:=|J|$ and $\alpha:=2m-1$. Let $T$ consist of the black MST edges over $J$, together with an edge $(v,i)$ for every $v\in J$ and $i\in U_v$. Representatives and facilities are treated as separate vertices, even if they are at the same location. Assign auxiliary weights
\begin{align*}
    \lambda(v,w)&:=d(v,w)/2 &&\text{for every black edge }(v,w),\\
    \lambda(v,i)&:=d(v,i) &&\text{for every }i\in U_v.
\end{align*}
For distinct vertices $a,b$ of $T$, let $b_T(a,b)$ be the maximum auxiliary weight on their unique path in $T$, and set $b_T(a,a):=0$. Define $d'(a,b):=\alpha b_T(a,b)$. This is an ultrametric, and the Kruskal decomposition described below realizes it as a tree metric with all original vertices at leaves. The different scaling of black edges and facility edges allows us to retain the factor $2|J|-1$.

\begin{lemma}\label{lem:tree-metric}
		For every $v \in J$ and $i \in U(J)$, we have $d(v, i) \leq d'(v, i) \leq (2|J| - 1) d(v, i)$.
	\end{lemma}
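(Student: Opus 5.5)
We prove the two inequalities separately. Fix $v \in J$ and $i \in U(J)$, say $i \in U_{w}$ with $w \in J$. The path in $T$ from $v$ to $i$ goes along black edges from $v$ to $w$ (the unique path $v = v_0, v_1, \dots, v_r = w$ in the black tree on $J$, with $r \le |J|-1$), followed by the edge $(w, i)$. So $b_T(v,i) = \max\{\max_{s} d(v_{s-1},v_s)/2,\ d(w,i)\}$, and $d'(v,i) = \alpha\, b_T(v,i)$ with $\alpha = 2|J|-1$.

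\emph{Upper bound.} We bound every auxiliary weight on the path by $d(v,i)$. For the facility edge this holds because $i \in U_w$ means $w$ is a nearest representative to $i$, so $d(w,i) \le d(v,i)$. For a black edge $(v_{s-1},v_s)$ we use the Kruskal structure. All black edges of $J$ were added while the two sides were in different components. So when the maximum black edge $e^\ast$ on the $v$–$w$ path was processed, $v$ and $w$ were not yet connected by shorter edges. Hence $d(v,w) \ge d(e^\ast)$: otherwise the edge $\{v,w\}$ would have been considered earlier and merged them, by the cycle property of the MST. This gives $d(e^\ast) \le d(v,w) \le d(v,i) + d(i,w) \le 2d(v,i)$, so $\lambda(e^\ast) = d(e^\ast)/2 \le d(v,i)$. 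Therefore $b_T(v,i) \le d(v,i)$ and $d'(v,i) \le (2|J|-1)\,d(v,i)$. This is where the factor $1/2$ on black edges is used.

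\emph{Lower bound.} By the triangle inequality along the path,
\[ d(v,i) \le \sum_{s=1}^{r} d(v_{s-1},v_s) + d(w,i) \le 2r\, b_T(v,i) + b_T(v,i) \le (2|J|-1)\, b_T(v,i) = d'(v,i), \]
since each black edge has length at most $2b_T$, the facility edge at most $b_T$, and $r \le |J|-1$.

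The main point of care is the cycle-property step, namely that the longest black edge on the tree path between $v$ and $w$ is at most $d(v,w)$. This is the standard MST minimax-path property, applied to the full MST; the path between $v$ and $w$ lies within $J$ since $J$ is a subtree. Ties are harmless because the inequality is non-strict. If $v = w$ the black path is empty and only the facility edge contributes, and both bounds are immediate.
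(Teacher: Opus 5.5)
Your proof is correct and follows essentially the same route as the paper: the triangle inequality along the $T$-path for $d(v,i)\le d'(v,i)$, and, for the reverse inequality, the nearest-representative property $d(w,i)\le d(v,i)$ together with the bound that every black edge on the MST path from $v$ to $w$ has length at most $d(v,w)\le 2d(v,i)$. The only difference is that you justify that last bound explicitly via the MST cycle (minimax-path) property, which the paper simply asserts.
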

    \begin{proof}
Fix $v\in J$ and $i\in U_w$ for some $w\in J$, and let $b:=b_T(v,i)$. The path from $v$ to $i$ in $T$ has at most $m-1$ black edges, each of original length at most $2b$, and one facility edge, of length at most $b$. By the triangle inequality, $d(v,i)\leq(2m-1)b=d'(v,i)$.
For the reverse inequality, $d(w,i)\leq d(v,i)$ since $w$ is a closest representative to $i$. Every black edge on the MST path from $v$ to $w$ has length at most $d(v,w)\leq d(v,i)+d(i,w)\leq2d(v,i)$. Thus every auxiliary weight on the path from $v$ to $i$ is at most $d(v,i)$, giving $d'(v,i)\leq\alpha d(v,i)$.
    \end{proof}

    By losing a factor of $(2|J|-1)$ on the connection cost, we can consider the tree metric $d'$.  Furthermore, we can make the following two assumptions on the solution for the instance $\calI_J(t)$:
    \begin{enumerate}[label=(D\arabic*)]
        \item First, if some $i \in U(J)$ has $d(i, J) > L_J/2$ (notice that $d(i, J) = d(i, v)$ for the $v \in J$ with $i \in U_v$), then $i \notin S_J(t)$. This can be assumed since such $i$ will have distance more than $L_J/2$ to all demands and thus it would be better to pay the penalty for the demands than to connect them to $i$.
        \item 
Second, either all the capacities of the open facilities are used, or all the demands are connected. After (D1), every auxiliary edge on the path from any $v\in J$ to any remaining facility $i$ has weight at most $L_J/2$: this follows from Claim~\ref{claim:componentJ} for black edges, and from (D1) for the facility edge. Hence $d'(v,i)\leq\alpha L_J/2$. Whenever both unconnected demand and unused open capacity remain, we can connect them at a per-unit cost of at most $\alpha L_J/2$ instead of paying a penalty of $L_J/2$. Starting with an optimum solution in the original metric, its original connection costs increase by at most $\alpha$, and each originally penalized unit costs at most $\alpha L_J/2$ after this modification, whether it is connected or left unconnected. Thus (D1), (D2), and the metric conversion together lose only a factor of $\alpha$.
    \end{enumerate}

    As all capacities $u_i$ for $i \in U(J)$ are integers in $[n]$, the assumption in (D2) allows us to solve the modified instance precisely using dynamic programming.

    We give a more formal description of the dynamic programming now. For the sake of notations, it is convenient for us to introduce a standalone problem, give a dynamic programming for this problem, and we show how to reduce our problem to this standalone problem.

\medskip

In the standalone problem, we have a full binary tree $H=(V_H,E_H)$ rooted at $r_H$, with leaf set $\mathcal L_H$. Each vertex $v$ has a height $h_v\geq0$, where leaves have height zero and heights are non-decreasing towards the root. For a parent $v$ and child $w$, the edge length is $\omega(v,w):=h_v-h_w$. The resulting tree metric satisfies
\begin{align*}
    d_H(z,z')=2h_{\operatorname{lca}(z,z')}
    \qquad\text{for distinct leaves }z,z',
\end{align*}
where $\operatorname{lca}(z,z')$ denotes their lowest common ancestor. Each leaf $z\in\mathcal L_H$ has a capacity $u_z\in\{0,\ldots,u_{\max}\}$ and a demand $x_z\in\mathbb R_{\geq0}$. Only leaves may be opened as facilities; internal vertices have neither capacity nor demand. We are also given integers $\kappa, P\geq0$. For every $B\subseteq\mathcal L_H$, let
$u(B):=\sum_{z\in B}u_z$.

The goal is to open a set $V'\subseteq\mathcal L_H$ of at most $\kappa$ facilities with total capacity $u(V')=P$, and connect exactly $\min\{x(H),P\}$ units of demand at minimum cost, respecting the capacity of each open facility. Here $x(H)$ is the total leaf demand, and connections may be fractional. Equivalently, either all demands are connected or all open capacity is used. \medskip

The hierarchical metric ensures that there is an optimum assignment that matches as much demand as possible within every subtree before connecting to outside. More precisely, for any fixed open set $V'$ and every subtree $H_v$ rooted at $v$, we can match exactly $\min\{x(H_v),u(V'\cap\mathcal L_{H_v})\}$ units between leaves of $H_v$, where $\mathcal L_{H_v}$ is its leaf set. To see this, suppose this amount is not reached. Some demand in $H_v$ is unconnected or connected outside, and some open capacity in $H_v$ is unused or serves outside demand. Match a positive amount of this demand and capacity internally. If both were matched outside, reconnect the displaced outside demand and capacity to each other; otherwise, leave the displaced endpoint unmatched. The internal distance is at most $2h_v$, whereas the distance
between any leaf in $\mathcal L_{H_v}$ and any leaf in $\mathcal L_H\setminus\mathcal L_{H_v}$ is at least $2h_v$. Moreover, the distance between two outside leaves is at most the larger of their distances to $\mathcal L_{H_v}$, by the ultrametric inequality. Thus this exchange does not increase cost and preserves the total connected demand. The case of both endpoints being unmatched is impossible, since all demand or all open capacity is used. Applying these exchanges from the leaves upwards preserves matches already made within descendant subtrees and gives the claimed assignment.
\medskip

We give a $\poly(u_{\max},|V_H|)$-time dynamic programming algorithm to solve this problem exactly. For every $v\in V_H$, let $H_v$ be the subtree rooted at $v$, and let $x(H_v)$ be its total leaf demand. For integers $a\in\{0,\ldots,\kappa\}$ and $p\in\{0,\ldots,P\}$, define $\text{cost}(v,a,p)$ to be the minimum cost of opening exactly $a$ leaves in $H_v$ with total capacity $p$, and connecting $\min\{x(H_v),p\}$ units of demand entirely within $H_v$. Infeasible states have value $\infty$.

For a leaf $v$, the feasible states are $\text{cost}(v,0,0)=0$ and $\text{cost}(v,1,u_v)=0$, whenever their indices are in range. For an internal vertex $v$ with children $v_1,v_2$, write $X_j:=x(H_{v_j})$ for $j\in\{1,2\}$. For any split $a_1+a_2=a$ and $p_1+p_2=p$, define
\begin{align*}
    q_v(p_1,p_2):=\min\{X_1+X_2,p_1+p_2\}
       -\min\{X_1,p_1\}-\min\{X_2,p_2\}.
\end{align*}
This nonnegative quantity is exactly the amount that must be matched between the two child subtrees after each child has matched as much demand as possible internally. Each such connection costs $2h_v$ per unit, independently of its endpoints. Therefore,
\begin{align*}
    \text{cost}(v,a,p)=\min_{\substack{a_1+a_2=a\\p_1+p_2=p}}
    \Big\{&\text{cost}(v_1,a_1,p_1)+\text{cost}(v_2,a_2,p_2)
          +2h_v\,q_v(p_1,p_2)\Big\},
\end{align*}
where the minimum ranges over states with valid indices. The preceding exchange argument shows that an optimum assignment may maximize the internal matching in each child. Conversely, optimal child solutions can always be combined by matching $q_v(p_1,p_2)$ units of their residual demand and capacity. Any residual endpoints left for ancestors have the same distance to every fixed outside leaf, so their identities need not be included in the state. This proves the recurrence.

The optimum value is $\min_{0\leq a\leq\kappa}\text{cost}(r_H,a,P)$. We may cap $\kappa$ at $|\mathcal L_H|$ and declare the instance infeasible if $P>\sum_{z\in\mathcal L_H}u_z$. The table has $O(|V_H|(\kappa+1)(P+1))$ states and can be computed in $O(|V_H|(\kappa+1)^2(P+1)^2)$ arithmetic operations, which is polynomial in $u_{\max}$ and $|V_H|$. An optimal facility set and assignment are recovered by tracing the minimizing splits and matching residual demand and capacity at each merge. \medskip

We now reduce the hierarchical tree-metric problem restricted by (D1) and (D2) to the standalone problem. Let $A:=\{i\in U(J):d(i,J)\leq L_J/2\}$ be the remaining facilities, and delete the other facility vertices from $T$. Run Kruskal's algorithm on this tree using the auxiliary weights $\lambda$, breaking ties arbitrarily. Its decomposition tree $H$ has one leaf for each vertex in $J\cup A$. Initially, each singleton component is represented by its leaf. Whenever an edge $e$ merges two components, create a new parent whose two children are the nodes representing those components, and give this parent height $\alpha\lambda(e)/2$. Give all leaves height zero. Thus every node corresponds to a component that appeared during Kruskal's algorithm, and internal nodes have exactly two children. Equal merge weights simply give zero-length edges in $H$.

For any two leaves $a,b$, the merge weight at their lowest common ancestor is $b_T(a,b)$. Therefore $d_H(a,b)=\alpha b_T(a,b)=d'(a,b)$. The construction uses $2(|J|+|A|)-1$ vertices. Assign demand $x_v=x^*(U_v,C)$ and capacity zero to the leaf for each representative $v\in J$. Assign demand zero and capacity $u_i$ to the leaf for each facility $i\in A$; internal nodes have neither. Set $\kappa:=\min\{t,|A|\}$ and $u_{\max}:=\max(\{0\}\cup\{u_i:i\in A\})$. Zero-capacity leaves never need to be opened, so any standalone solution yields a set of at most $t$ original facilities.

Enumerate all integers $P\in\{0,\ldots,\sum_{i\in A}u_i\}$. For each $P$, let $S_H(P)$ denote the optimum connection cost of the standalone instance, or $\infty$ if it is infeasible. Let $X:=x^*(U(J),C)$. By (D2), exactly $\min\{X,P\}$ units are connected. Hence the optimum value of the modified problem is
\begin{align*}
    \min_{P\in\{0,\ldots,\sum_{i\in A}u_i\}}
    \left\{S_H(P)+\frac{L_J}{2}\bigl(X-\min\{X,P\}\bigr)\right\}.
\end{align*}
We select a minimizing value of $P$, recover its facility set and assignment, and use the same connections in the original metric $d$. By Lemma~\ref{lem:tree-metric}, this cannot increase cost. Since $u_i\leq n$ and $|A|\leq n$, the enumeration and dynamic programming take polynomial time. By the joint charging argument for the metric conversion and (D2), the returned solution is an $\alpha=(2|J|-1)$-approximation for $\mathcal I_J(t)$.

     Let $S_J(t)$ be the solution obtained using the dynamic programming algorithm, $o_J(t)$ be the amount of demand that is not connected in the solution (that is, we pay a penalty for $o_J(t)$ units of demand), and $\Psi_J(t)$ be the cost of the solution.  As we argued, $S_J(t)$ is a $(2|J|-1)$-approximate solution for the instance $\calI_J(t)$. So $\Psi_J(t)$ is at most $(2|J| - 1)$ times the optimum cost of $\calI_J(t)$.

\subsection{Generating Valid Constraint Using $S_J(t)$ for $\calI_J(t)$}
	
	For a given black component $J \in \calJ$ and parameter $t \geq 0$,  we can use the solution $S_J(t)$ for $\calI_J(t)$ to generate a valid constraint of form \eqref{LPC:penalty}.

	\begin{lemma} \label{valid-constraint}
		For any $J \in \calJ$  and integer $t \geq 0$,  the following is a valid constraint for the original CKM instance:
		\begin{align}
			\frac{\Psi_J(t)}{t+1} y(U(J)) + (2|J|-1) \sum_{v \in J, j \in C} g_{j, v} (\phi_j + d(v, j)) \geq \Psi_J(t). \label{equ:all-constraint-J-t}
		\end{align}
	\end{lemma}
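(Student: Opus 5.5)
Fix an integral solution $(x, y, \phi)$ of the CKM instance: an open set $S$ with $|S| \le k$ and an assignment $\sigma$. We must show that \eqref{equ:all-constraint-J-t} holds for it. Write $t' := |S \cap U(J)| = y(U(J))$. We split into two cases.

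\textbf{Case $t' \ge t+1$.} The first term is already at least $\Psi_J(t)$, since $\Psi_J(t)\ge 0$ and the second term is nonnegative. So the constraint holds.

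\textbf{Case $t' \le t$.} Here we build a feasible solution to $\calI_J(t)$ that opens the facilities $S \cap U(J)$. Its cost should be at most $\sum_{v\in J, j\in C} g_{j,v}(\phi_j + d(v,j))$. Then $\Psi_J(t) \le (2|J|-1)\cdot \opt(\calI_J(t))$ bounds $\Psi_J(t)$ by the second term, which gives the constraint.

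The construction routes demand through the clients. Representative $v$ holds $x^*(U_v,C) = \sum_j g_{j,v}$ units. Think of $g_{j,v}$ units at $v$ as originating from client $j$. Client $j$ is served in the integral solution by $i = \sigma(j)$, with $\phi_j \ge d(j,i)$.

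\begin{itemize}
\item If $i \in U(J)$, connect the $g_{j,v}$ units at $v$ to $i$. The cost is $g_{j,v}\, d(v,i) \le g_{j,v}(d(v,j) + \phi_j)$.
\item If $i \notin U(J)$, pay the penalty $g_{j,v} L_J/2$.
\end{itemize}

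Capacity is respected. Facility $i \in S\cap U(J)$ receives $\sum_{j:\sigma(j)=i}\sum_{v\in J} g_{j,v} \le |\sigma^{-1}(i)| \le u_i$, because $\sum_v g_{j,v} = 1$.

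The main obstacle is the penalty case: we need $L_J/2 \le d(v,j) + \phi_j$ when $\sigma(j) = i \in U_{w}$ with $w \notin J$. I would argue as follows. Since $i$ is in the Voronoi cell of $w$, we have $d(i,w)\le d(i,v)$. Hence
\[
L_J \le d(v,w) \le d(v,i) + d(i,w) \le 2\, d(v,i) \le 2\bigl(d(v,j) + d(j,i)\bigr) \le 2\bigl(d(v,j)+\phi_j\bigr).
\]
This gives exactly $L_J/2 \le d(v,j)+\phi_j$. The edge case $J = R$ cannot reach this branch, because then $U(J)=F$.

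Summing the per-unit costs over all $v \in J$ and $j\in C$ bounds the cost of this feasible solution by $\sum_{v\in J, j\in C} g_{j,v}(\phi_j + d(v,j))$. This completes the second case.
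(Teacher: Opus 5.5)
Your proof is correct and follows the paper's argument. The case $y(U(J))\ge t+1$ is trivial. Otherwise you build a solution to $\calI_J(t)$ that sends each $g_{j,v}$ to $\sigma(j)$ when it lies in $U(J)$ and pays the penalty when it does not, and you conclude with the $(2|J|-1)$-approximation guarantee of $S_J(t)$.

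The only differences are cosmetic. You argue directly instead of by contradiction. You also spell out two points the paper only sketches with ``otherwise $i_j$ would be in $U(J)$'': the capacity check, and the Voronoi bound $L_J\le d(v,w)\le 2d(v,i)$.
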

	\begin{proof} 
        Consider an integral solution $(x, y, \phi)$ to LP\eqref{LP:basic}. If $y(U(J))\geq t+1$, then the inequality clearly holds as the first term is already at least $\Psi_J(t)$.
        
        So we assume $y(U(J))\leq  t$.  Assume towards the contradiction that $\frac{\Psi_J(t)}{t+1} y(U(J)) + (2|J|-1) \sum_{v \in J, j \in C} g_{j, v} (\phi_j + d(v, j)) < \Psi_J(t)$. In particular, $(2|J|-1) \sum_{v \in J, j \in C} g_{j, v} (\phi_j + d(v, j)) < \Psi_J(t)$. For every $j\in C$,  let $i_j$ be the open facility connected to $j$ in the integral solution. So $x_{i_j, j}=1$. 
          
      We then construct a solution to the capacitated clustering with penalty instance $\calI_J(t)$. We open the set of facilities $i \in U(J)$ with $y_i = 1$. So, we open at most $t$ facilities. For every $v \in J$, and every $j \in C$ with $g_{j, v} > 0$, we consider how to connect the $g_{j, v}$ units of demand moved from $j$ to $v$ at the initial moving phase:
       \begin{itemize}
            \item If $i_j\in U(J)$, we connect the demand to $i_j$. In this case, the connection cost for the demand is $g_{j, v} d(v,i_j) \leq g_{j, v}(d(v, j)+d(j, i_j))\leq g_{j, v}(\phi_j + d(v, j)).$ The first inequality holds by the triangle inequality, and the second by \eqref{LPC:phi-j}.
             \item If $i_j\notin U(J)$, we treat the demand as unconnected, and pay a penalty of $g_{j, v} L_J/2$. Notice that $d(v, i_j) \geq L_J/2 = d(J, R \setminus J)/2$ since otherwise $i_j$ would be in $U(J)$.  So, the penalty we pay is at most $g_{j, v} d(v, i_j) \leq g_{j, v} (\phi_j + d(v, j))$; again, we used the triangle inequality and \eqref{LPC:phi-j}.
        \end{itemize}
        So, the cost of the solution is at most $\sum_{v \in J, j \in C} g_{j, v} (\phi_j + d(v, j)) < \frac{\Psi_J(t)}{(2|J|-1)}$, which contradicts the fact that $S_J(t)$ is a $(2|J| - 1)$-approximate solution to $\calI_J(t)$.
    \end{proof}

    Notice that the inequality in Lemma~\ref{valid-constraint} can be written as an inequality of form \eqref{LPC:penalty}.  We can check if our $(x^*, y^*, \phi^*)$ satisfies the constraint for every $J \in \calJ$ and integer $t \in [0, 2|U(J)|]$. If not, we return a constraint that is violated.  So from now on, we assume $(x^*, y^*, \phi^*)$  satisfies all these constraints. As we shall see later, for our rounding algorithm to succeed, we only need a small subset of these constraints to hold, but we can afford to check all of them. 
    
    So far, we have finished the description of the algorithm that is common to both the cardinality violation and capacity violation settings.  We describe the two algorithms separately in the next two sections. As we mentioned, the algorithms follow a similar framework: we partition the black components into groups, construct local solutions for black components, and use an auxiliary LP to obtain a fractional solution with only a few fractional facility variables, and round these variables to integral with a $(1+\epsilon)$-violation. 
    \section{Algorithm for $(1+\epsilon)$-Cardinality-Violation Setting}
\label{sec:Cardinality-Violation}
In this section, we describe the algorithm for the cardinality-violation setting. 
 Our algorithm builds on the gray-forest grouping framework
of \cite{Li16}. We partition the black components into groups
and round each group independently. The main difference
lies in the construction and analysis of the local solutions.
We use the penalty solutions $S_J(t_J)$ from Section \ref{subec:Penalty-Problem}
to define the local opening vectors and the auxiliary
LP. Their costs are bounded using the valid
inequalities \eqref{equ:all-constraint-J-t}, rather than the configuration-LP
constraints used in \cite{Li16}.

\subsection{Partition Gray Forest into Groups of Weight $\Theta(\ell)$} 
We partition each gray tree $\tau$ in the gray forest into groups of weight $\Theta(\ell)$, using a bottom-up greedy approach similar to that in \cite{Li16}. (The only exception is that the root of $\tau$ may be a singleton group, with weight much larger than $\Theta(\ell)$.)   Focus on such a tree $\tau$. In each iteration, select the deepest node  (black component) $J$ in $\tau$ such that the total weight of all descendants of $J$ is at least $\ell$; recall that the weight of a black component $J'$ is defined as $y^*(U(J'))$. As the root of $\tau$ has weight at least $\ell$, such $J$ exists.

By Property \ref{le:weight-J}, every non-root node in $\tau$ has weight less than $\ell$. By the choice of $J$, the total weight of the subtree rooted at any child of $J$ is less than $\ell$.  If the total weight of all these subtrees is at most $\ell$, then we take $G$ to be the set of black components in the whole sub-tree rooted at $J$ and let $\mathrm{root}_G = J$. Otherwise, a greedy procedure can select a subset $\calJ'$ of children  of $J$ such that the total weight of subtrees rooted at $\calJ'$ is between $\ell$ and $2\ell$. In this case, we let $G$ contain the black components in these subtrees, and define $\mathrm{root}_G$ to be child $J' \in \calJ'$ with the smallest $L_{J'}$.  Notice that in this case $G$ is not connected in $\tau$, as we did not include $J$ in $G$.

In both cases, we remove $G$ from the tree $\tau$ and repeat until $\tau$ becomes empty. We define $\mathcal G$ as the set of all groups we constructed, over all trees $\tau$ in the gray forest.  Clearly, $\mathcal{G}$ is a partition of $\calJ$. The total weight of black components in any $G$ is at least $\ell$.

 \begin{restatable}{lemma}{lemmagroups}\label{lemmagroups}
	For every $G \in \calG$, every $J \in G$ and $v \in J$, the distance between $v$ and any representative in $\mathrm{root}_G$ is at most $O(\ell) \cdot L_J$.  \label{amount-un-contracting}
\end{restatable}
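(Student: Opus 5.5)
The plan is to bound $d(v,w)$, for $w\in\mathrm{root}_G$, by the length of a walk in the MST (black and gray edges) from $v$ to $w$. I will show that this walk uses only $O(\ell)$ edges and that each edge has length at most $L_J$.

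\textbf{Size bounds.} Every black component $J'$ has weight less than $2\ell$. Indeed, $J'$ is formed by black merges, and each black merge joins two light sets, each of weight less than $\ell$. By Property~\ref{opening-of-U_v}, every representative has weight at least $1/2$, so $|J'|\le 4\ell$. This holds even for the heavy root of a gray tree.

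Within $J'$, Claim~\ref{claim:componentJ} gives black edges of length at most $L_{J'}$. Hence any two representatives of $J'$ are at distance at most $(|J'|-1)L_{J'}=O(\ell)L_{J'}$.

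Next, the groups are small. In the first case of the grouping, the subtrees of the children of $J$ have total weight at most $\ell$, and $J$ itself has weight less than $2\ell$. In the second case, $G$ has weight at most $2\ell$. So in both cases $G$ has total weight $O(\ell)$, and therefore $G$ contains $O(\ell)$ components and $O(\ell)$ representatives in total.

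\textbf{Case 1: $\mathrm{root}_G$ is an ancestor of $J$ inside $G$.} Follow the gray path $J=J_0,J_1,\dots,J_s=\mathrm{root}_G$. Each hop from $J_a$ to $J_{a+1}$ is a gray edge of length $L_{J_a}$ (Property~\ref{length-gray}) between some representative of $J_a$ and some representative of $J_{a+1}$. Inside each $J_a$, connect the entry and exit representatives along black edges of length at most $L_{J_a}$.

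By Property~\ref{lem:leaf-to-root-original-gray}, $L_{J_a}\le L_J$ for all $a$. All components on the path lie in $G$, so the total number of edges used is at most the number of representatives of $G$ plus $s$, which is $O(\ell)$. This gives $d(v,w)\le O(\ell)L_J$.

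\textbf{Case 2: $\mathrm{root}_G=J'$ is a child of a node $J^\circ\notin G$.} Here $J$ lies in the subtree of some $J''\in\calJ'$.

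1. Walk from $v$ up to $J''$ exactly as in Case 1. All lengths are at most $L_J$, and the walk uses $O(\ell)$ edges.
2. Take the gray edge from $J''$ to $J^\circ$, of length $L_{J''}\le L_J$.
3. Cross $J^\circ$ along black edges. There are at most $|J^\circ|-1\le 4\ell$ of them, each of length at most $L_{J^\circ}\le L_{J''}\le L_J$, again by the monotonicity property.
4. Take the gray edge from $J^\circ$ down to $J'$, of length $L_{J'}\le L_{J''}$. This uses that $\mathrm{root}_G$ was chosen with the smallest $L$-value among $\calJ'$.
5. Move inside $J'$ to $w$, with cost at most $4\ell L_{J'}$.

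Summing the pieces gives $O(\ell)L_J$.

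\textbf{Main obstacle.} The step I expect to need the most care is step 3, crossing $J^\circ$. That component is not in $G$ and may be the heavy root of the gray tree. The fix is the observation from the size bounds: even heavy black components have weight less than $2\ell$, hence $O(\ell)$ representatives. One also has to check that $L_{J^\circ}\le L_{J''}$ still holds when $J^\circ$ is the root, since the monotonicity property is stated along leaf-to-root paths, which include the root.
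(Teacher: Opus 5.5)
Your argument is the paper's. You walk along the MST from $v$ through the black components on the gray-tree path, including the selected node $J^\circ$ in the second grouping case. Every black and gray edge on that walk has length at most $L_J$, by the black-edge bound, the gray-edge-length property, leaf-to-root monotonicity, and the choice of $\mathrm{root}_G$ as the minimum-$L$ child. You then count $O(\ell)$ edges.

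Your size bound, however, rests on a false claim: not every black component has weight less than $2\ell$. A singleton component $\{v\}$ never takes part in a black merge, and $y^*(U_v)$ can be far larger than $2\ell$. So the heavy root of a gray tree can be a singleton of arbitrarily large weight. In that situation two of your statements fail: that $G$ has total weight $O(\ell)$ in the first grouping case, and the justification you give for step~3.

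What you actually need still holds, via the dichotomy the paper uses when it treats the case where the selected node is the gray-tree root separately. A component with $|J'|\ge 2$ has weight less than $2\ell$, because its last black merge joined two light sets. A singleton contributes one representative and has no black edges to cross. Hence $|J'|\le 4\ell$ for every component, and $G$ together with the selected node contains $O(\ell)$ representatives. With this correction your proof goes through.
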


       \subsection{Construction of Local Solutions for Black Components $J \in \calJ$}
       
       	For every black component $J \in \calJ$, we define $t_J:= \lfloor (1+1/\ell)y^*(U(J))\rfloor$. We shall only use the capacitated clustering with penalty instance $\calI_J(t_J)$ for each $J$. For convenience, we use $\calI_J, S_J, o_J$ and $\Psi_J$ for $\calI_J(t_J), S_J(t_J), o_J(t_J)$ and $\Psi_J(t_J)$ respectively. 
       	
       Fix a black component $J \in \calJ$, and let $S_J$ be the $(2|J|-1)$-approximate solution to the instance $\calI_J$ with cost  $\Psi_J$. In this solution, we construct a local solution $\bar y \in [0, 1]^{U(J)}$ for this $J$, along with a per-unit cost vector $b \in \R_{\geq 0}^{U(J)}$.

       For every $i \in U(J)$, we define $u'_i := \frac{x^*(i, C)}{y^*_i} \leq u_i$ to be the ``effective capacity'' of $i$, which is the capacity used by the LP solution $(x^*, y^*, \phi^*)$. $J$ is one of the following two types:
       \begin{enumerate}[label=(\arabic*)]        
   		\item if $\displaystyle {o_J}\leq \frac{1}{\ell}\cdot x^*(U(J) \setminus S_J, C)$, which is equivalent to $x^*(U(J) \setminus S_J, C) \geq \ell {o_J}$. 
           In this case, we let $\bar y_i = 1$ for every $i \in S_J$, and 
   		$\bar y_i = \frac{{o_J}}{x^*(U(J) \setminus S_J, C)} \cdot y^*_i \leq \frac1\ell \cdot y^*_i$ for every $i \in U(J) \setminus S_J$. That is, $\bar y$ is the indicator vector for $S_J$, plus $\frac{{o_J}}{x^*(U(J) \setminus S_J, C)} \leq \frac 1\ell$ times $y^*$ restricted to $U(J) \setminus S_J$.   We let $S'_J = S_J$ in this case.
           
   		\item  if $\displaystyle {o_J}>\frac{1}{\ell}\cdot x^*(U(J) \setminus S_J, C)$. We let $S'_J = \{i \in S_J: y^*_i \geq 1 - \frac1\ell\} \subseteq S_J$.
             Then, we define $\bar y_i = 1$ for every $i \in S'_J$ and $\bar y_i = y^*_i$ for every $i \in U(J) \setminus S'_J$.  So, $\bar y$ is $y^*$ with values for $S'_J$ scaled up to $1$. 
   	\end{enumerate}

    In both cases we have $S'_J = \{i \in S_J:\bar y_i =1\}$. Then for both cases, we define $b_i = 0$ for every $i \in S'_J$, and $b_i = d(i, J) + L_J$ for every $i \in U(J) \setminus S'_J$.

   	\begin{lemma}
       The following inequalities hold:
       \begin{enumerate}[label=(\thelemma\alph*),leftmargin=*]
           \item $\bar y(U(J)) \leq (1 + \frac2\ell) y^*(U(J))$. \label{facility-open} 
           \item $\sum_{i \in U(J)} u'_i b_i \bar y_i \leq O(1) \cdot D(U(J)) + O(\ell^2) \cdot \Psi_J$.\label{conec-cost}
   		\end{enumerate}			
   	\end{lemma}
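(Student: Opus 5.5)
We prove the two parts separately. Part \ref{facility-open} is direct counting. Part \ref{conec-cost} splits the sum into a distance part, charged to $D$ through Lemma~\ref{lemma:x*iCdiv}, and an $L_J$ part, charged to the penalty cost of $S_J$. One piece of the $L_J$ part, in type (2), I do not yet know how to bound; it is flagged as the main obstacle below.

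\textbf{Part \ref{facility-open}.} In type (1), $\bar y(U(J)) = |S_J| + \frac{o_J}{x^*(U(J)\setminus S_J, C)}\, y^*(U(J)\setminus S_J)$. Since $S_J$ is a solution of $\calI_J(t_J)$, we have $|S_J| \le t_J \le (1+\frac1\ell) y^*(U(J))$. The defining inequality of type (1) makes the scaling factor at most $\frac1\ell$, so the second term is at most $\frac1\ell y^*(U(J))$. The total is at most $(1+\frac2\ell)y^*(U(J))$.

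In type (2), $\bar y(U(J)) - y^*(U(J)) = \sum_{i\in S'_J}(1-y^*_i) \le |S'_J|/\ell$. Every $i\in S'_J$ has $y^*_i \ge 1-\frac1\ell$, so $|S'_J| \le y^*(U(J))/(1-\frac1\ell)$. The increase is therefore at most $y^*(U(J))/(\ell-1) \le \frac2\ell y^*(U(J))$ for $\ell\ge 2$.

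\textbf{Part \ref{conec-cost}.} Only $i\notin S'_J$ contribute, because $b_i=0$ on $S'_J$. For these $i$ we have $\bar y_i \le y^*_i$, hence $u'_i\bar y_i \le x^*(i,C)$. Write $b_i = d(i,J) + L_J$ and treat the two summands separately.

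\emph{Distance term.} For $i\in U_v$ we have $d(i,J)=d(i,v)$. Summing $x^*(i,C)\,d(i,v)$ over $v\in J$ and $i \in U_v$ gives $O(1)\cdot D(U(J))$ by Lemma~\ref{lemma:x*iCdiv}.

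\emph{$L_J$ term.} It equals $L_J\sum_{i\notin S'_J} u'_i\bar y_i$. The tool is that $S_J$ pays penalty $L_J/2$ for each of its $o_J$ unconnected units, so $o_J L_J \le 2\Psi_J$.

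In type (1), $S'_J=S_J$ and $\sum_{i\notin S_J} u'_i \bar y_i = \frac{o_J}{x^*(U(J)\setminus S_J,C)}\, x^*(U(J)\setminus S_J,C) = o_J$. The term is at most $2\Psi_J$.

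In type (2), split $U(J)\setminus S'_J$ into $U(J)\setminus S_J$ and $S_J\setminus S'_J$. On the first set the contribution is $L_J\, x^*(U(J)\setminus S_J,C) < \ell\, o_J L_J \le 2\ell \Psi_J$, by the type (2) condition.

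\textbf{Main obstacle: the facilities in $S_J\setminus S'_J$.} These satisfy $y^*_i < 1-\frac1\ell$, and we must bound $L_J\, x^*(S_J\setminus S'_J, C)$ by $O(\ell^2)\Psi_J + O(1)D(U(J))$. The penalty inequality $o_J L_J\le 2\Psi_J$ does not control this LP demand directly, and a plain per-facility charge to $D(U(J))$ fails: demand can be co-located with such a facility at zero distance.

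I would first try to construct from $S_J$ a competing solution of $\calI_J(t_J)$ and use the $(2|J|-1)$-approximation guarantee of $S_J$. The idea is to reroute the LP demand assigned to $S_J\setminus S'_J$ using the slack $1-y^*_i\ge \frac1\ell$ that each such facility leaves in the LP. Each unit of LP demand placed there comes with at least a $\frac1\ell$ fraction of "missing opening," and against this we would pay $L_J$ in penalty. That exchange should give $L_J\, x^*(S_J\setminus S'_J,C) \le O(\ell)\bigl(\ell\,o_J L_J + O(1) D(U(J))\bigr)$, which yields the $O(\ell^2)\Psi_J$ bound.

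If this exchange does not close, the fallback is the valid inequality \eqref{equ:all-constraint-J-t} at $t=t_J$, which the LP solution satisfies. Because $t_J+1 > (1+\frac1\ell)y^*(U(J))$, it gives $\Psi_J \le O(\ell|J|)\sum_{v\in J,j}g_{j,v}(\phi^*_j + d(v,j))$. This relates $\Psi_J$ to the LP's local cost via Lemma~\ref{lem:representative}. However, it bounds $\Psi_J$ from above, not below, so it is useful here only in combination with an exchange argument of the kind above.
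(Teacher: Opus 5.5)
Your part (a), the distance term, the type-(1) $L_J$ term, and the $U(J)\setminus S_J$ piece of type (2) are correct and match the paper. The proof is nonetheless incomplete, and the gap is exactly the step you flagged. The exchange argument you sketch points the wrong way. Building a competing solution of $\calI_J(t_J)$ and invoking the $(2|J|-1)$-approximation guarantee of $S_J$ only yields an \emph{upper} bound on $\Psi_J$. Here you need the opposite: you must bound the LP quantity $L_J\,x^*(S_J\setminus S'_J,C)$ \emph{by} $\Psi_J$, in fact by $o_J L_J$. The valid inequality has the same wrong orientation, as you note yourself. This step uses no optimality or approximation property of $S_J$ at all.

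The missing idea is the structural property (D2) of $S_J$, combined with conservation of demand. In type (2) you have $o_J > \frac1\ell x^*(U(J)\setminus S_J,C)\ge 0$, so some demand is left unconnected. By (D2), every facility of $S_J$ is therefore saturated, which gives $x^*(U(J),C)=o_J+\sum_{i\in S_J}u_i$. Subtract $x^*(S_J,C)=\sum_{i\in S_J}u'_iy^*_i$, use $u'_i\le u_i$, and drop the nonnegative terms for $S'_J$ and $o_J$:
\begin{align*}
x^*(U(J)\setminus S_J,C)\;\ge\; \sum_{i\in S_J\setminus S'_J}(1-y^*_i)\,u'_i\;\ge\;\frac{1}{\ell-1}\sum_{i\in S_J\setminus S'_J}y^*_i\,u'_i\;=\;\frac{1}{\ell-1}\,x^*(S_J\setminus S'_J,C).
\end{align*}
The middle inequality uses $y^*_i<1-\frac1\ell$ on $S_J\setminus S'_J$. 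Combining this with the type-(2) condition $x^*(U(J)\setminus S_J,C)<\ell o_J$ gives $x^*(S_J\setminus S'_J,C)\le \ell(\ell-1)o_J$. Hence the whole $L_J$ term is at most $\ell^2 o_J L_J\le 2\ell^2\Psi_J$.

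Intuitively, $S_J$ fills each low-$y^*$ facility in $S_J$ completely, while the LP uses only a $y^*_i$ fraction of it. The LP must therefore place a proportional amount of demand on $U(J)\setminus S_J$, and that amount is already controlled by $\ell o_J$.
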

   	\begin{proof}
       We first prove \ref{facility-open}. We consider the two cases (1) and (2) above. For type-(1) component $J$,   we have $\bar y(U(J)) \leq |S_J| + \frac1\ell y^*(U(J) \setminus S_J) \leq (1 + \frac{1}{\ell})y^*(U(J)) + \frac{1}{\ell} y^*(U(J)) = (1 + \frac2\ell)y^*(U(J))$. For type-(2) component $J$,
       $\bar y(U(J))=|S_J'|+y^*(U(J)\setminus S_J')\leq (1+\frac{1}{\ell-1})y^*(S_J')+y^*(U(J)\setminus S_J')=(1+\frac{1}{\ell-1})y^*(U(J))$. The first inequality holds by $y^*_i\geq 1-\frac{1}{\ell}$ for each $i\in S_J'$.

       It remains to consider \ref{conec-cost}.
       For type-(1) component $J$, we have
       \begin{align*}
           \sum_{i \in U(J)} u'_i b_i \bar y_i &=\frac{o_J} {x^*(U(J) \setminus S_J, C)} \cdot \sum_{i \in U(J) \setminus S_J}  u'_i y^*_i (d(i, J) +  L_J)\\
           &\leq \frac{1}{\ell}\cdot \sum_{i\in U(J)\setminus S_J}u'_i y_i^* d(i, J)+  {o_J} L_J \\
           &= \frac1\ell\sum_{i\in U(J) \setminus S_J}x^*_{i, C} d(i, J)+ o_J L_J\leq O(\frac1\ell)\cdot D(U(J)) + 2 \cdot \Psi_J.
       \end{align*}
       Recall that $u'_i y^*_i = x^*_{i, C}$ for every $i \in U(J) \setminus S_J$. The last inequality follows from Lemma~\ref{lemma:x*iCdiv} and $\Psi_J\geq  {o_J}\cdot \frac{L_J}{2}$.
       
    For type-(2) component $J$, we have 
           $\sum_{i \in U(J)}u'_i b_i \bar y_i = \sum_{i \in U(J)\setminus S'_J} u'_i y^*_i (d(i, J)+   L_J) \leq O(1) \cdot D(U(J)) +  x^*(U(J) \setminus S'_J, C) \cdot L_J$. We shall bound $x^*(U(J) \setminus S'_J, C)$. As $J$ is of type-(2), we have $o_J > 0$, and $x^*(U(J) \setminus S_J, C) < \ell o_J$. By our assumption on the solution $S_J$, all capacities in $S_J$ are fully used in the solution $S_J$ for $\calI_J$. Therefore, we have 
           \begin{align*}
               x^*(U(J) \setminus S_J, C) &\geq o_J + \sum_{i \in S_J \setminus S'_J} (1 - y^*_i) u_i \geq \sum_{i \in S_J \setminus S'_J} (1 - y^*_i) u'_i \geq \frac{1}{\ell-1} \sum_{i \in S_J \setminus S'_J} y^*_i u'_i \\
               &= \frac{1}{\ell-1} \cdot x^*(S_J \setminus S'_J, C). 
           \end{align*}
           
           So, we have $x^*(S_J \setminus S'_J, C) \leq (\ell-1) x^*(U(J) \setminus S_J, C) \leq \ell(\ell-1)o_J$. Overall, we have $x^*(U(J) \setminus S'_J, C) \leq \ell^2 o_J$. So, we have $\sum_{i \in U(J)} u'_i b_i \bar y_i \leq O(1)D(U(J)) + \ell^2 o_JL_J \leq O(1)D(U(J)) + O(\ell^2) \cdot \Psi_J$. This finishes the proof of \ref{conec-cost} and thus the whole lemma.
   	\end{proof}

\subsection{Rounding Procedure for Each Group $G \in \mathcal G$} Our rounding algorithm handles each group $G \in \mathcal G$ independently.  We focus on a fixed $G \in \mathcal G$. Recall that every black component $J \in G$, we defined two vectors $\bar y \in [0, 1]^{U(J)}$ and $b \in \R_{\geq 0}^{U(J)}$.  We then let $\bar y$ and $b$ be the concatenation of all black components $J \in G$. So, the domain of both $\bar y$ and $b$ is $U(G) := \union_{J \in G}U(J)$.

    Consider the following LP with variables $\{\tilde y_i\}_{i\in U(G)}$:
    \begin{align}
        \min \qquad  \sum_{i \in U(G)}  u'_i b_i \tilde y_i\label{LP:round-bar-y}
\end{align}\vspace*{-30pt}

\begin{align}
       \tilde y(U(G)) \leq \bar y(U(G));  \qquad \sum_{J \in G}\sum_{i \in U(J) \setminus S'_J}  u'_i \tilde y_i \geq \sum_{J \in G}\sum_{i \in U(J) \setminus S'_J}  u'_i \bar y_i; \nonumber \\
       \tilde y_i\in [0, 1], \forall i\in U(G); \qquad  \tilde y_i = 1, \text{ if } \bar y_i = 1.\nonumber
    \end{align}
    By setting $\tilde y_i = \bar y_i$,  we obtain a solution to LP(\ref{LP:round-bar-y}) of value
 $\sum_{i \in U(G)} \tilde y_i u'_i b_i \leq \sum_{i \in U(G)} \bar y_i u'_i b_i\leq \sum_{J\in G, i\in U(J)}(x({i, C}^*) d(i, J)+4\ell^2 \cdot \Psi_J)$.   Fix on such an optimum vertex-point solution $\tilde y$ of LP(\ref{LP:round-bar-y}). Since there are only two non-box-constraints, $\tilde y$ has at most two fractional values. We then open all facilities $i \in U(G)$ with $\tilde y_i > 0$.

     We now reassign demands to these finally opened facilities. We define the root representative of $G$ to be any representative in the root black component of $G$. The whole process consists of  the following  two steps.
    \begin{itemize}
        \item Sending step. For each black component $J\in G$, we consider the solution $S_J$ for the capacitated clustering with penalty instance $\calI_J$.  Recall that $S'_J  = S_J$ if $J$ is of type-(1) and $S'_J \subseteq S_J$ if $J$ is of type-(2). Then we pre-connect demands at $J$ to $S'_J$ according to the solution $S_J$.  For the demands that are not connected, we send them to the designated root representative of $G$.  Notice that every $i \in S'_J$ has $\tilde y_i = \bar y_i = 1$, and its full capacity $u_i \geq u'_i \geq x^*(i, C)$ is used in the local solution, unless $o_J = 0$. So, for a $J \in G$, the amount of demands sent from $J$ to the root is  at most $x^*(U(J) \setminus S'_J, C)$.
        \item Receiving step. For the demands sent to the root representative in the previous step, we send them to the open facilities in $\union_{J \in G}U(J)$. For every $J \in G$, only facilities in $U(J)\setminus S'_J$ are allowed to receive demand, and such a facility $i$ receives at most $u'_i\tilde y_i$ units of demand from the root representative.  This is the only requirement for the receiving step. 

        So, the total amount of demands sent to the root is at most 
        \begin{align*}
            \sum_{J \in G} \sum_{i \in U(J) \setminus S'_J}u'_i \bar y_i \leq \sum_{J \in G} \sum_{i \in U(J) \setminus S'_J}u'_i \tilde y_i.
        \end{align*}
        Therefore, we have enough capacity to receive the demands sent to the root representative.
    \end{itemize}

\begin{lemma}
    The connection cost incurred by the sending and receiving steps is at most $O(\ell) \cdot D(U(G)) + O(\ell^3) \cdot \sum_{J \in G} \Psi_J$.
\end{lemma}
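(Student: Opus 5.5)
The plan is to bound three cost contributions separately: the pre-connections inside each $J$ along $S_J$, the transport of unserved demand from $J$ to the root representative of $G$, and the transport from the root representative to the receiving facilities.

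\emph{Pre-connection cost.} For each $J \in G$, the demand connected to $S'_J$ follows the solution $S_J$ of $\calI_J$. Its connection cost is therefore at most the connection part of $\Psi_J$, which is at most $\Psi_J$. Restricting from $S_J$ to $S'_J$ only removes connections, so summing over $J$ gives $\sum_{J\in G}\Psi_J$.

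\emph{Sending cost.} The amount sent from $J$ to the root is at most $x^*(U(J)\setminus S'_J, C)$. In type (1), $S'_J = S_J$, and the unconnected amount is exactly $o_J$. In type (2), the proof of \ref{conec-cost} showed $x^*(U(J)\setminus S'_J, C)\le \ell^2 o_J$; this also bounds the amount disconnected from $S_J\setminus S'_J$ plus $o_J$. In either case at most $\ell^2 o_J$ units leave $J$, each from some $v\in J$. By Lemma~\ref{lemmagroups}, each unit travels distance at most $O(\ell)L_J$. So the sending cost of $J$ is at most $O(\ell)\cdot \ell^2 o_J L_J \le O(\ell^3)\Psi_J$, using $\Psi_J \ge o_J L_J/2$.

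\emph{Receiving cost.} A facility $i\in U(J)\setminus S'_J$ receives at most $u'_i\tilde y_i$ units from the root representative. By the triangle inequality, the per-unit cost is at most $d(\mathrm{root}, v) + d(v,i)$, where $v\in J$ is the representative with $i\in U_v$. Lemma~\ref{lemmagroups} gives $d(\mathrm{root}, v) \le O(\ell)L_J$, and $d(v,i)=d(i,J)$. Hence the per-unit cost is at most $O(\ell)(d(i,J)+L_J)=O(\ell)b_i$. The total receiving cost is then at most $O(\ell)\sum_{i\in U(G)} u'_i b_i\tilde y_i$.

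To bound this sum I will use the optimality of $\tilde y$ for LP\eqref{LP:round-bar-y}. Since $\bar y$ is feasible, the sum is at most $\sum_i u'_i b_i \bar y_i$. By \ref{conec-cost} applied to each $J$, that is at most $O(1)D(U(G)) + O(\ell^2)\sum_J\Psi_J$. Multiplying by $O(\ell)$ gives $O(\ell)D(U(G))+O(\ell^3)\sum_J \Psi_J$.

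One caveat: facilities with $\tilde y_i>0$ are opened fully, but they receive only $u'_i\tilde y_i$, so using $\tilde y$ in the cost bound is legitimate.

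\emph{Main obstacle.} The step needing most care is the sending bound for type-(2) components. There, the demand that $S_J$ connected to $S_J\setminus S'_J$ must also be rerouted to the root, and its cost is not controlled by $\Psi_J$'s connection part alone. I will rely on the chain $x^*(S_J\setminus S'_J, C)\le \ell(\ell-1)o_J$ from the preceding proof. I also need that the amount $S_J$ assigned to $S_J\setminus S'_J$ is at most $\sum_{i\in S_J\setminus S'_J}u_i$, which should be compared with that $x^*$ quantity. If this comparison fails, I would instead bound the sent amount directly by the receiving capacity $\sum_{i\in U(J)\setminus S'_J}u'_i\bar y_i = x^*(U(J)\setminus S'_J,C)$. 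That works because total demand at $J$ equals $x^*(U(J),C)$, and $S'_J$ absorbs at least $x^*(S'_J,C)$ of it unless $o_J=0$, as noted in the sending step.
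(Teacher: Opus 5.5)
Your proposal is correct and follows the paper's proof essentially step for step. Pre-connection along $S_J$ costs at most $\Psi_J$; at most $x^*(U(J)\setminus S'_J,C)\le \ell^2 o_J$ units leave $J$ at per-unit cost $O(\ell)L_J$, giving $O(\ell^3)\Psi_J$ via $o_JL_J\le 2\Psi_J$; and receiving costs at most $O(\ell)\sum_i u'_ib_i\tilde y_i\le O(\ell)\sum_i u'_ib_i\bar y_i$, which Lemma~\ref{conec-cost} bounds. Your fallback justification in the last paragraph (the capacities of $S'_J$ are fully used when $o_J>0$, and $u_i\ge x^*(i,C)$) is exactly the paper's, and it is the one to use, since the first comparison you suggest, of $\sum_{i\in S_J\setminus S'_J}u_i$ with $x^*(S_J\setminus S'_J,C)$, goes the wrong way in general.
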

  \begin{proof}
    Consider the sending step for each $J \in G$. The cost of local matching of demands according to the solution $S_J$ is at most $\Psi_J$.  Let $s_J$ be the amount of demand sent from $J$ to
the root representative after local pre-connection.
If $J$ is of type-(1), then $s_J=o_J$.
If $J$ is of type-(2), all facilities in $S'_J$ are
fully utilized, so $s_J\le x^*(U(J)\setminus S'_J,C)
\leq \ell^2o_J$. By Lemma~\ref{amount-un-contracting} and $L_Jo_J\le2\Psi_J$, the sending cost
is at most $O(\ell)L_Js_J\le O(\ell^3)\Psi_J.$
    The cost of $U(J) \setminus S'_J$  receiving demands from the root representative  is at most 
    \begin{align*}
        \sum_{i \in U(J) \setminus S'_J} \tilde y_i u'_i (O(\ell) \cdot L_J + d(i, J)) \leq O(\ell)\cdot \sum_{i \in U(J) \setminus S'_J} \tilde y_i u'_i b_i.
    \end{align*}

    Now we consider all black components $J \in G$. The cost of receiving the demands is at most 
    \begin{align*}
        &\quad O(\ell) \cdot \sum_{J \in G} \sum_{i \in U(J) \setminus S'_J} \tilde y_i u'_i b_i \leq O(\ell) \cdot \sum_{J \in G} \sum_{i \in U(J) \setminus S'_J} \bar y_i u'_i b_i \\
        &\leq O(\ell) \cdot D(U(G)) + O(\ell^3) \cdot \sum_{J \in G} \Psi_J.
    \end{align*}
    The first inequality follows from the optimality of
$\widetilde y$ for LP\eqref{LP:round-bar-y} and the feasibility of $\bar y$.
The second inequality follows from Lemma~\ref{conec-cost}.
   Summing the local pre-connection and sending costs over $J\in G$ gives at most $O(\ell^3)\sum_{J\in G}\Psi_J$.
  Combining this with the receiving-cost bound, the total
cost is at most $O(\ell)D(U(G))+O(\ell^3)\sum_{J\in G}\Psi_J.$
    \end{proof}

Then, we wrap up the analysis of cost by considering all groups $G$, as well as the cost incurred by initial moving. The total cost is at most
\begin{align*}
    O(\ell) D(F) + O(\ell^3)\cdot \sum_{J}\Psi_J \leq O(\ell^5) \sum_{j \in C}\phi^*_j.
\end{align*}
To see this, we just need to bound $\sum_J \Psi_J$. The inequality in Lemma~\ref{valid-constraint} holds for $(x^*, y^*, \phi^*)$, and every black component $J$. We add up the inequality over all $J$. We obtain 
\begin{align*}
    \sum_{J} \Psi_J &\leq O(\ell^2) \sum_J\sum_{v \in J, j \in C} g_{j, v}(\phi^*_j + d(v, j)) = O(\ell^2) \cdot \left(\sum_{j \in C} \phi^*_j  + (\text{cost of initial moving})\right) \\
    &\leq O(\ell^2) \cdot \sum_{j \in C}\phi^*_j.
\end{align*}
The equality holds because $\sum_{v}g_{j,v} = 1$ for every $j \in C$. Therefore, $\sum_{J}\Psi_J \leq O(\ell^2)\cdot \sum_{j \in C}\phi^*_j$. \medskip

The following lemma gives an upper bound on the number of open facilities:
\begin{lemma}
    The total number of open facilities is at most $(1+\frac{4}{\ell})\cdot 
    k$. 
\end{lemma}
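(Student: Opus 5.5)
The count is done group by group and then summed. In a group $G$, a facility is opened exactly when $\tilde y_i>0$. The vertex solution $\tilde y$ of LP\eqref{LP:round-bar-y} has at most two fractional coordinates, so the number of facilities opened in $G$ is at most $\tilde y(U(G)) + 2$. The first constraint of the LP gives $\tilde y(U(G)) \le \bar y(U(G))$. Summing Lemma~\ref{facility-open} over $J\in G$ gives $\bar y(U(G)) \le (1+\frac{2}{\ell})\, y^*(U(G))$. Hence the number of facilities opened in $G$ is at most $(1+\frac2\ell)y^*(U(G)) + 2$.

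The additive $2$ is absorbed using the group weight. By construction every group has $y^*(U(G)) \ge \ell$; this includes a singleton root group, since the root is heavy by Property~\ref{le:weight-J}. Therefore $2 \le \frac{2}{\ell}\, y^*(U(G))$, and the count for $G$ is at most $(1+\frac4\ell)\,y^*(U(G))$. Since $\calG$ partitions $\calJ$, which partitions $R$, the sets $U(G)$ partition $F$. Summing over $G\in\calG$ gives a total of at most $(1+\frac4\ell)\,y^*(F) = (1+\frac4\ell)k$, using the normalization $\sum_i y^*_i = k$.

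I expect the main obstacle to be justifying that every group really has weight at least $\ell$, in particular the last groups peeled off a gray tree $\tau$. The greedy procedure always picks the deepest $J$ whose subtree has weight at least $\ell$. The one case needing care is when the remaining tree has total weight below $\ell$ yet is nonempty. To rule this out, I would argue that the root of $\tau$ is never removed until the final step, when the whole remaining tree is taken as one group. That final group contains the heavy root, so its weight is at least $\ell$. Every group formed before that has weight at least $\ell$ by the greedy selection.

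A secondary point is that facilities in different groups are distinct, because the $U(G)$ are disjoint. So the count is of distinct facilities, consistent with the hard-capacitated claim.
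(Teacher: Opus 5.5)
Your proposal is correct and is essentially the paper's proof. Both arguments bound the facilities opened in a group $G$ by $\tilde y(U(G))+2 \le (1+\frac{2}{\ell})\,y^*(U(G))+2$, using the vertex solution with at most two fractional coordinates, the first constraint of the auxiliary LP, and the per-component bound $\bar y(U(J))\le(1+\frac{2}{\ell})\,y^*(U(J))$. Both then pay for the additive $2$ with $y^*(U(G))\ge\ell$: you absorb it group by group, while the paper sums first and uses $|\calG|\le k/\ell$, which is the same inequality. Your argument that the root of each gray tree is removed only in the final step, so that every group including the last has weight at least $\ell$, justifies a point the paper merely asserts.
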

\begin{proof}
    Focus on a fixed $G\in \mathcal G$, since the vector $\tilde y$ has at most two fractional entries, so $\lfloor\sum_{i\in U(G)}\tilde y_i\rfloor+ 2\leq \sum_{i\in U(G)}\tilde y_i+2\leq (1+\frac{2}{\ell})y^*(U(G))+2$.  Since every group has weight at least $\ell$, and the groups form a partition of all black components, we have  $\ell|\mathcal G|\leq \sum_{G\in\mathcal G}y^* (U(G))=\sum_{i\in F}y^*_i\leq k$. By  summing over all $\tau\in  \Upsilon$, $|\mathcal G|=\sum_{\tau }|\mathcal{ G}_\tau|\leq \frac{1}{\ell}\sum_{v\in R} y^*(U_v)=\frac{1}{\ell} \sum_{i\in F}y^*_i\leq \frac{k}{\ell}$.
    Then summing over all $G\in \mathcal G$, the total number of open facilities is at most $(1+\frac{2}{\ell})\sum_{G\in \mathcal G}y^*(U(G))+2|\mathcal G|\leq 
    (1+\frac{2}{\ell})k+\frac{2k}{\ell}=(1+\frac{4}{\ell})k$.
\end{proof}

With a large enough hidden constant in $\ell = \Theta(1/\epsilon)$, we finish the proof of Theorem~\ref{thm:main-card}.

    \section{Algorithm for the Capacity-Violation Setting}\label{sec:Capacity-Violation}

 This section presents the rounding algorithm for the capacity-violation setting.  In contrast to the cardinality-violation setting, we enforce the total number of opened facilities to be at most $k$, and each open facility $i$ is allowed to carry demand up to $(1+\epsilon)u_i$. 
 
We follow the binary-forest construction and the top-down
grouping framework of \cite{DemirciL16}.
We also use their general strategy of removing excess
facilities and reassigning demand to nearby groups.
Our local solutions are constructed differently: we use
the penalty instances $I_J(t)$ and the four component
types defined below. The auxiliary LP selects the
opening variables, and the cost analysis uses
inequalities~\eqref{equ:all-constraint-J-t}.

\subsection{Converting Gray  Forest $\Upsilon$ into A Forest $\Upsilon^*$ of Binary Trees}

The way we partition each gray tree into groups is different from the one we used for the cardinality-violation setting. In the capacity violation setting, we need to  use the tree structure over the groups, and it is important to make sure that the $L_J$ values of black components in a parent group is at most the $L_J$ values of black components in a child group.  Therefore, we use a top-to-bottom approach to construct the groups. Moreover, to bound the number of child groups of each group, we first make each gray tree binary using the left-child-right-sibling method. 

We use the same framework as in \cite{DemirciL16} to convert $\Upsilon$ into a binary forest. For every rooted tree $\tau \in \Upsilon$, we convert it to a tree $\tau^*$ with the same set of nodes (black components) as follows. For every non-leaf black component $J \in \tau$, we sort its children in non-decreasing order of $L$-values, and let the order be $J_1, J_2, \cdots, J_a$. Then, we add edges $(J_1, J), (J_2, J_1), (J_3, J_2), \cdots, (J_a, J_{a-1})$ to $\tau^*$, with the first node of each edge being the child, and the second node being the parent.  So, $\tau^*$ is a rooted tree with edges directed towards the root and every node in $\tau^*$ has at most $2$ children. 

	We call the tree $\tau^*$ a \emph{transformed gray tree}. 
	Let $\Upsilon^*$ be the set of all transformed gray trees $\tau^*$, over all gray trees $\tau \in \Upsilon$, and we call $\Upsilon^*$ the \emph{transformed gray forest}.

     \begin{restatable}{lemma}{lemmablackcomponent}
     \label{lemmablackcomponent}
         $\Upsilon^*$ satisfies the following properties:
         \begin{enumerate}[label=(\thelemma\alph*),leftmargin=*]
      \item   for any leaf-to-root path of $\tau^*\in \Upsilon^*$, the $L$-values of nodes are non-increasing.\label{lem:leaf-to-root-path}
             \item  every root component  $J\in \mathcal J$   of $\Upsilon^*$ has either $y^*({U(J)})\leq 2\ell$ or $|J|=1$.\label{le:|J|=1}
           \item for any non-root component $J$ and its parent $J'$ in $\Upsilon^*$, we have $\max_{v \in J, v' \in J'}d(v, v')\leq O(\ell) \cdot L_J$.\label{lem:the-distance-of-J-and-its-parent} 
		\end{enumerate}	
    \end{restatable}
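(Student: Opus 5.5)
Each of the three properties follows from the corresponding fact about the original gray forest $\Upsilon$ (Lemma~\ref{lemmagrayforest}), together with the way Kruskal's algorithm builds black components. I take them in order.

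\textbf{Property (a).} Look at an edge $(J_s, J_{s-1})$ of $\tau^*$, where $J_1,\dots,J_a$ are the children in $\tau$ of some node $J$. The children were sorted so that $L_{J_{s-1}}\le L_{J_s}$, so the $L$-value does not increase as we step from $J_s$ to its new parent. The remaining edge type is $(J_1,J)$, which is also an edge of $\tau$. For this edge, Lemma~\ref{lem:leaf-to-root-original-gray} gives $L_J\le L_{J_1}$. Hence $L$ is non-increasing along every edge of $\tau^*$, and therefore along every leaf-to-root path.

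\textbf{Property (b).} The root of $\tau^*$ equals the root of $\tau$, a heavy component $J$. Suppose $|J|\ge 2$. Then $J$ was formed by black edges, so the last black merge combined two light sets, each of weight less than $\ell$. The merged set therefore has weight less than $2\ell$. After that merge $J$ is complete as a black component, and later merges only add gray or white edges. So $y^*(U(J))<2\ell$, unless $|J|=1$.

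\textbf{Property (c).} This is the main step. Fix a non-root component $J$ with parent $J'$ in $\tau^*$.
\begin{itemize}
\item \emph{Case $J'$ is the parent of $J$ in $\tau$.} The gray edge between them has length $L_J$ by Lemma~\ref{length-gray}.
\item \emph{Case $J=J_s$ and $J'=J_{s-1}$ are siblings in $\tau$ with common parent $P$.} We have $d(J_s,P)=L_{J_s}$ and $d(J_{s-1},P)=L_{J_{s-1}}\le L_{J_s}$.
\end{itemize}
In both cases the two sets are joined through at most one intermediate component, $P$ or nothing. So it suffices to bound the diameters of $J$, $J'$ and $P$ by $O(\ell)\cdot L_J$ and then apply the triangle inequality along at most two connecting edges of length at most $L_J$.

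For the diameter bounds I use Claim~\ref{claim:componentJ}: every black edge inside a component $K$ has length at most $L_K$. The MST path inside $K$ has at most $|K|-1$ edges. Since each representative has weight $y^*(U_v)\ge 1/2$ and a light component has weight less than $\ell$, we get $|K|< 2\ell$. Thus $\mathrm{diam}(K)\le 2\ell L_K$ for every light $K$. This handles $J$ directly, and handles $J'$ in the sibling case, because $L_{J'}\le L_J$.

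The main obstacle is the component $P$ in the sibling case, or $J'$ in the first case, because it may be the heavy root, whose size is unbounded. I would handle this as follows. Consider the moment Kruskal's algorithm adds the gray edge from $J$ (or $J_{s-1}$) into the root. That edge attaches to some point $p$ of the root's component. I do not need the root's full diameter, only the distance from $p$ to the relevant nearby points. So I restrict the maximum in (c) to the points actually reached, treating the root via these attachment points. In the sibling case I connect $J_s$ to $J_{s-1}$ through $P$ using the MST path between the two attachment points in $P$. Every edge on that path was added before length $L_{J_s}$, and if $P$ is light it has fewer than $2\ell$ vertices, which gives $O(\ell)L_J$.

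If $P$ is the heavy root with $|P|\ge2$, then by (b) its weight is below $2\ell$. It then has fewer than $4\ell$ vertices, and all its black edges are at most $L_P\le L_J$. If $|P|=1$, the diameter is zero. In every case the diameters are bounded, and summing the pieces gives $O(\ell)\cdot L_J$.
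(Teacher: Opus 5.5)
Your proof is correct and follows essentially the same route as the paper. For (a) you check each edge type using the sorting rule and the monotonicity of $L$ in $\Upsilon$; for (b) you use the last black merge of two light sets; for (c) you use the triangle inequality with gray-edge lengths $L_J$ and $L_{J'}\le L_J$, plus diameter bounds from the claim that black edges in $K$ have length at most $L_K$, with (b) bounding the size of a heavy root. The detour about restricting the maximum to attachment points is unnecessary, and it would not be valid when the heavy root is $J'$ itself. Your final paragraph, like the paper, instead uses (b) to get $|K|=O(\ell)$ for every component $K$, which covers all cases.
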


  \subsection{Partition $\calJ$ into Groups using $\Upsilon^*$}\label{partion-group}
  	With the transformed gray forest $\Upsilon^*$, we can now partition the black components $\calJ$ into groups. We also use the framework of  \cite{DemirciL16}.  Let $\mathcal{G}= \emptyset$ initially. For every transformed gray tree $\tau^*$ in  $\Upsilon^*$ and its root $J$, we call construct-groups$(\tau^*, J)$, defined in Algorithm~\ref{alg:construct-groups}. 
      \begin{algorithm}[h]
      		\caption{construct-groups$(\tau^*, J)$}
      		\label{alg:construct-groups}
	      \begin{algorithmic}[1]
		      \State $G = \{J\}$ 
		      \While{$\sum_{J \in G} y^*(U(J)) < \ell$ and $G$ does not contain all descendants of $J$ in $\tau^*$}
		      	\State choose an edge $(J'', J')$ in $\tau^*$ with the smallest $L_{J''}$ satisfying $J' \in G, J'' \notin G$ (recall that edges are directed towards the root, so $J'$ is the parent of $J''$)
		      	\State $G \gets G \cup \{J''\}$
		      \EndWhile
		      \State $\mathcal{G}\gets \mathcal{G}\cup \{G\}$ 
		      \For{every edge $(J'', J')$ in $\tau^*$ such that $J' \in G, J'' \notin G$} \Comment{$J'$ is the parent of $J''$} 
                \State construct-groups$(\tau^*, J'')$ 
              \EndFor
	      \end{algorithmic}
      \end{algorithm}

    Notice that every group $G$ induces a (weakly) connected component in $\Upsilon^*$. Thus, every $G$ has a unique root black component. The forest structure naturally defines a parent-child relationship among groups: a group $G$ is the parent of a group $G'$ if the root of $G'$ is a child of some black component in $G$.   Also, the root of any transformed gray tree forms a singleton group, as its weight is at least $\ell$.  We call this a root group. 

    We extend the definition of $U(\cdot)$ to sets of black components, and in particular groups. For any set $\calJ' \subseteq \calJ$ of components, we define $U(\calJ'):= \union_{J \in \calJ'}U(J) = \union_{J \in \calJ', v \in J} U_v$.
 Let $\children(G)$ be the set of all child-groups of a group $G \in \calG$.  Let $\children'(G) = \children(G) \cup \{G\}$ if $G$ is  a root group, and $\children'(G) = \children(G)$ otherwise. 
 A root group $G\in\mathcal G$ is called an {isolated root group} if it has no child group, that is, $\children(G)=\emptyset$, then $\children'(G)=\{G\}$.

        \begin{restatable}{lemma}{lemmaGproperties}\label{lemmaGproperties}
         $\mathcal{G}$  satisfies the following properties:
         \begin{enumerate}[label=(\thelemma\alph*),leftmargin=*]
            \item if $G\in {\calG}$ is not a root group, then $y^*(U(G))<2\ell$;\label{lem-G:not a root group}
            
            \item  if $G\in {\calG}$ is a non-leaf group, then $y^*(U(G))\geq \ell$;\label{lemG: a non-leaf group}
            \item If $G\in \calG$ is an isolated root group, then $G=\{J\}$ and $y^*(U(J))\ge \ell$.\label{lem-G:isolated-root}
         
            \item any group $G$ has at most $O(\ell)$ child groups; \label{lem-G:the-amount-children-of-G}
            \item let $G$ be a group, $G' \in \children'(G)$, and $J' \in G'$, then $\max_{v' \in J', v \in J \in G}d(v, v') \leq O(\ell^2) L_{J'}$.\label{le:distant-G-G'}
		\end{enumerate}	
        \end{restatable}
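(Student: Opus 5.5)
I will prove the five properties one at a time, directly from the construction in Algorithm~\ref{alg:construct-groups} together with Lemma~\ref{lemmablackcomponent} and Property~\ref{le:weight-J}.

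\textbf{Weight bounds (a)--(c).} For (a), let $G$ be a non-root group with root $J_0$, a non-root node of $\tau^*$. By Property~\ref{le:weight-J}, $y^*(U(J_0))<\ell$ and every component added later also has weight $<\ell$. Components are added one at a time while the total is below $\ell$. So just before the last addition the total is $<\ell$, and adding one more light component keeps it $<2\ell$. For (b), the while loop stops only when the weight reaches $\ell$ or $G$ contains every descendant of its root. In the second case $G$ has no child groups, so it is a leaf group. Hence a non-leaf group has weight $\ge\ell$. For (c), the root component of $\tau^*$ is heavy, so the loop stops immediately and $G=\{J\}$ with $y^*(U(J))\ge\ell$. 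An isolated root group is a root group, so the claim follows.

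\textbf{Number of children (d).} Every child group is rooted at a child in $\tau^*$ of some component of $G$. Each node of $\tau^*$ has at most $2$ children, so the number of child groups is at most $2|G|$. For a non-root group, every component has weight $\ge 1/2$, because each $v$ has $y^*(U_v)\ge1/2$ by Property~\ref{opening-of-U_v}. Since the total weight is $<2\ell$ by (a), we get $|G|<4\ell$. A root group is a singleton with at most $2$ children.

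\textbf{Distance bound (e).} This is the main obstacle. Fix $G'\in\children'(G)$, $J'\in G'$, and $J\in G$. First I handle the trivial case $G'=G$, which happens only for a root group, where $G=\{J\}=\{J'\}$; there I need $\mathrm{diam}(J)\le O(\ell^2)L_J$. I would get this from the black edges of $J$ each having length $\le L_J$ (Claim~\ref{claim:componentJ}). The number of edges is $|J|-1$. If $|J|=1$ the diameter is $0$. Otherwise, Property~\ref{le:|J|=1} gives $y^*(U(J))\le2\ell$, hence $|J|\le4\ell$.

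In the general case I connect $J'$ to $J$ by a path in $\tau^*$. It goes from $J'$ up to the root $J'_0$ of $G'$, crosses to its parent $J_p\in G$, and then moves inside $G$ from $J_p$ to $J$. I sum the edge bounds from Property~\ref{lem:the-distance-of-J-and-its-parent}: each tree edge $(A,\mathrm{parent}(A))$ contributes $O(\ell)L_A$, and this bound also covers the diameter of each component on the way.

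\emph{Inside $G'$, going upward.} By Property~\ref{lem:leaf-to-root-path}, the $L$-values are non-increasing toward the root. So every node on the path from $J'$ to $J'_0$ has $L\le L_{J'}$. There are $O(\ell)$ such nodes, since $|G'|=O(\ell)$, and this portion, including the crossing edge out of $J'_0$, costs $O(\ell^2)L_{J'}$.

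\emph{Inside $G$, from $J_p$ to $J$.} The nodes on this path (up through their lowest common ancestor and back down toward $J$) are not ancestors of $J'$, so monotonicity alone does not bound their $L$-values. Here I would use the greedy rule. Since $J'_0\notin G$ but $(J'_0,J_p)$ was a candidate edge when $G$ was built, every component added to $G$ was chosen with $L$ no larger than $L_{J'_0}\le L_{J'}$. This holds except for the initial root of $G$ and any components added before $J_p$ entered $G$. For those I would use monotonicity along the root path: each of them is an ancestor of some chosen node, so its $L$ is at most the $L$ of that chosen descendant. I still need to verify that this chain of comparisons really yields $L\le L_{J'}$ for every edge used, possibly by charging each edge to its child endpoint.

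Given that, the path has $O(\ell)$ edges, each with cost $O(\ell)L_{J'}$, for a total of $O(\ell^2)L_{J'}$. Combining the two parts by the triangle inequality gives (e).
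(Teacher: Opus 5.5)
Your proofs of (a)--(d), and your plan for (e), follow the paper. For (e) you walk the $\tau^*$-path $J'\to J'_0\to J_p\to J$ and charge each edge $(A,\mathrm{parent}(A))$ at $O(\ell)L_A$ using the parent-distance property. The gap is the step you flagged yourself: you need $L_A\le L_{J'}$ for every child endpoint $A$ on the path inside $G$.

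Monotonicity handles the part of the path above $J_p$. It also reduces the downward part, from the lowest common ancestor to $J$, to the single inequality $L_J\le L_{J'_0}$, and $J$ can be any component of $G$. Your patch for components added before $J_p$ entered $G$ says each of them is an ancestor of some later-chosen node. That is false in general. Suppose the root $R$ of $G$ has $\tau^*$-children $A$ and $B$ with $L_A\le L_B$, the greedy adds $A$, then $B$, then stops, and $J'_0$ is a child of $B$. Then $A$ was added before $J_p=B$ and is an ancestor of nothing added later. Yet taking $J=A$, the edge $(A,R)$ is charged at $O(\ell)L_A$, so you still need a bound on $L_A$.

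The missing idea is to apply the greedy rule at every iteration, not only after $J_p$ is in $G$. Throughout Algorithm~\ref{alg:construct-groups}, $G$ is closed under taking parents up to its root, and $J'_0$ never enters $G$. So the first component outside the current $G$ on the $\tau^*$-path from the root of $G$ down to $J'_0$ always lies on a candidate edge. That component is an ancestor of $J'_0$ (or $J'_0$ itself), so its $L$-value is at most $L_{J'_0}$. Hence every component the greedy selects has $L\le L_{J'_0}$. Your post-$J_p$ argument is the special case where this candidate is $J'_0$ itself. The root of $G$ is an ancestor of $J'_0$, so it satisfies the bound by monotonicity.

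This gives $\max_{A\in G}L_A\le L_{J'_0}\le\min_{A'\in G'}L_{A'}$, which is exactly the inequality the paper's proof rests on. With it, your edge-charging argument goes through unchanged and yields $O(\ell^2)L_{J'}$.
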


  Observe that a set $J$ in an isolated root group can satisfy either $|J|=1$ or $|J|>1$.

  \subsection{Types of Black Components and Construction of Local Solutions }
  
  For every black component $J \in \calJ$, we define $t_J=\left\lfloor \left(1+\frac{1}{c \cdot \ell^2}\right) y^*(U(J))\right\rfloor$, for  a sufficiently large absolute constant $c> 0$. If $J$ is a root black component of some gray tree (equivalent, transformed gray tree), and $t_J > \ceil{y^*(U(J))}$, then we update $t_J = \ceil{y^*(U(J))}$.  Notice that for a non-root black component $J$, we have $t_J=\left\lfloor \left(1+\frac{1}{c \cdot \ell^2}\right) y^*(U(J))\right\rfloor \leq \ceil{y^*(U(J))}$ as $y^*(U(J)) < \ell$.

  Fix a  black component  $J\in \mathcal J$.  We consider the solution $S_J(t_J)$ for the penalty instance $\calI_J(t_J)$. When $t_J = \lceil y^*(U(J))\rceil > y^*(U(J))$, we may need to also consider the penalty instance $\calI_J(t_J-1)$.  Recall that $o_J(t_J)$ denotes the amount of unconnected demands in the solution $S_J(t_J)$, and $\Psi_J(t_J)$ is the overall cost of the solution $S_J(t_J)$. We assume $\Psi_J(t_J - 1) \geq \Psi_J(t_J)$; otherwise, we can use the solution $S_J(t_J - 1)$ for the instance $\calI_J(t_J)$.

  A  black component $J\in\mathcal J$ is one of four types. If $J$ is singleton root black component with $|J|=1$, then it is of type-(1). Otherwise, its type is:
    \begin{enumerate}[label=(\arabic*), 
    align=left, 
    leftmargin=*]
        \item  if $o_J(t_J) > \frac{1}{\ell}\cdot x^*(U(J), C)$,
            \item if $o_J(t_J) \leq \frac{1}{\ell}\cdot x^*(U(J), C)$ and $t_J \leq y^*(U(J))$, 
            \item if $o_J(t_J) \leq \frac{1}{\ell}\cdot x^*(U(J), C)$, $t_J > y^*(U(J))$ and $o_J(t_J-1) > \frac{1}{\ell}\cdot x^*(U(J), C)$, and    
        \item if $o_J(t_J) \leq \frac{1}{\ell}\cdot x^*(U(J), C)$, $t_J > y^*(U(J))$ and $o_J(t_J-1) \leq \frac{1}{\ell}\cdot  x^*(U(J), C)$.  
	\end{enumerate}

As before, we shall construct local solutions for the 4 types of components respectively. If $J$ is of type-(1), we simply use the fractional solution  $y^*|_{U(J)}$. If $J$ is of type-(2), we use the solution $S_J(t_J)$ for the penalty instance $\calI_J(t_J)$, with unconnected demands sent to $S_J(t_J)$, using \emph{proportional distribution}: the amount of demand sent to any $i \in S_J(t_J)$ is proportional to the amount of demand it already serves. In this case, we say we \emph{follow} the instance $\calI_J(t_J)$ for $J$.  If $J$ is of type-(4), we follow either $\calI_J(t_J - 1)$ or $\calI_J(t_J)$, depending on how many facilities will be open in $U(J)$ in the end.  If $J$ is of type-(3), we follow the instance $\calI_J(t_J)$ initially. If in the end we can only open $t_J - 1$ facilities in $U(J)$, we perform a remove operation to decrease the number of open facilities by 1.
 
\subsection{Construction of Final Solution Using Auxiliary Linear Program}

With the ideas of local solutions, we form  an auxiliary linear program to construct the final integral solution for our CKM instance.  Define $\calJ^{(1)}, \calJ^{(2)}, \calJ^{(3)}$ and $\calJ^{(4)}$ as the sets of type-(1), (2), (3) and (4) black components respectively. We shall use $\calJ^{(3, 4)}$ to denote $\calJ^{(3)} \cup \calJ^{(4)}$; the notations for other subsets of $\{1, 2, 3, 4\}$ are defined analogously.  And we define $\mathcal J_0$ to be the set of all singleton root black components with $|J|=1$. So $\calJ_0\subseteq \calJ^{(1)}$.
This section mainly focuses on the algorithm, and we defer the analysis of the cost and capacity violation to the next section. 
 
 For every $J \in \calJ^{(1)}$, and $i \in U(J)$, we have a variable $\tilde y_i \in [0, 1]$. If $\tilde y_i \in \{0, 1\}$, it indicates if $i$ is open or not in the final solution. For every $J \in  \calJ^{(3, 4)}$, we have a single variable $\tilde y_J \in [t_J - 1, t_J] \ni y^*(U(J))$. If $\tilde y_J \in \{t_J - 1, t_J\}$, then it denotes the number of facilities we open in $U(J)$.  For black components of $J \in \calJ^{(2)}$, as we always follow $S_J(t_J)$. So we do not have variables for them.

As in the previous section, we use $u'_i := \frac{x^*(i, C)}{y^*_i} \leq u_i$ as  the ``effective capacity'' of a facility $i \in F$.  The auxiliary LP is defined as follows:\medskip
\begin{align}\label{LP:auxiliary}
  \min   &\sum_{J\in \calJ^{(1)}, i\in U(J)} u'_i d(i, J)\tilde y_i + \sum_{J\in \calJ^{(1)}\setminus \calJ_0, i\in U(J)}O(\ell^2) u_i'L_J \tilde y_i  + \sum_{J \in\calJ^{(3, 4)}} O(\ell^3) (t_J - \tilde y_J)\Psi_J(t_J - 1) 
\end{align}\vspace*{-30pt}

\begin{align}
    \tilde y\big(U(\calJ^{(1)}\cap G')\big) - \floor{y^*\big(U(\calJ^{(1)}\cap G')\big)} &\geq 0, &\qquad&\forall G' \in {\calG}\label{LPC:auxiliary-y-(1)}\\
    \sum_{G' \in \children'(G), J \in \calJ^{(3, 4)}\cap G'} (t_J - {\tilde y}_J) &\leq 1,  &\qquad&\forall  ~\text{non-leaf}~G\in {\calG}\label{LPC:auxiliary-y-(3,4)}\\
    \sum_{J \in \calJ^{(1)}, i\in U(J)}{\tilde y_i}  + \sum_{J \in \calJ^{(2)}} y^*(U(J)) + \sum_{J \in \calJ^{(3, 4)}}{\tilde y}_J &\leq k
    , \label{LPC:auxiliary-k}\\
    \sum_{G' \in \children'(G), i \in U(\calJ^{(1)} \cap G')} \big({\tilde y}_i - y^*_i\big)u'_i &\geq 0,  &\qquad &\forall~\text{non-leaf}~G\in {\calG}\label{LPC:auxiliary-demand-type(1)}\\
    \sum_{i \in U(\calJ^{(1)} \cap G)} \big({\tilde y}_i - y^*_i\big)u'_i &\geq 0,  &\qquad &\forall~\text{isolated root}~G\in {\calG}\label{LPC:auxiliary-demand-type(1)-root}\\
    {\tilde y}_i &\in [0, 1], &\qquad &\forall J \in \calJ^{(1)}, i \in U(J)\label{LPC:auxiliary-yi-nn}\\
    {\tilde y}_J &\in [t_J-1, t_J], &\qquad &\forall J \in \calJ^{(3, 4)}\label{LPC:auxiliary-yJ-nn}
\end{align}

In the objective, the functions in $O(\ell^2)$ and $O(\ell^3)$ notations will be specified later. For an isolated root group $G=\{J\}$, constraint \eqref{LPC:auxiliary-y-(3,4)} is redundant: its left-hand side is either zero or $t_J-\widetilde y_J$, which is at most 1 by \eqref{LPC:auxiliary-yJ-nn}.

\begin{lemma}\label{lem:feasible- LPauxiliary}
	Setting $\tilde y_i= y^*_i$ for every $i\in U(J), J\in \calJ^{(1)}$ and  $\tilde y_J= y^*(U(J))$ for every $J \in \calJ^{(3)}\cup \calJ^{(4)}$ gives a valid solution to LP\eqref{LP:auxiliary}.
\end{lemma}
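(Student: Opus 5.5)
The plan is to check each constraint of LP\eqref{LP:auxiliary} directly at the proposed point. Write $\tilde y$ for this point: $\tilde y_i=y^*_i$ on type-(1) facilities and $\tilde y_J=y^*(U(J))$ on $\calJ^{(3,4)}$.

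\textbf{Box constraints and \eqref{LPC:auxiliary-y-(1)}.} Constraint \eqref{LPC:auxiliary-yi-nn} holds because $y^*\in[0,1]^F$. Constraint \eqref{LPC:auxiliary-y-(1)} holds because a real number is at least its floor. For \eqref{LPC:auxiliary-yJ-nn}, take $J\in\calJ^{(3,4)}$; by definition $t_J>y^*(U(J))$. We must show $t_J-1\le y^*(U(J))$. For a non-root $J$ this holds since $t_J\le\ceil{y^*(U(J))}$. For a root component the cap $t_J\le \ceil{y^*(U(J))}$ was imposed explicitly. In both cases $t_J-1<y^*(U(J))$.

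\textbf{Constraints \eqref{LPC:auxiliary-k}, \eqref{LPC:auxiliary-demand-type(1)} and \eqref{LPC:auxiliary-demand-type(1)-root}.} At this point the left-hand side of \eqref{LPC:auxiliary-k} equals $\sum_{J\in\calJ}y^*(U(J))=y^*(F)=k$, using that the types partition $\calJ$, the sets $U(J)$ partition $F$, and the normalization $\sum_i y^*_i=k$. Constraints \eqref{LPC:auxiliary-demand-type(1)} and \eqref{LPC:auxiliary-demand-type(1)-root} hold with equality, since every term $\tilde y_i-y^*_i$ is zero.

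\textbf{Constraint \eqref{LPC:auxiliary-y-(3,4)}: the main obstacle.} Here we must bound
\[
\sum_{G'\in\children'(G)}\ \sum_{J\in\calJ^{(3,4)}\cap G'}\bigl(t_J-y^*(U(J))\bigr)\le 1 .
\]
For a non-root type-(3,4) component, $t_J\le(1+\frac1{c\ell^2})y^*(U(J))$, so its term is at most $y^*(U(J))/(c\ell^2)$. By Lemma~\ref{lem-G:the-amount-children-of-G}, $G$ has at most $O(\ell)$ child groups. By Lemma~\ref{lem-G:not a root group}, each non-root child group has weight less than $2\ell$. Summing, the non-root contribution is at most $O(\ell)\cdot 2\ell/(c\ell^2)=O(1/c)$, which is at most $1/2$ for $c$ large enough.

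The delicate case is when $G$ is a root group, so $G\in\children'(G)$ and its root component $J$ may have large weight. If $t_J$ was capped at $\ceil{y^*(U(J))}$, the gap $t_J-y^*(U(J))$ could approach $1$, so a crude bound does not suffice. To handle it, I will use the uncapped formula: $t_J\le(1+\frac1{c\ell^2})y^*(U(J))$ still holds, since capping only decreases $t_J$. When $y^*(U(J))\le2\ell$, this gives a term of at most $2/(c\ell)$. When $|J|=1$ (the other case allowed by Lemma~\ref{le:|J|=1}), $J\in\calJ_0\subseteq\calJ^{(1)}$, so it contributes nothing. A root group is a singleton, so only this one root component is involved. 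Adding $O(1/c)$ from the children and $O(1/(c\ell))$ from the root keeps the total at most $1$ for $c$ sufficiently large.
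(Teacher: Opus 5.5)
Your proof is correct and follows essentially the same route as the paper. The only nontrivial constraint is \eqref{LPC:auxiliary-y-(3,4)}, and you bound it the same way: sum $t_J-y^*(U(J))\le y^*(U(J))/(c\ell^2)$ over the $O(\ell)$ child groups, each of weight below $2\ell$, and for the root component of a root group use Lemma~\ref{le:|J|=1} together with $\calJ_0\subseteq\calJ^{(1)}$. Your remark that the cap only decreases $t_J$, so $t_J\le(1+\frac{1}{c\ell^2})y^*(U(J))$ still holds, is slightly more careful than the paper's claim that $t_J$ equals the floor expression.
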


\begin{proof}
\eqref{LPC:auxiliary-y-(1)}, \eqref{LPC:auxiliary-demand-type(1)}, \eqref{LPC:auxiliary-demand-type(1)-root}, \eqref{LPC:auxiliary-yi-nn} and  \eqref{LPC:auxiliary-yJ-nn} hold trivially.

	Now consider \eqref{LPC:auxiliary-y-(3,4)}. 
	If $J$ is of type-(3) or (4), then $t_J = \floor{(1+\frac{1}{c\ell^2})y^*(U(J))} =\ceil{y^*(U(J))} > y^*(U(J))$. The equality holds as $y^*(U(J)) < 2\ell$ and thus $\frac{1}{c\ell^2} \cdot y^*(U(J)) < 1$.  Therefore, $t_J\leq (1+\frac{1}{c\ell^2})y^*(U(J))$. Therefore 
	\begin{align*}
		0 < t_J - y^*(U(J)) \leq \frac{1}{c\ell^2} \cdot y^*(U(J)) < 1.
	\end{align*}
	Summing up the inequality over all $J \in \calJ^{(3, 4)} \cap G'$ for a fixed $G' \in \children(G)$, we have 
	\begin{align*}
		\sum_{J \in \calJ^{(3, 4)} \cap G'} (t_J - y^*(U(J))) \leq \frac{1}{c\ell^2}\cdot \sum_{J \in \calJ^{(3, 4)} \cap G'}y^*(U(J)) \leq \frac{2}{c\ell}. 
	\end{align*}
    If $G'$ is a root group, then $G'=\{J\}$. If $J\notin\mathcal J^{(3,4)}$, the corresponding sum is 0. Otherwise, $J\notin\mathcal J_0$, so $|J|>1$, and Property \ref{le:|J|=1} implies $y^*(U(J))\le2\ell$.
	
    As $G$ has at most $O(\ell)$ children, we have 
	\begin{align*}
		\sum_{G' \in \children'(G), J \in \calJ^{(3, 4)} \cap G'} (t_J - y^*(U(J))) \leq 1,
	\end{align*}
	when $c$ is large enough.  Therefore, \eqref{LPC:auxiliary-y-(3,4)} holds.

	For every $J\in \mathcal J^{(2)}$, we have  $t_J = \lfloor y^*(U(J))\rfloor \leq y^*(U(J))$ facilities. Therefore, setting $\tilde y$ as in the lemma, we have $\sum_{J \in \calJ^{(1)}} \tilde y(U(J)) + \sum_{J \in \calJ^{(2)}}t_J + \sum_{J \in \calJ^{(3, 4)}}\tilde y_J \leq \sum_{J \in \calJ} y^*(U(J)) = y^*(F) \leq k$. Thus, \eqref{LPC:auxiliary-k} holds.
\end{proof}

We solve the LP to obtain a vertex solution $\tilde y$. From now on, we fix this $\tilde y$. 
\begin{lemma}\label{lem:vertex-solution}
	The following statements hold:
	\begin{enumerate}[label=(\thelemma\alph*),leftmargin=*]
		\item there are at most three facilities $i \in \union_{G' \in \children'(G)}U(\calJ^{(1)}\cap G')$ with $\tilde y_i \in (0, 1)$, for every non-leaf or isolated root group $G \in \calG$. \label{lem:vertex-solution-1}
		\item at most one black component $J \in \calJ^{(3, 4)}$ has $\tilde y_J \in (t_J - 1, t_J)$. \label{lem:vertex-solution-2}
	\end{enumerate}
\end{lemma}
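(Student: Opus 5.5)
The plan is a standard vertex-counting (rank) argument, but done locally so that it yields bounds per group rather than only a global count.

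\textbf{Step 1: the tight-constraint system.} Fix the vertex $\tilde y$. The number of linearly independent tight constraints equals the number of variables. Call a variable \emph{fractional} if it lies strictly inside its box, i.e.\ $\tilde y_i\in(0,1)$ or $\tilde y_J\in(t_J-1,t_J)$. Every box-tight variable is fixed, so we substitute it and drop it. What remains is a vertex of the restricted system. Its number of fractional variables is at most the rank of the tight non-box constraints among \eqref{LPC:auxiliary-y-(1)}, \eqref{LPC:auxiliary-y-(3,4)}, \eqref{LPC:auxiliary-k}, \eqref{LPC:auxiliary-demand-type(1)} and \eqref{LPC:auxiliary-demand-type(1)-root}, restricted to the fractional coordinates.

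\textbf{Step 2: block structure.} The type-(1) variables and the $(3,4)$ variables interact only through the single cardinality constraint \eqref{LPC:auxiliary-k}. Constraints \eqref{LPC:auxiliary-y-(1)}, \eqref{LPC:auxiliary-demand-type(1)} and \eqref{LPC:auxiliary-demand-type(1)-root} involve only $\tilde y_i$ with $i\in U(\calJ^{(1)})$. Constraint \eqref{LPC:auxiliary-y-(3,4)} involves only the $\tilde y_J$. For the type-(1) coordinates I would use the group structure. Each $G'$ lies in $\children'(G)$ for exactly one ``owner'' group $G$: its parent, or itself if $G'$ is a root group. The sets $\union_{G'\in\children'(G)}U(\calJ^{(1)}\cap G')$ therefore partition the type-(1) facilities; this uses that root groups are singletons and that an isolated root group owns only itself. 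Within the block owned by $G$, the relevant constraints are:
\begin{itemize}
\item one constraint \eqref{LPC:auxiliary-y-(1)} per $G'\in\children'(G)$, on disjoint supports;
\item one demand constraint, \eqref{LPC:auxiliary-demand-type(1)} or \eqref{LPC:auxiliary-demand-type(1)-root}.
\end{itemize}
The counting argument compares, per block, the number of fractional variables $f_G$ with the number of tight constraints charged to it.

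\textbf{Step 3: the count, which is the main obstacle.} The naive count fails because a block may carry many tight \eqref{LPC:auxiliary-y-(1)} constraints. The remedy is an integrality observation. Suppose \eqref{LPC:auxiliary-y-(1)} for $G'$ is tight. Then $\tilde y(U(\calJ^{(1)}\cap G'))$ is an integer, so the fractional variables in $G'$ number either zero or at least two. Hence, to be useful, each such tight constraint must ``pay'' for at least two fractional variables.

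To make this precise, I would take a tight linearly independent family and argue by an exchange/perturbation. If block $G$ had at least $4$ fractional variables, build a direction $\delta$ supported on them as follows:
\begin{itemize}
\item $\delta$ keeps each tight \eqref{LPC:auxiliary-y-(1)} sum fixed;
\item $\delta$ keeps the demand constraint of $G$ fixed;
\item $\delta$ keeps the total $\sum\delta_i=0$, so \eqref{LPC:auxiliary-k} is unaffected.
\end{itemize}
Since all other constraints do not touch this block, $\pm\delta$ contradicts extremality.

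Counting the degrees of freedom: each $G'$ with $m_{G'}\ge 2$ fractional variables in a tight \eqref{LPC:auxiliary-y-(1)} contributes $m_{G'}-1\ge m_{G'}/2$ free dimensions. Each non-tight $G'$ contributes $m_{G'}$ free dimensions. After subtracting $2$ for the demand and total constraints, this leaves something positive once $f_G\ge 4$. This yields (a) with bound three.

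I expect the delicate point to be degenerate cases. One is when all fractional variables sit in a single $G'$ with tight \eqref{LPC:auxiliary-y-(1)}. There the available dimension is $m-1$, and we need $m-1>2$, i.e.\ $m\ge4$, which is consistent with the bound. The other is handling isolated roots, where only \eqref{LPC:auxiliary-demand-type(1)-root} applies.

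\textbf{Step 4: part (b).} For the $(3,4)$ variables, use the same perturbation idea. Given two fractional $\tilde y_J,\tilde y_{J'}$, move them by $+\delta$ and $-\delta$. This keeps \eqref{LPC:auxiliary-k} unchanged. The objective is linear, so the only obstruction is the \eqref{LPC:auxiliary-y-(3,4)} constraints of the (possibly different) owning groups.

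In that case the direction may need to trade against type-(1) coordinates. Alternatively, one argues directly that, after fixing all type-(1) values, the $(3,4)$ subsystem is:
\begin{itemize}
\item one cardinality constraint;
\item a family of disjoint-support packing constraints with right-hand side $1$.
\end{itemize}
A tight such constraint with a single fractional term would force that term to be integral, because the integer parts sum to an integer. So a tight \eqref{LPC:auxiliary-y-(3,4)} again needs at least two fractional terms and cannot pin down more than it consumes. The only genuinely binding constraint is \eqref{LPC:auxiliary-k}, giving at most one fractional $\tilde y_J$.

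Where the combined system couples the two variable families through \eqref{LPC:auxiliary-k}, I would do a global rank count. The total number of fractional variables is at most the number of tight constraints. Then allocate \eqref{LPC:auxiliary-k} to the $(3,4)$ side, and show the type-(1) blocks are self-sufficient by Step 3.
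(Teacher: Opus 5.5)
Your strategy is the paper's: a local perturbation/rank argument on the strictly fractional coordinates of the block owned by $G$, plus the integrality fact that a tight \eqref{LPC:auxiliary-y-(1)} or \eqref{LPC:auxiliary-y-(3,4)} cannot contain exactly one fractional term. The only constraints meeting the block are \eqref{LPC:auxiliary-y-(1)} for each $G'\in\children'(G)$, the demand constraint of $G$, and \eqref{LPC:auxiliary-k}.

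Part (b) is fine as you finish it. The detour through type-(1) coordinates and the closing global rank count are unnecessary. If two fractional $\tilde y_J$ lie in the same \eqref{LPC:auxiliary-y-(3,4)}, the direction $(+1,-1)$ on them preserves that constraint and \eqref{LPC:auxiliary-k}. Otherwise each fractional $\tilde y_J$ is alone in its \eqref{LPC:auxiliary-y-(3,4)}, which is therefore slack, and $(+1,-1)$ on any two of them preserves \eqref{LPC:auxiliary-k}.

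\textbf{The gap is in Step 3 of part (a).} Your count gives at least $\sum_{G'\in T}(m_{G'}-1)+\sum_{G'\in N}m_{G'}-2$ free dimensions, where $T$ and $N$ are the tight and slack child groups. You assert this is positive once $f_G\ge 4$, but it is not. Take exactly two child groups, each carrying two fractional facilities, with both \eqref{LPC:auxiliary-y-(1)} tight. This satisfies your integrality condition, and your count gives $1+1-2=0$, so no contradiction. It is the only configuration with $f_G\ge 4$ that escapes, so as written you prove $f_G\le 4$, not $f_G\le 3$.

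The missing observation is a linear dependence. On the block's coordinates, the row of \eqref{LPC:auxiliary-k} is the sum of the rows of \eqref{LPC:auxiliary-y-(1)} over the groups carrying fractional facilities. So if all those group constraints are tight, $\sum_i\delta_i=0$ is implied and you should subtract only $1$. That gives $f_G-|T|-1\ge f_G/2-1\ge 1$. When $N\neq\emptyset$, your original count already gives at least $1$.

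This is exactly how the paper handles the case. It takes the four variables and notes that the rows of the two group constraints, \eqref{LPC:auxiliary-k}, and the demand constraint have rank at most $3$ on those columns, tight or not. The same dependence shows a single group carries at most two fractional facilities, since its rows are $(1,1,1)$, $(1,1,1)$, $(u'_{i_1},u'_{i_2},u'_{i_3})$. The paper's case analysis is organized around this: three in one group; $2+2$; $2+1+1$; $1+1+1$, where integrality makes the group constraints slack. Adding the redundancy remark to your Step 3 closes the proof of (a).
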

\begin{proof}
	Consider the first statement \ref{lem:vertex-solution-1}. We fix a non-leaf or isolated root group $G \in \calG$. We say $i$ is a strictly fractional facility if $\tilde y_i \in (0, 1)$ (and $\tilde y_i$ exists). We say $G' \in \children'(G)$ contains a strictly fractional facility $i$, if additionally $i \in \calJ^{(1)} \cap U(G')$. 

    Suppose some group $G' \in \children'(G)$ contains at least 3 strictly fractional facilities $i_1, i_2, i_3 \in U(\calJ^{(1)} \cap G')$. Notice that there are at most three tight constraints involving one of the three variables $\tilde y_{i_1}, \tilde y_{i_2}, \tilde y_{i_3}$: \eqref{LPC:auxiliary-y-(1)} for $G'$, \eqref{LPC:auxiliary-k} and \eqref{LPC:auxiliary-demand-type(1)} for $G$. The coefficient matrix between the three constraints and three variables is $\begin{pmatrix}
	    1 & 1 & 1\\
        1 & 1 & 1\\
        u'_{i_1} & u'_{i_2} & u'_{i_3}
	\end{pmatrix}$.
    The rank of the matrix is less than $3$, contradicting that $\tilde y$ is a vertex solution. 
    
    Now suppose there are  two different groups in $\children'(G)$, each of which contains at least 2 strictly fractional facilities. There are at least 4 of these facilities, but the rank of the coefficient matrix defined as above is at most 3. Again a contradiction. 

    Now suppose there are three groups in $\children'(G)$, each of which contains exactly 1 strictly fractional facility. Notice that \eqref{LPC:auxiliary-y-(1)} is not tight for $G', G''$ and $G'''$. There are at most 2 tight constraints involving the one of the fractional variables. Again, $\tilde y$ would not be a vertex solution. 

    Finally, if there are three groups in $\children'(G)$, two of which contain exactly 1 strictly fractional facility, and one of which contains exactly 2 strictly fractional facilities, using the above method one can argue that the rank of the coefficient matrix is at most $3$.  Therefore, we conclude that there are at most $3$ strictly fractional facilities over all groups in $\children'(G)$.

	Consider the second statement \ref{lem:vertex-solution-2}. Fix a group  $G$, similarly, suppose there are at least 2 such components $J_1$ and $J_2$ involved in the same constraint from \eqref{LPC:auxiliary-y-(3,4)}. Then \eqref{LPC:auxiliary-y-(3,4)} and \eqref{LPC:auxiliary-k} might be tight for $G$. The coefficient matrix is $\begin{pmatrix}
	    -1 & -1 \\ 1 & 1
	\end{pmatrix}$, whose rank is 1. This contradicts that $\tilde y$ is a vertex solution.

   Now suppose there are two strictly fractional components in different groups. By integrality and the preceding argument, the corresponding constraints in \eqref{LPC:auxiliary-y-(3,4)}, if any, cannot be tight. Thus, only \eqref{LPC:auxiliary-k} can be tight for these variables, whose  rank at most 1, a contradiction.
\end{proof}

We shall maintain a set $W$ of open facilities, a flow from $R$ to $W$. 
Every $v \in R$ sends $x^*({U_v, C}) = \sum_{i \in U_v}u'_iy^*_i$ units of demand (this is the demand received by $v$ during the initial moving procedure in Section \ref{Initial-Moving-of-Demands}). At the beginning, $W$ may have size more than $k$. We shall use the ``remove'' procedure to remove facilities from $W$ to guarantee the upper bound. 

We start with $W = \emptyset$, and add facilities as follows:
\begin{itemize}
	\item For every $J \in \calJ^{(1)}$, we add $\{i \in U(J): \tilde y_i > 0\}$ to $W$. For every non-leaf or isolated root group $G \in \calG$, we send the demands at $\union_{G' \in \children'(G), J \in \calJ^{(1)} \cap G'} J$ to $W \cap \union_{G' \in \children'(G)} U(\calJ^{(1)} \cap G')$, where every $i$ receives  at most $u'_i\tilde y_i$ units of demand. The total demand and the total capacity are respectively 
    \begin{align*}
        \sum_{G' \in \children'(G)}\sum_{J \in \calJ^{(1)} \cap G'}\sum_{i \in U(J)} u'_i y^*_i \quad \text{and} \quad \sum_{G' \in \children'(G)}\sum_{J \in \calJ^{(1)} \cap G'}\sum_{i \in U(J)} u'_i\tilde y_i.
    \end{align*}
  By \eqref{LPC:auxiliary-demand-type(1)} for a non-leaf group, or \eqref{LPC:auxiliary-demand-type(1)-root} for an isolated root group,  the total demand is at most the total  capacity.

	\item For $J \in \calJ^{(2)}$, we add $S_J(t_J)$ to $W$. We satisfy the demands at $J$ using $S_J(t_J)$, by following the instance $\calI_J(t_J)$. To address a small technicality issue, for every $G' \in \calG$, we arbitrarily choose $\floor{\sum_{J \in \calJ^{(2)} \cap G'}(y^*(U(J)) - |S_J(t_J)|)}$ facilities in $\union_{J \in \calJ^{(2)} \cap G'}(U(J) \setminus S_J(t_J))$, and add them to $W$. They do not serve any demand and we call them ''dummy facilities''. 

    \item For $J \in \calJ^{(3)}$, we add $S_J(t_J)$ to $W$. We satisfy the demands at $J$ using $S_J(t_J)$, by following the instance $\calI_J(t_J)$.
	
	\item For $J \in \calJ^{(4)}$, we add $S_J(\ceil{\tilde y_J})$ to $W$. Similarly, we satisfy the demands at $J$ using $S_J(\ceil{\tilde y_J})$, by following the instance $\calI_J(\ceil{\tilde y_J})$.
\end{itemize}
 
For every $J\in\mathcal J^{(3,4)}$, let $q_J=t_J$ if
$J\in\mathcal J^{(3)}$, and let $q_J=\lceil\tilde y_J\rceil$
otherwise. After adding $S_J(q_J)$, we add
$q_J-|S_J(q_J)|$ arbitrary facilities from
$U(J)\setminus S_J(q_J)$ to $W$ as dummy facilities serving no demand. This is possible since
$q_J\le t_J\le\lceil y^*(U(J))\rceil\le |U(J)|$.
The local assignment, $o_J(q_J)$, and $\Psi_J(q_J)$
remain unchanged.
The size of $W$ may be more than $k$ for two reasons. First, some type-(1) component $J$ may have some facility $i \in U(J)$ with $\tilde y_i \in (0, 1)$ and we included it in $W$. Second, some type-(3) component may have $\tilde y_J = t_J - 1$ but we opened $t_J$ facilities in $U(J)$, including dummy facilities. To reduce the size of $W$, we run Algorithm~\ref{alg:reduce-W}.

The top-to-bottom order in Algorithm~\ref{alg:reduce-W} is chosen so that all black components of a group are processed before
any black component of its child groups.

\begin{algorithm}
	\caption{reduce the size of $W$ to at most $k$}
	\label{alg:reduce-W}
	\begin{algorithmic}[1]
		\For{every black component $J \in \calJ$ respecting the top-to-bottom order}
            \State let $G$ be the parent group of the group containing $J$, or $G = \{J\}$ if $J$ is a root group
			\If{$J$ is of type-(1)}
                \State let $a$ be the number of facilities $i \in W \cap U(J)$ with $\tilde y_i < 1$
                \State repeat $a$ times: remove$(G, J)$
             \ElsIf{$J$ is of type-(3) and $\tilde y_J = t_J - 1$}
                \State remove$(G, J)$
             \EndIf
		\EndFor
	\end{algorithmic}
\end{algorithm}


\begin{algorithm}
	\caption{remove$(G, J)$}
	\begin{algorithmic}[1]
        \State let $i^*$ be the facility in $W \cap U(G \cup \{J\})$ serving the least demand
		\State remove $i^*$ from $W$
		\State resend the demands sent to $i^*$ to $W \cap U(G)$, split proportionally to the amount of demands they already serve
	\end{algorithmic}
\end{algorithm}

We first argue that the algorithm is well-defined.  The initial $W$ we constructed has $|W \cap U(G')| \geq y^*(U(G')) - O(1)$ for every group $G'$.  This holds due to \eqref{LPC:auxiliary-y-(1)}, \eqref{LPC:auxiliary-y-(3,4)}, which implies $\sum_{J \in \calJ^{(3, 4)}\cap G'}(t_J - \tilde y_J) \leq 1$ for every group $G'$, and that we added dummy facilities for type-(2), type-(3) and type-(4) components.
If $G'$ is an isolated root group, the inequality follows directly from \eqref{LPC:auxiliary-yJ-nn}.
In particular, for every non-leaf or isolated root group $G \in \calG$, we have $|W \cap U(G)| \geq \Omega(\ell)$. By the first statement of Lemma~\ref{lem:vertex-solution}, for any non-leaf or isolated root $G \in \calG$, the number of times we call remove$(G, J)$ over all type-(1) components $J$ at most 3. 
For a non-leaf group, by \eqref{LPC:auxiliary-y-(3,4)}, the number of times we call remove$(G, J)$ over all type-(3) components $J$ at most 1. An isolated root group
contains only one black component, so the same bound
holds for it. Therefore, the facility $i^*$ can always be chosen. For every group $G$, at most four facilities in $U(G)$ are removed while processing components of $G$, and at most four more are removed while processing
its child groups. Hence every non-leaf or isolated root group $G$ retains $\Omega(\ell)$ open facilities
throughout the algorithm. 

The initial $|W|$ exceeds $k$ by at most the number of strictly fractional type-(1) facilities, plus the number of type-(3) components $J$ with $\tilde y_J = t_J - 1$. This holds due to the second statement of Lemma~\ref{lem:vertex-solution}.  As we removed this number of facilities, we have $|W| \leq k$ in the end.

\subsection{Analysis of Cost and Capacity Violation}
We analyze the cost and capacity violation of our solution. 

\paragraph{Cost of Initial Flow Before Removing Operations} We first analyze the cost of the flow sent from $v$ to the initial $W$, before we apply the remove operations. 

Focus on the cost incurred by type-(1) black components.  For a non-leaf or isolated root group  $G \in \calG$, the cost of sending the demands from $\union_{G' \in \children'(G), J \in \calJ^{(1)} \cap G'} J$ to $W \cap \union_{G' \in \children'(G)} U(\calJ^{(1)} \cap G')$ can be bounded as follows. We take an arbitrary representative $v$ at the root component of $G$, as the transfer point. First we send all the demands to $v$, and then we dispatch the demands from $v$ to $W \cap\union_{G' \in \children'(G)}  U(\calJ^{(1)} \cap G')$. The cost incurred in the two steps is at most
	\begin{align*}
		&\quad \sum_{G' \in \children'(G), J \in \calJ^{(1)} \cap G', i \in U(J)} \Big(y^*_iu'_i \cdot O(\ell^2) L_J  + \tilde y_iu'_i (d(i, J) + O(\ell^2) L_J )\Big).
	\end{align*}
    
Therefore, the cost of the initial flow incurred by type-(1) black components is
\begin{align*}
	M_1 := 
    \sum_{J\in \calJ^{(1)}, i\in U(J)}u_i'd(i, J)\tilde y_i+ O(\ell^2)\cdot\sum_{J \in \calJ^{(1)}\setminus \calJ_0, i \in U(J)} u_i' L_J(y_i^*+\tilde y_i).
\end{align*}

By $o_J(t)L_J\leq 2\Psi_J(t)$, for every $J\in \calJ^{(2, 3, 4)}$, then $\Psi_J(t)+O(\ell)o_J(t) L_J\leq O(\ell)\Psi_J(t)$.
And for $J\in \calJ^{(4)}$, since $\tilde y_J\in [t_J-1, t_J]$, we have $\Psi_J(\ceil{\tilde y_J})\leq \Psi_J(t_J)+(t_J-\tilde y_J)\Psi_J (t_J-1)$.
So the cost of initial flow incurred by type-(2), (3) and (4) black components is at most
\begin{align*}
M_{234} &:=\sum_{J \in \calJ^{(2, 3)}} \left(\Psi_J(t_J) + O(\ell)\cdot o_J(t_J) L_J\right) + \sum_{J \in \calJ^{(4)}} \left(\Psi_J(\ceil{\tilde y_J}) +O(\ell)\cdot o_J(\ceil{\tilde y_J})L_J \right)\\
&\leq O(\ell)\cdot \sum_{J \in \calJ^{(2, 3, 4)}} \Psi_J(t_J)  +O(\ell)\cdot \sum_{J \in \calJ^{(4)}}  (\Psi_J(t_J - 1)(t_J - \tilde y_J))
\end{align*}

The cost of the initial flow is at most $M_1 + M_{234}$.

\paragraph{Cost due to Removing Operations} 
Now we analyze the cost incurred by removing operations. During the analysis, we only bound the cost of the flows between representatives. To be more specific, suppose we resend the flow sent to some $i$ to another facility $i'$, where $i \in U_v$ and $i' \in U_{v'}$. Then we only consider the cost of sending the flow from $v$ to $v'$.  In our analysis, all the flow sent to $i$ are sent through $v$. That is, the cost of sending a unit flow from some representative $v''$ to $i$ is calculated as $d(v'', v) + d(v, i)$. This applies to the solutions for the penalty instances $S_J(t)$'s. Therefore, to resend the flow sent to $i$, we actually undo a portion of the flow sent from $v$ to $i$.  Second, due to the proportional distribution, we can charge the cost of sending the flow from $v'$ to $i'$ using the existing flow sent to $i'$.  We will call the cost of sending the flow from $v$ to $v'$ the $R$-to-$R$ cost, and the cost of sending the flow from $v'$ to $i'$ the $R$-to-$F$ cost. 

Notice that as we handle the black components $J$ in Algorithm~\ref{alg:reduce-W} in a top-to-bottom fashion, no facilities have been removed from $W \cap U(J)$ before we handle $J$. Then, we prove the following lemma:
\begin{lemma}
   The remove$(G, J)$ incurs a $R$-to-$R$ cost of at most $0$ if $J\in \calJ_0$,
   $O(\ell^2)\sum_{i\in U(J)} u'_iL_J \tilde y_i$ if $J\in \calJ^{(1)}\setminus \calJ_0$, and $O(\ell^3)\Psi_J(t_J - 1)$ if $J\in \calJ^{(3)}$. 
\end{lemma}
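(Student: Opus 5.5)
The $R$-to-$R$ cost of remove$(G,J)$ is the amount of demand carried by the removed facility $i^*$ times the distance from its representative to representatives of $U(G)$. I will bound each factor separately and treat the three cases in turn.

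\textbf{Distance factor.} Let $i^*\in U(J')$ with $J'\in G\cup\{J\}$. The flow goes from the representative of $i^*$ to a representative of $U(G)$. If $J\in\calJ_0$, then $G=\{J\}$ and $J=\{v\}$ is a single representative, so every facility of $U(G\cup\{J\})=U_v$ is routed through $v$. Hence the $R$-to-$R$ cost is $0$. Otherwise, $J$ lies in a group $G'\in\children'(G)$, and by Property~\ref{le:distant-G-G'} any two representatives in $J\cup\bigcup_{J''\in G}J''$ are within $O(\ell^2)L_J$ of each other. When $G=\{J\}$ is a root group, $J$ has diameter at most $|J|L_J\le O(\ell)L_J$ by Claim~\ref{claim:componentJ}. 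This is where $L_J$ enters, using that $L$-values are non-increasing toward the root (Property~\ref{lem:leaf-to-root-path}). So each unit of resent demand costs $O(\ell^2)L_J$.

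\textbf{Demand factor.} Since $i^*$ serves the least demand in $W\cap U(G\cup\{J\})$, its load is at most the total load on $W\cap U(G\cup\{J\})$ divided by $|W\cap U(G)|=\Omega(\ell)$. That size bound was established above and still holds during the removals. Facilities of $U(G)$ receive proportionally redistributed demand, so their loads only grow, but by at most constant factors.

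\textbf{Type-(1), $J\notin\calJ_0$.} Simply choosing the least-loaded facility is too lossy here, since the total load of $G$ is not charged to $J$. Instead I would use that remove$(G,J)$ is invoked only $a$ times, where $a$ is the number of facilities $i\in W\cap U(J)$ with $\tilde y_i<1$. These facilities carry load at most $u'_i\tilde y_i$ from the type-(1) dispatch, and no removal has touched $U(J)$ before $J$ is processed. Because $i^*$ minimizes load over a set containing these facilities, the load removed in the $a$ calls is at most $\sum_{i\in U(J),\tilde y_i<1}u'_i\tilde y_i$, up to the constant-factor growth from earlier proportional resends into $U(J)$. I will argue the growth is constant by noting that each facility absorbs at most $O(1)$ removals, each spread over $\Omega(\ell)$ facilities. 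Multiplying by the distance $O(\ell^2)L_J$ gives the bound $O(\ell^2)\sum_{i\in U(J)}u'_iL_J\tilde y_i$.

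\textbf{Type-(3).} Here $t_J=\lceil y^*(U(J))\rceil$ and $o_J(t_J-1)>x^*(U(J),C)/\ell$. Also $\Psi_J(t_J-1)\ge o_J(t_J-1)L_J/2$, so $x^*(U(J),C)\le O(\ell)\Psi_J(t_J-1)/L_J$. The facilities in $S_J(t_J)$ together serve at most $x^*(U(J),C)$ demand, and $|S_J(t_J)|$ is at most the $t_J$ facilities opened in $U(J)$. The least-loaded facility over $U(G\cup\{J\})$ carries at most the least load within $U(J)$, which is at most $x^*(U(J),C)$, so comparison inside $U(J)$ suffices. Multiplying by the distance $O(\ell^2)L_J$ gives $O(\ell^3)\Psi_J(t_J-1)$.

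The main obstacle is the type-(1) case: controlling the load on facilities of $U(J)$ after earlier resends from ancestors, and making rigorous that the least-loaded choice among $a$ calls is dominated by the loads of the $a$ fractional facilities.
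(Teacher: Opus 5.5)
Your proposal is correct and follows the paper's proof: the removed load is at most the least load inside $W\cap U(J)$, which is at most $\sum_{i\in U(J)}u'_i\tilde y_i$ for type-(1) and at most $x^*(U(J),C)\le 2\ell\,\Psi_J(t_J-1)/L_J$ for type-(3); the representative-to-representative distance is $O(\ell^2)L_J$ by Property~\ref{le:distant-G-G'} and the triangle inequality. The load growth you flag as the main obstacle does not occur: by the top-to-bottom order nothing is resent into $U(J)$ before $J$ is processed, and when $G=\{J\}$ the total load on $U(J)$ is conserved, so the paper's per-call bound by the total load of $U(J)$ already suffices without your matching argument.
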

\begin{proof}
	Consider the case where $J$ is a type-(1) component. If $J = \{v\}$ is a singleton root component, then $G = \{J\}$. Also, $i^* \in U_v$. So, the $R$-to-$R$ cost of resending flow from $i^*$ to the new $W \cap U_v$ is $0$. 

    Now consider $J\in\mathcal J^{(1)}\setminus\mathcal J_0$.
Initially, the total demand assigned to facilities in $U(J)$ is at most $\sum_{i\in U(J)}u'_i\tilde y_i$. Let $i^*$ is a facility with the minimum demand in $W\cap U(G\cup\{J\})$, we have
$\text{demand}(i^*)
\le\min_{i\in W\cap U(J)}\text{demand}(i)
\leq \sum_{i\in U(J)}u'_i\tilde y_i.$
By Lemma \ref{le:distant-G-G'} and the triangle inequality, the
representative-to-representative distance used in the reassignment is at most $O(\ell^2)L_J$. Therefore, the $R$-to-$R$ cost is at most
$O(\ell^2)\sum_{i\in U(J)}u'_iL_J\tilde y_i.$

    The case where $J$ is type-(3) component,  we have $o_J(t_J-1)\geq \frac{1}{\ell} x^*(U(J), C)$, implying $\Psi_J(t_J-1) \geq \frac {L_J}{2\ell} \cdot x^*(U(J), C)$. Combining this inequality with the $O(\ell^2)L_J$ per-unit reassignment bound gives an $R$-to-$R$ cost of at most $O(\ell^3) \Psi_J(t_J-1)$.
\end{proof}
The remove operations incur a $R$-to-$R$ cost of 
\begin{align*}
	&\quad O(\ell^2) \sum_{J \in \calJ^{(1)}\setminus \calJ_0, i\in U(J)} u'_i L_J\tilde y_i + O(\ell^3)\sum_{J \in \calJ^{(3)}:\tilde y_J = t_J - 1} \Psi_J(t_J - 1)\\
	&\leq O(\ell^2) \sum_{J \in \calJ^{(1)}\setminus \calJ_0, i\in U(J)} u'_i L_J\tilde y_i  +  O(\ell^3)\sum_{J \in \calJ^{(3)}} (t_J - \tilde y_J) \Psi_J(t_J - 1).
\end{align*}

So, the final cost without counting the $R$-to-$F$ cost of the removing operations is at most
\begin{align}
&M_1 + M_{234} +O(\ell^2) \sum_{J \in \calJ^{(1)}\setminus \calJ_0, i\in U(J)} u'_i L_J\tilde y_i  +  O(\ell^3)\sum_{J \in \calJ^{(3)}} (t_J - \tilde y_J) \Psi_J(t_J - 1)\nonumber\\
&\leq \sum_{J\in \calJ^{(1)}, i\in U(J)}u'_i d(i, J)\tilde y_i+O(\ell^2)\sum_{J\in \calJ^{(1)}\setminus \calJ_0, i\in U(J)}u'_iL_J (y^*_i+\tilde y_i)
+O(\ell)\sum_{J\in \calJ^{(2, 3, 4)}}\Psi_J(t_J)\nonumber\\
&\qquad +O(\ell^3)\sum_{J\in \calJ^{(3, 4)}}(t_J-\tilde y_J)\Psi_J (t_J-1) \nonumber\\
&\leq  \sum_{J\in \calJ^{(1)}, i\in U(J)}u'_i d(i, J) y^*_i+O(\ell^2)\sum_{J\in \calJ^{(1)}\setminus \calJ_0, i\in U(J)}u'_iL_J y^*_i
+O(\ell)\sum_{J\in \calJ^{(2, 3, 4)}}\Psi_J(t_J)\nonumber\\
&\qquad +O(\ell^3)\sum_{J\in \calJ^{(3, 4)}}(t_J-y^*(U(J)))\Psi_J (t_J-1)\nonumber \\
&\leq  O(1)D(F)+O(\ell^3)\sum_{J\in \calJ^{(1)}\setminus \calJ_0}\Psi_J(t_J)+O(\ell^3)\sum_{J\in \calJ^{(3, 4)}}(t_J-y^*(U(J)))\Psi_J (t_J-1)+O(\ell)\sum_{J\in \calJ^{(2, 3, 4)}}\Psi_J(t_J)\nonumber\\
& \leq O(1)D(F)+O(\ell^3)\sum_{J\in \calJ\setminus \calJ_0}\Psi_J(t_J)+O(\ell^3)\sum_{J\in \calJ^{(3, 4)}}(t_J-y^*(U(J)))\Psi_J (t_J-1)\nonumber\\
&\leq O(1)\sum_{j \in C} \phi^*_j + O(\ell^6) \sum_{J \in \calJ, v \in J, j \in C}g_{j, v}(\phi^*_j + d(v, j)) \leq O(\ell^6)\sum_{j \in C}\phi^*_j.\nonumber
\end{align}
The second inequality holds by Lemma \ref{lem:feasible- LPauxiliary} and the third inequality holds by Lemma \ref{lemma:x*iCdiv}.

To see the fourth-to-last inequality, we bound $\Psi_J(t_J)$ for $J \in \calJ\setminus \calJ_0$ and $(t_J - y^*(U(J)))\Psi_J(t_J - 1)$ for $J \in \calJ^{(3, 4)}$ respectively. By inequality~\eqref{equ:all-constraint-J-t}, we have
\begin{align*}
\frac{t + 1-y^*(U(J))}{t +  1} \Psi_J(t) \leq (2|J|-1)\sum_{v \in J,j \in C}g_{j,v}(\phi^*_j + d(v, j))
\end{align*}
 Recall that $t_J = \floor{(1+\frac1{c\ell^2})y^*(U(J))}$. We have $\frac{y^*(U(J))}{t_J +  1} \leq 1/(1+\frac1{c\ell^2})$, implying that $\frac{t_J + 1-y^*(U(J))}{t_J +  1} \geq \Omega(1/\ell^2)$. Therefore, we have $\Psi_J(t_J) \leq O(\ell^3)\sum_{v \in J,j \in C}g_{j,v}(\phi^*_j + d(v, j))$.
 
 Consider the case $J \in \calJ^{(3, 4)}$, where we have $t_J > y^*(U(J))$.  
 \begin{align*}
	 &\quad (t_J - y^*(U(J))) \Psi_J(t_J - 1) = t_J \cdot \frac{t_J - y^*(U(J))}{t_J}\Psi_J(t_J - 1) \leq t_J  (2|J|-1)\sum_{v \in J,j \in C}g_{j,v}(\phi^*_j + d(v, j))\\
	 & \leq O(\ell^2)\sum_{v \in J,j \in C}g_{j,v}(\phi^*_j + d(v, j)),
 \end{align*}
where we used that $t_J \leq O(\ell)$.  \smallskip

Then we argue about the capacity violation of the solution, along with the $R$-to-$F$ cost of the removing operations.  In the initial flow, every facility $i \in W$ receives at most $(1+O(1/\ell))u_i$ units of flow.  As we argued, $|W \cap U(G)| \geq \Omega(\ell)$ always holds. When we send some flow to $W \cap U(G)$ for some non-leaf group or isolated root group $G \in \calG$, we guarantee that the amount of flow sent is at most the least flow received by any facility in $W \cap U(G)$. Also we apply the operation for any $G$ at most $O(1)$ times. Therefore, the operations will increase the flow received by any $i$ by a factor of at most $1 + O(1/\ell)$. This also implies that the $R$-to-$F$ cost incurred by the removing operations is negligible.  

Notice that the initial moving step in Section~\ref{Initial-Moving-of-Demands}, which moves demands from $J$ to $R$ has cost $O(1)\cdot \sum_{j \in C}\phi^*_j$. So, the overall moving cost of our solution is at most $O(\ell^6)\cdot \sum_{j \in C}\phi^*_j$. This finishes the proof of Theorem~\ref{thm:main-cap}.
     \bibliographystyle{plain}
    \bibliography{refs}
    \appendix
     \section{Omitted proofs}\label{sec:appendix}  

        \lemmarepresentatives*
         \begin{proof}
            For Property\ref{represen-distance}, let $v, v'\in R$ be distinct, and assume
            $d_{\texttt av}(v)\leq d_{\texttt av} (v')$.  When $v$ is added to $R$, the algorithm removes every client  $j$ with $d(v, j)\leq 4 d_{\texttt av}(j)$. Since $v'\in R$, it was not removed in that iteration. Hence $d(v, v')> 4 d_{\texttt av}(v')$.  For Property \ref{j-removed}, fix any client $j\in C$, and consider the iteration in which $j$ is removed from $C$.   Then consider Property \ref{opening-of-U_v}. By  \ref{represen-distance}, we have $B := \{ i \in F : d(i, v) \leq 2d_{\texttt av}(v) \} \subseteq U_v$. Since $d_{av}(v) = \sum_{i \in F} x^*_{i, v} d(i, v)$ and $\sum_{i \in F} x^*_{i, v} = 1$, we have $d_{\texttt av}(v) \geq (1 - x^*({B, v})) 2d_{\texttt av}(v)$, implying $y^*({U_v}) \geq y^*(B) \geq x^*({B, v}) \geq 1/2$, due to Constraint \eqref{LPC:x<y}. The case $d_{\mathrm{av}}(v)=0$ is trivial.

            Then we consider Property \ref{le:move-distance}. By \ref{j-removed}, there is a client $v' \in R$ such that $d_{\texttt av}(v') \leq d_{\texttt av}(j)$ and $d(v', j) \leq 4d_{\texttt av}(j)$. Since $d(i, v) \leq d(i, v')$ as $v' \in R$ and $i$ was added to $U_v$, we have $d(i, v) \leq d(i, v') \leq d(i, j) + d(j, v') \leq d(i, j) + 4d_{\texttt av}(j)$.
        \end{proof}

     \claimcomponentJ*
     \begin{proof}
    Fix a black component $J\subsetneq R$, and let $e'=(u, w)$ be the first edge processed by Kruskal’s algorithm with $u\in J$ and $w\notin J$. So $d(u, w)=L_J$. Since no previously added edge crosses the cut $(J, R\setminus J)$, the endpoints of $e'$ belong to distinct connected components, so $e'$ is added to $E_{\text{MST}}$.

Since $J$ is a final black component, $e'$ cannot be black. Thus, the merged component is heavy and remains heavy thereafter. Consequently, no subsequent black edge can enlarge the black component containing $u$, so this black component must already be $J$ before $e'$ is added. Therefore, every black edge $e=(v, v')$ in $J$ is added before $e$, implying $d(v, v')\le d(u, w)=L_J.$
For the case $J=R$,  the claim follows trivially.
\end{proof}      

    \lemmagrayforest*
\begin{proof}
    For Property  \ref{le:weight-J}, we recall that $y^*(F)=k\ge \ell$. Once a connected component maintained by Kruskal’s algorithm becomes heavy, it remains heavy. It cannot participate in any subsequent black-edge merge. Consequently, all black edges of a black component are added before any MST edge leaving it. Consider a light black component $J$. Since $y^*(F)\ge\ell$, we have $J\ne R$. Let $e$ be the first edge leaving $J$ that is added to the MST. Immediately before $e$ is added, $J$ is a connected component maintained by Kruskal’s algorithm. The edge $e$ is not white, since $J$ is light. It is not black, since $J$ is a maximal black component. Thus, $e$ is gray and directed away from $J$. After this merge, the connected component containing $J$ remains heavy. Hence, no subsequent gray edge is directed away from $J$. Therefore, every light black component has outdegree one in $\Upsilon$. Every heavy black component has outdegree zero, since it is already heavy before any edge leaving it is added. Now fix a gray tree $\tau\in\Upsilon$. Its underlying undirected graph is a tree, so
$\sum_{J\in V(\tau)}\deg_\tau^+(J) = |E(\tau)| = |V(\tau)|-1.$
Since every vertex has outdegree zero or one, exactly one vertex has outdegree zero. This vertex is a heavy black component. All other vertices are light black components, and all gray edges are directed towards this unique root. 

For Property \ref{length-gray}, let $e=(v, v')$ be the gray edge from $J\ni v$ to its parent $J'\ni v'$. Immediately before $e$ is added, $J $ is a connected component maintained by Kruskal’s algorithm. Hence $e$ is the first processed edge crossing $(J, R\setminus J)$, and therefore
$d(v, v')=d(J, R\setminus J)=L_J.$
Since $v\in J$ and $v'\in J'\subseteq R\setminus J$, we obtain $L_J\le d(J, J')\le d(v, v')=L_J,$ proving the desired equality.

For Property \ref{lem:leaf-to-root-original-gray}, every conponent $J$ and its parent $J'$ satisfies
$L_{J'}=d(J',R\setminus J')\le d(J',J)=L_J,$
where the last equality follows from \ref{length-gray}.
\end{proof}

 \lemmagroups*
 \begin{proof}
               Fix a group $G\in\mathcal G$. Let $r$ be the black
component selected when $G$ is constructed, and let
$r_\tau$ be the root of the gray tree $\tau$ containing $G$.
Set $P:=G\cup\{r\}$. The subgraph of $\tau$ induced by
$P$ is a tree. Let $\widetilde G$ be obtained by expanding
its nodes into their black components and retaining
the corresponding gray edges.
First consider the case $r\neq r_\tau$.
Since $r$ is not the root of $\tau$, by Lemma \ref{le:weight-J} its weight is less than $\ell$.
 By construction, the total weight of $P\setminus\{r\}$ is at most $2\ell$.
Thus the total weight of $P$ is at most $3\ell$.
By Lemma \ref{opening-of-U_v}, each original representative $v$ satisfies $y^*(U_v)\ge 1/2$.
Therefore the number of nodes in $\tilde{G}$ is at most $\frac{3\ell}{1/2}=6\ell$.

Now consider $r=r_\tau$. If $r$ has weight at most
$2\ell$, then $\widetilde G$ contains at most
$(2\ell+2\ell)/(1/2)=8\ell$ representatives.
Otherwise, $r$ is a singleton black component and
contributes only one representative. The remaining
components contribute at most $2\ell/(1/2)=4\ell$
representatives. Thus, $\widetilde G$ contains at most
$8\ell$ representatives in either case.

Fix $J\in G$, $v\in J$, and
$v'\in\text{root}_G$.
Their path in the original MST is contained in
$\widetilde G$.
If $r\in G$, then $\text{root}_G=r$.
Lemma~\ref{lem:leaf-to-root-original-gray} implies that every black component on
the path has $L$-value at most $L_J$.
Otherwise, let $K$ be the selected child of $r$ whose
subtree contains $J$. By the choice of
$\text{root}_G$, $L_{\text{root}_G}\le L_K\le L_J.$
Together with Lemma~\ref{lem:leaf-to-root-original-gray}, this gives the same bound
for every black component on the path.
By Claim~\ref{claim:componentJ} and Lemma~\ref{length-gray}, every edge on the path
has length at most $L_J$. Therefore, $d(v, v')\leq (8\ell-1)L_J=O(\ell)L_J.$
            \end{proof}
            
\lemmablackcomponent*

 \begin{proof}
            

        For Property \ref{lem:leaf-to-root-path}, for any child-parent edge $(J, J')$ in $\Upsilon^*$, we consider two cases: if $J'$ is the parent of $J$ in the original gray tree. By Lemma \ref{lem:leaf-to-root-original-gray}, the leaf-to-root monotonicity of $L$-values for the original gray tree, we have $L_{J'} \leq L_J$. If $J$ and $J'$ are adjacent siblings in the original tree, with $J'$ ordered before $J$. Due to the sorting rule, $L_{J'} \leq L_J$. Hence the $L$-values are non-increasing along every leaf-to-root path.


        For Property \ref{le:|J|=1}, suppose a black component $J$ contains multiple representatives, that is $|J| > 1$. Then it must be formed by merging at least two black components through the last black edge.
      Let the two child components before this final merge be $J_v$ and $J_{v'}$, satisfying $y^*(U(J_v)) < \ell$ and $y^*(U(J_{v'})) < \ell$.
      After merging, we have $J = J_v \cup J_{v'}$, and the total weight satisfies
      $y^*(U(J)) = y^*(U(J_v)) + y^*(U(J_{v'})) \leq 2\ell$.

        For Property \ref{lem:the-distance-of-J-and-its-parent}, 
        let $J''$ be the parent of $J$ in the original
   gray forest $\Upsilon$. By the construction of $\Upsilon^*$, either $J'=J''$, or $J'$ is the sibling of $J$ immediately preceding it in the non-decreasing order of $L$-values. Hence $L_{J''}\le L_J$ and $L_{J'}\le L_J$. For every $J\in \calJ$, $|J|\leq O(\ell)$, and by Claim \ref{claim:componentJ}, we have $\text{diam}(J)\leq O(\ell)L_J$.  If $J'=J''$, then the original gray edge between $J$ and $J'$ has length $L_J$. Thus, for every $v\in J$ and $v'\in J'$, $d(v,v') \leq \text{diam}(J)+L_J+\text{diam}(J') = O(\ell)L_J$. 
   Otherwise, both $J$ and $J'$ are children of $J''$ in $\Upsilon$. The gray edges from $J$ and $J'$ to $J''$ have lengths $L_J$ and $L_{J'}$, respectively. Hence, $d(v,v') \leq \text{diam}(J) +L_J +\text{diam}(J'') + L_{J'} +\text{diam}(J') = O(\ell)L_J$. Therefore, $\max_{v\in J, v'\in J'}d(v, v')\leq O(\ell)L_J$.
        \end{proof}

     \lemmaGproperties*
           \begin{proof}
            For Property \ref{lem-G:not a root group}, any non-root component $J$ has $y^*(U(J)) < \ell$. Thus, if $G$ is not a root group, the terminating condition constructing $G$ implies that $G$ had total weight less than $\ell$ right before the last black component was added to it. Then we have $\sum_{J\in G} y^*(U(J)) < 2\ell$.

              For Property \ref{lemG: a non-leaf group}, when we construct the group $G$ from the tree $\tau^*$ by Algorithm \ref{alg:construct-groups}, the terminating condition is that $G$ contain all descendants of $J$ in $\tau^*$ or $\sum_{J\in G} y^*(U(J)) \geq \ell$. Thus, if $G$ is not a leaf-group, then the condition that $G$ contains all descendants of $J$ in $\tau^*$ does not hold. So we have $\sum_{J\in G} y^*(U(J)) \geq \ell$.

           For Property \ref{lem-G:isolated-root},  
             let $J$ be the root black component of the transformed gray tree that contains $G$. By Lemma \ref{le:weight-J}, $J$ is heavy. Hence, $y^*(U(J))\ge \ell.$
And by  Algorithm \ref{alg:construct-groups},  the
root group remains $G=\{J\}$. 

        For Property \ref{lem-G:the-amount-children-of-G}, since the forest $\Upsilon^*$ is binary and every group $G\in \mathcal G$ contains at most $O(\ell)$ components, we have that every group $G$ contains at most $O(\ell)$ children.  

         Finally for Property \ref{le:distant-G-G'},  in the transformed gray tree $\Upsilon^*$, the $L$-values are non-increasing along every leaf-to-root path. By Algorithm \ref{alg:construct-groups}, we select the child component with the smallest $L$-value add $G$ each time.  Let $G'$ be a child group of G and let $J_0'$ be the root component of $G'$. Then $L_J \leq L_{J_0'}$ for every $J\in G$. On the other hand, every component $J'\in G'$ is a descendant of $J_0'$. By the monotonicity of L-values along root-to-leaf paths, $L_{J_0'}\le L_{J'}$ for all $J'\in G'$. Therefore, $\max_{J\in G} L_J \leq L_{J_0'} \leq \min_{J'\in G'} L_{J'}$.
         The property then holds by Claim \ref{claim:componentJ}, Lemma  \ref{lem:leaf-to-root-path}, \ref{lem:the-distance-of-J-and-its-parent} and that each group has at most $O(\ell)$ components. If $G'=G$, then $G=\{J'\}$ is a root group.
   Claim \ref{claim:componentJ} gives $\text{diam}(J')\le (|J'|-1)L_{J'}=O(\ell)L_{J'}$, which also implies the stated bound.
        \end{proof}

\end{document}